\documentclass[12pt,a4paper]{article}
\usepackage[truedimen,margin=30mm]{geometry} 
\usepackage{mathrsfs}
\usepackage{amssymb}
\usepackage{amsmath}
\usepackage{ascmac}
\usepackage{amsthm}
\usepackage[dvips]{graphicx}
\usepackage{natbib}
\usepackage{setspace}
\usepackage{times}
\usepackage{multirow}
\usepackage{url}
\usepackage{booktabs}
\usepackage{color}
\usepackage{comment}
\usepackage{graphbox}
\usepackage[hidelinks]{hyperref}
\usepackage{xurl}

\usepackage{titlesec}
\titleformat*{\section}{\large\bfseries}
\titleformat*{\subsection}{\it}

\newtheorem{thm}{Theorem}
\newtheorem{lem}{Lemma}
\newtheorem{cor}{Corollary}

\newtheorem{algo}{Algorithm}
\newtheorem{as}{Assumption}

\def\pt{{\widetilde{p}}}
\def\mid{\kern1.5pt|\kern1.5pt}

\title{{\bf Contrastive Bayesian Inference \\for Unnormalized Models}\footnote{\today}}

\date{}

\begin{document}

\maketitle
\doublespacing

\vspace{-1.7cm}
\begin{center}
{\large Naruki Sonobe$^{1, 5}$, Shonosuke Sugasawa$^{2, 5}$, Daichi Mochihashi$^3$ and Takeru Matsuda$^{4,5}$}

\medskip

\medskip
\noindent
$^1$Graduate School of Economics, Keio University\\
$^2$Faculty of Economics, Keio University\\
$^3$The Institute of Statistical Mathematics\\
$^4$Graduate School of Information Science and Technology, The University of Tokyo\\
$^5$RIKEN Center for Brain Science\\
\end{center}

\vspace{0.3cm}
\begin{center}
{\bf \large Abstract}
\end{center} 
Unnormalized (or energy-based) models provide a flexible framework for capturing the characteristics of data with complex dependency structures. However, the application of standard Bayesian inference methods has been severely limited because the parameter-dependent normalizing constant is either analytically intractable or computationally prohibitive to evaluate. 
A promising approach is score-based generalized Bayesian inference, which avoids evaluating the normalizing constant by replacing the likelihood with a scoring rule. However, this approach still requires careful tuning of the likelihood information, and it may fail to yield valid inference without appropriate control.
To overcome this difficulty, we propose a fully Bayesian framework for inference on unnormalized models that does not require such tuning. 
We build on noise-contrastive estimation, which recasts inference as a binary classification problem between observed and noise samples, and treat the normalizing constant as an additional unknown parameter within the resulting likelihood. For exponential families, the classification likelihood becomes conditionally Gaussian via Pólya–Gamma data augmentation, leading to a simple Gibbs sampler. 
We further establish posterior concentration and a Bernstein--von Mises theorem for our proposed method, providing theoretical justification for its uncertainty quantification.
We demonstrate the proposed approach through two models: time-varying density models of temporal point processes and sparse torus graph models of multivariate circular data. 
Through simulation studies and real-data analyses, our proposed method provides accurate point estimates and principled uncertainty quantification.

\vspace{-0cm}

\bigskip\noindent
{\bf Key words}: Bayesian computation; Intractable normalizing constants; Generalized Bayesian inference; Shrinkage priors

\section{Introduction}
\label{sec:introduction}

Bayesian inference for unnormalized models, \textit{i.e.}, statistical models whose likelihoods contain intractable normalizing constants, remains a persistent challenge. The core obstacle is evaluating the parameter-dependent normalizing constant, which usually requires integrating over the whole sample space and thus is often analytically intractable. 
Unnormalized models arise in a range of applications: for instance, Ising models \citep{besag1974spatial}, exponential random graph models \citep{robins2007network, hunter2012network}, non-Gaussian graphical models \citep{lin2016estimation,inouye2017WIREs}, nonlinear independent component analysis \citep{hyvarinen2024identifiability,matsuda2019estimation}, torus graph models \citep{klein2020torus}, and time-varying density estimation based on logistic Gaussian processes \citep{lenk1988logistic, tokdar2007posterior}. Despite this wide range of applications, the intractable likelihood implies that both the likelihood and the resulting posterior distribution are only known up to parameter-dependent normalizing constants that cannot be evaluated, thereby rendering standard Bayesian inference methods inapplicable.

A wide range of computational methods has been proposed to perform Bayesian inference for unnormalized models with intractable normalizing constants. \cite{park2018intractable} provided a comprehensive review of Markov chain Monte Carlo (MCMC) methods for such models. Representative exact approaches include auxiliary-variable MCMC
methods such as the exchange algorithm
\citep{moller2006markov,murray2006mcmc}, and pseudo-marginal
methods based on unbiased likelihood estimators
\citep{andrieu2009pseudo,lyne2015intractable}.
Although theoretically exact under their respective conditions, these
methods can be computationally demanding because they require repeated
model simulation or Monte Carlo estimation within MCMC iterations
\citep{park2018intractable}.
To alleviate this computational burden, several asymptotically inexact or approximate MCMC algorithms have also been developed. These methods, such as the function emulation approach proposed by \cite{park2020function}, can offer substantial gains in computational efficiency by introducing controlled approximations to the likelihood or normalizing constant. While these approaches are often much faster, they come with a critical trade-off: the stationary distribution of the resulting Markov chain typically differs from the exact target posterior, and there is generally no strong theoretical guarantee that the generated samples will converge to the true posterior distribution asymptotically.
In parallel to these MCMC-based approaches, likelihood-free inference methods exploit situations in which it is much easier to generate synthetic data from the model than to evaluate its (unnormalized) likelihood. A canonical example is approximate Bayesian computation (ABC), which approximates the posterior by comparing summary statistics computed from the observed data with those from simulated data \citep{marin2012ABC}. Further connections between the proposed framework and likelihood-free inference methods are discussed in Section~\ref{sec:discussion}.

Additionally, there are recent developments in generalized Bayesian approaches that update prior beliefs using task-specific losses or proper scoring rules in place of a standard likelihood \citep{bissiri2016general}. 
An appealing feature of this loss-based formulation is that certain choices avoid evaluating normalizing constants, making the framework suitable for unnormalized models.
Illustrative examples include methods based on the Stein discrepancy \citep{matsubara2022robust}, Hyv\"arinen score (also known as the Fisher divergence) \citep{jewson2022general, altamirano2023score, altamirano2024robust, matsubara2024generalized} and the log-ratio matching divergence \citep{laplante2025general}, which support Bayesian-style estimation without likelihood normalization. 
Relatedly, \cite{pacchiardi2024generalized} developed a likelihood-free framework based on scoring rule posteriors, connecting scoring rule-based inference with simulator-based models and demonstrating how several existing likelihood-free methods can be recovered as special cases.
These approaches are often computationally efficient and come with solid theoretical underpinnings; meanwhile, they introduce a learning-rate or tuning hyperparameter that meaningfully affects the resulting generalized posterior and, thus, inference and uncertainty quantification, motivating principled, data-driven calibration \citep{bissiri2016general,syring2019general,lyddon2019general, wu2023Generalized}.

In this paper, we develop a fully Bayesian formulation of noise-contrastive estimation (NCE) for inference in unnormalized models, sidestepping the direct computation of the normalizing constant. Our approach leverages the principles of NCE \citep{gutmann2010noise,gutmann2012noise}, which reframes the likelihood construction as a binary classification problem between observed data and artificially generated noise. 
While NCE has demonstrated considerable success across a range of applications, most existing methods estimate parameters by directly maximizing the NCE objective. A few works introduce Bayesian or variational aspects, such as variational NCE \citep{rhodes2019variational} and fully variational NCE \citep{zach2023fully}, but these still represent parameter uncertainty through variational approximations rather than a fully Bayesian posterior over model parameters. 
To illustrate the practical benefits of our fully Bayesian NCE framework for unnormalized models, we highlight two key advantages over existing approaches. (1) Full Bayesian inference provides principled uncertainty quantification over all model parameters and enables the natural incorporation of latent variables, allowing for flexible and interpretable probabilistic modeling. (2) While score-based generalized Bayesian inference has shown promise for unnormalized models, its reliance on score matching makes information control difficult, rendering it ill-suited for data with hierarchical structures or shrinkage priors; our framework overcomes this limitation by operating directly within a proper Bayesian posterior framework. We demonstrate these advantages in two challenging settings: time-varying density estimation, where uncertainty over latent temporal dynamics is critical, and edge selection for sparse graphical models of multivariate angular data, where structured sparsity must be carefully encoded through the prior.

This paper is organized as follows. Section \ref{sec:methodology} details our proposed methodology, which combines NCE with P\'olya--Gamma data augmentation for efficient Bayesian inference, and proposes an appropriate selection of noise distributions. Section \ref{sec:theoretical-assessment} establishes the asymptotic properties of NC-Bayes: posterior concentration and Bernstein--von Mises results. In Section \ref{sec:nc-bayes-action1}, we apply this framework to the problem of time-varying density estimation. Section \ref{sec:nc-bayes-action2} demonstrates the applicability of the framework through a second application: learning sparse graph structure for multivariate circular data. Finally, Section \ref{sec:discussion} concludes the paper with a summary of our findings and a discussion of future work.

\section{Methodology}
\label{sec:methodology}
\subsection{Unnormalized models and classification likelihood}\label{sec:unnormalized-classification}

Consider the problem of estimating an unknown parameter (vector) $\theta\in\Theta\subset\mathbb{R}^p$ in a parametric model $p(x\mid\theta)$ from observations $x_1,\ldots,x_n$. We focus on models whose likelihood involves an intractable normalizing constant. Such a model can be expressed as
$$
p(x\mid\theta)=\frac{\tilde{p}(x\mid\theta)}{Z(\theta)}, \quad
Z(\theta)=\int \tilde{p}(x\mid\theta) dx,
$$ 
where the parameter $\theta$ controls the shape or features of the model, $\tilde{p}(x\mid\theta)$ is an unnormalized probabilistic model, and $Z(\theta)$ is the normalizing constant, or partition function, that converts $\tilde{p}(x\mid\theta)$ into a proper probability distribution. In many applications, $x$ is high-dimensional and the model induces strong dependence or interactions among components, thus $Z(\theta)$ does not factorize and has no closed-form expression. Moreover, accurate numerical evaluation is prohibitively expensive. As a result, the likelihood $p(x\mid\theta)$ cannot be computed pointwise for arbitrary $\theta$, and standard Bayesian procedures that require repeated evaluation of the likelihood, or ratios of normalizing constants across parameter values, cannot be directly applied.

In the framework of NCE \citep{gutmann2010noise, gutmann2012noise}, density estimation under such a model is reformulated as a binary classification that distinguishes between observed data and artificially generated noise $x_{n+1},\ldots,x_{n+m}$ sampled from a known distribution $q(x)$, which we refer to as the noise distribution. Given $x$, the probability that $x$ is recognized as a genuine observation is given by 
$$
r(x\mid\theta, Z(\theta))=\frac{n \tilde{p}(x\mid\theta)/Z(\theta)}{n \tilde{p}(x\mid\theta)/Z(\theta)+mq(x)}
=\frac{n \tilde{p}(x\mid\theta)}{n \tilde{p}(x\mid\theta)+mZ(\theta)q(x)}\,.
$$
Then, the likelihood for the binary classification between the genuine and artificial observations is obtained as 
\begin{equation}\label{eq:bin-likelihood}
\begin{split}
L(\theta, Z(\theta)\mid X_n, X_m^\ast) 
&\equiv
\prod_{i=1}^n r(x_i\mid\theta, Z(\theta))\prod_{i=n+1}^{n+m} \big\{1-r(x_i\mid\theta, Z(\theta))\big\}\\
&=\prod_{i=1}^{n+m}\frac{\{n \tilde{p}(x_i\mid\theta)\}^{s_i}\{mZ(\theta)q(x_i)\}^{1-s_i}  }{n \tilde{p}(x_i\mid\theta)+mZ(\theta)q(x_i)},
\end{split}
\end{equation}
where $X_n=\{x_1,\ldots,x_n\}$ is a set of genuine observations, $X_m^\ast=\{x_{n+1},\ldots,x_{n+m}\}$ is a set of noise observations, and $s_i=\mathbb{I}(i\leq n)$ is an indicator of a genuine observation. 
This likelihood corresponds to that of a logistic regression classifier, in which the parameter estimates are obtained by maximizing the ability to discriminate between genuine and artificial observations.

To enable Bayesian inference, we place priors on both the parameter $\theta$ and the normalizing constant $Z(\theta)$. We treat $Z(\theta)$ as a parameter separate from $\theta$, and hereafter denote it by $Z$. This specification of prior distributions and the classification likelihood (\ref{eq:bin-likelihood}) gives the posterior distribution of $(\theta^\top, Z)^\top$ as 
\begin{equation}\label{eq:pos}
\pi(\theta, Z\mid X_n, X_m^\ast)\propto \pi(\theta, Z)L(\theta, Z\mid X_n, X_m^\ast), 
\end{equation}
where $\pi(\theta, Z)$ is a prior distribution for $(\theta^\top, Z)^\top$. 
Note that the posterior distribution above is free from the intractable normalizing constant, owing to the classification likelihood (\ref{eq:bin-likelihood}) and does not include a tuning constant, unlike generalized Bayesian approaches, as further discussed in Section \ref{sec:theoretical-assessment}. We refer to this procedure as noise-contrastive Bayes ({\bf NC-Bayes}).
Hence, the posterior computation of (\ref{eq:pos}) can be performed via general algorithms such as Hamiltonian Monte Carlo methods. 
On the other hand, when the observation model $p(x\mid\theta)$ belongs to an exponential family, efficient sampling algorithms can be developed by employing P\'olya--Gamma data augmentation \citep{polson2013bayesian}, as described in the subsequent section. 

\subsection{Gibbs sampler for exponential families}\label{sec:Gibbs}
Here, we describe the details of posterior computation for the binary likelihood induced by NCE for an exponential-family observation model,
$
\tilde{p}(x\mid\theta)=h(x)\exp\{\eta(x)^\top\theta\},
$
where $h(x)$ is a known base density or base measure and $\eta(x)$ is a vector of sufficient statistics, with $\theta$ being the corresponding natural parameter.
Note that the exponential-family form is highly expressive and covers a broad range of widely used models, including generalized linear models such as linear regression, logistic regression and Poisson regression, as well as Ising models \citep{besag1974spatial}, exponential random graph models \citep{robins2007network, hunter2012network} and truncated Gaussian graphical models \citep{lin2016estimation}. Importantly, the model considered in our experiments also falls within this framework. Therefore, the proposed posterior computation applies generically to a wide range of regression and graphical models whenever the (unnormalized) likelihood admits an exponential-family representation.

For the exponential-family model, the binary likelihood is 
\begin{align}
&\prod_{i=1}^{n+m}\frac{\exp(\log n + \log h(x_i) + \eta(x_i)^\top \theta)^{s_i}\{mZq(x_i)\}^{1-s_i}  }{\exp(\log n + \log h(x_i) + \eta(x_i)^\top \theta)+mZq(x_i)}
=\prod_{i=1}^{n+m} \frac{(e^{\psi_i})^{s_i}}{1+e^{\psi_i}}\,,
\label{eqn:binary-likelihood}
\end{align}
where 
$\psi_i=\log n + \log h(x_i) +\eta(x_i)^\top\theta -\log m -\log Z -\log q(x_i)$.
Let us introduce the reparametrization $\beta=-\log Z$, $z(x_i)=(\eta(x_i)^\top, 1)^\top$, $\gamma=(\theta^\top, \beta)^\top$ and $C(x_i)=\log n -\log m + \log h(x_i) -\log q(x_i)$ so that $\psi_i=z(x_i)^\top\gamma + C(x_i)$.
Using the integral expression in \cite{polson2013bayesian}, the likelihood \eqref{eqn:binary-likelihood}
can be augmented as 
\begin{align*}
\prod_{i=1}^{n+m} \frac{\exp(z(x_i)^\top\gamma + C(x_i))^{s_i}}{1+\exp(z(x_i)^\top\gamma + C(x_i))}
&=\prod_{i=1}^{n+m}\exp\left\{\Big(s_i-\frac{1}{2}\Big) (z(x_i)^\top\gamma + C(x_i)) \right\}
\\
&\times \frac12\int_0^{\infty} \exp\left\{-\frac12\omega_i(z(x_i)^\top\gamma + C(x_i))^2\right\}p_{\rm PG}(\omega_i)d\omega_i\,,
\end{align*}
where $p_{\rm PG}(\omega_i)$ is the density of a P\'olya--Gamma distribution, ${\rm PG}(1,0)$. 
Since the likelihood is now written as a scale mixture of Gaussians, if we assume a Gaussian prior for $\gamma$, the full conditional posterior of $\gamma$ also becomes Gaussian and the full conditional posterior of $\omega_i$ is a P\'olya--Gamma distribution. Therefore, posterior samples can be obtained via Gibbs sampling. Suppose that the joint prior for $\gamma$ is $N(A_0, B_0)$. 
Then, the posterior samples can be generated by Gibbs sampling as follows: 

\begin{algo}\label{algo:gibbs}
Starting with the initial value of $\gamma$, iteratively generate samples of $\gamma$ and the auxiliary variables $\omega_i$ as follows: 
\begin{enumerate}
\item
For $i=1,\ldots,n+m$, generate $\omega_i$ from ${\rm PG}(1, z(x_i)^\top\gamma + C(x_i))$.

\item
Generate $\gamma$ from $N(A_1, B_1)$, where 
\begin{equation*}
\left\{\kern-1.5ex
\begin{array}{ll}
&\displaystyle B_1=\left(B_0^{-1} + \sum_{i=1}^{n+m}\omega_i z(x_i)z(x_i)^\top\right)^{-1}, \\
&\displaystyle A_1=B_1\left\{\sum_{i=1}^{n+m}\Big(s_i-\frac{1}{2}-\omega_iC(x_i)\Big) z(x_i) + B_0^{-1}A_0\right\}.  
\end{array}
\right.
\end{equation*}
\end{enumerate}
\end{algo}

\subsection{Marginalization and adaptive updating of the noise distribution}\label{marginalization-updating}
NC-Bayes depends on the choice of the noise distribution $q(x)$ in terms of statistical efficiency and the stability of estimation. A common practical strategy is to choose $q(x)$ to be close to the data-generating distribution so that the resulting classification task is well balanced and yields more informative comparisons. Recent analyses also indicate that the theoretically optimal noise need not coincide with the data distribution \citep{chehab2022noise}, but we do not pursue this direction further here. 
We begin by choosing $q(x)$ to be a fixed noise distribution, such as a uniform distribution, since the asymptotic theory in Section~\ref{sec:theoretical-assessment} does not require a restrictive specification of the noise distribution.

The standard algorithm for NC-Bayes uses only a single set of noise data, that is, the noise sample is generated in advance and fixed in the analysis. 
In the Bayesian framework, it is also possible to integrate the noise data out with respect to the generative distribution $q(\cdot)$, which makes the posterior inference less sensitive to the noise data. 
Note that $X_n$ and $X_m^\ast$ are vectors of observed data and noise data, respectively.   
Algorithm~\ref{algo:gibbs} targets the posterior distribution given the noise data:
$$
\pi(\gamma\mid X_n,X_m^\ast) 
\propto \pi(\gamma) \prod_{i=1}^{n} \frac{\exp(z(x_i)^\top\gamma + C(x_i))}{1+\exp(z(x_i)^\top\gamma + C(x_i))}
\prod_{i=n+1}^{n+m}\frac{1}{1+\exp(z(x_i)^\top\gamma + C(x_i))}.
$$
On the other hand, to make posterior inference less sensitive to the choice of $X_m^\ast$, we can target the posterior of the form: 
\begin{equation}\label{eq:pos-noise}
\pi(\gamma\mid X_n)=\int \pi(\gamma\mid X_n,X_m^\ast) \prod_{i=n+1}^{n+m}q(x_i)dx_i.
\end{equation}
By generating fresh noise samples at each iteration, this formulation reduces the dependence of posterior inference on any particular realization of noise data.
To generate the integrated posterior (\ref{eq:pos-noise}), we use the following sampling algorithm: 

\begin{algo}\label{algo:gibbs2}
Starting with the initial value of $\gamma$, iteratively generate random samples of $\gamma$ and $\omega_i$ (auxiliary latent variable) as follows: 
\begin{enumerate}
\item
For $i=n+1,\ldots,n+m$, generate noise data $x_i$ from the generative distribution $q(\cdot)$, and compute $z(x_i)$ and $C(x_i)$.

\item
Generate ~$\omega_i$~ from ${\rm PG}(1, z(x_i)^\top\gamma + C(x_i))$ for $i=1,\ldots,n$, and from \\
${\rm PG}(1, z(x_i)^\top\gamma + C(x_i))$ for $i=n+1,\ldots,n+m$.

\item
Generate $\gamma$ from $N(A_1, B_1)$, where 
\begin{equation*}
\left\{\kern-1.5ex
\begin{array}{ll}
&\displaystyle B_1=\left(B_0^{-1} + \sum_{i=1}^{n}\omega_i z(x_i)z(x_i)^\top + \sum_{i=n+1}^{n+m}\omega_i z(x_i)z(x_i)^\top\right)^{-1}, \\
&\displaystyle A_1=B_1\left\{\sum_{i=1}^{n}\Big(\frac{1}{2}-\omega_iC(x_i)\Big) z(x_i) + \sum_{i=n+1}^{n+m}\Big(-\frac{1}{2}-\omega_iC(x_i)\Big) z(x_i) + B_0^{-1}A_0\right\}.  
\end{array}
\right.
\end{equation*}
\end{enumerate}
\end{algo}

At each iteration, we redraw a fresh set of noise samples from $q(\cdot)$ to reduce sensitivity to any particular realization of the noise sample. This is a practical device rather than a requirement for the asymptotic theory in Section~\ref{sec:theoretical-assessment}.

Further, we propose an adaptive method for updating the noise distribution by updating it within MCMC iterations. 
Let $\widetilde{\gamma}$ be the average of the posterior samples of $\gamma$ over a batch of MCMC iterations.
Then, we can specify the noise distribution as 
$$
q_\alpha(x)=\frac{\exp\{\alpha z(x)^\top \widetilde{\gamma}\}}{Z_\alpha}, \qquad 
Z_\alpha=\int \exp\{\alpha z(u)^\top \widetilde{\gamma}\} du,
$$
where $\alpha\in[0,1]$ is a shrinkage (tempering) parameter that controls the concentration of the noise distribution. 
Note that $\alpha=1$ leads to the same distribution as the fitted model while $\alpha=0$ reduces to the uniform distribution.
Samples from $q_\alpha$ can be readily generated by importance resampling.  
To generate $m$ samples from $q_\alpha(x)$, we first generate $M(\gg m)$ samples from a base distribution (e.g. uniform distribution), denoted by $q_0(x)$, and re-sample $m$ units with probabilities proportional to $q_\alpha(x)/q_0(x)$. 
The resulting adaptive noise update is summarized as follows:

\begin{algo}\label{algo:adaptive_noise}
\textbf{(Adaptive noise update by tempered importance resampling)}
\begin{enumerate}
\item
Generate proposals $\{x^{(j)}\}_{j=1}^M$ from $q_0(\cdot)$.

\item
For each $j=1,\ldots,M$, compute $z(x^{(j)})$, set $\widetilde{w}_j=\exp\{\alpha z(x^{(j)})^\top \widetilde{\gamma}\}/q_0(x^{(j)})$ and compute $\widehat{Z}_\alpha=M^{-1}\sum_{j=1}^M \widetilde{w}_j$.

\item  
Resample $m$ points from $\{x^{(j)}\}_{j=1}^M$ with probabilities $\widetilde w_j/\sum_{l=1}^M \widetilde w_l$ $(j=1, \dots, M)$,
and denote the resulting set by $X_m^\ast=\{x_{n+1},\ldots,x_{n+m}\}$.

\item
Compute $q_\alpha(x^{(j)})=\exp\{\alpha z(x^{(j)})^\top \widetilde{\gamma}\}/\widehat{Z}_\alpha$.
\end{enumerate}
\end{algo}

Although the target noise density $q_\alpha(x)$ involves the normalizing constant
$Z_\alpha$, Algorithm~\ref{algo:adaptive_noise} does not require repeated evaluations of $Z_\alpha$ for each resampled point.
The normalization is implicitly handled through the importance resampling step, since the probabilities $\widetilde w_j/\sum_{l=1}^M \widetilde w_l$ depend only on
ratios of unnormalized weights.
The Monte Carlo estimate $\widehat Z_\alpha$ is therefore computed once per noise update and reused for evaluating $q_\alpha$ if needed.
To stabilize the performance of the algorithm above, it is useful to monitor the
effective sample size $\mathrm{ESS}=\left(\sum_{j=1}^M \widetilde w_j\right)^2 / \sum_{j=1}^M \widetilde w_j^2$ during the noise update.
The base distribution $q_0$ can be chosen flexibly, provided that it has support covering the region where $q_\alpha$ has non-negligible mass.
In our implementation, we use a uniform distribution over a data-adaptive bounding box for simplicity.

\subsection{Contrastive Bayesian inference for hierarchical models}\label{sec:hierarchical}

We also consider a multi-group setting with $J$ groups. For $j=1,\ldots,J$, let $X_j=\{x_{j1},\ldots,x_{jn_j}\}$ denote observed data and $X_j^\ast=\{x_{j,n_j+1},\ldots,x_{j,n_j+m_j}\}$ denote artificial noise generated from a possibly group-specific distribution $q_j(\cdot)$. We assume an unnormalized model $p_j(x\mid\theta_j)= \tilde{p}_j(x\mid\theta_j) / Z_j$, where $Z_j=\int \tilde{p}_j(x\mid\theta_j)dx$. We focus on the exponential-family case where $\tilde p_j(x\mid\theta_j)=h(x)\exp(\eta(x)^\top \theta_j)$.
By reparametrizing $\beta_j = -\log Z_j$ and defining vectors $z(x_{ji})=\big(\eta(x_{ji})^\top,\,1\big)^\top$ and $\gamma_j=(\theta_j^\top,\beta_j)^\top$, the classification likelihood for all groups can be expressed as
$$
\prod_{j=1}^J \prod_{i=1}^{n_j+m_j}
\frac{\exp(z(x_{ji})^\top\gamma_j+C(x_{ji}))^{s_{ji}}}{1+\exp(z(x_{ji})^\top\gamma_j+C(x_{ji}))},
$$
where $C(x_{ji})=\log n_j-\log m_j+\log h(x_{ji})-\log q_j(x_{ji})$ and $s_{ji}=I(i\le n_j)$.

To facilitate sharing statistical strength across groups, we introduce a hierarchical prior structure. Specifically, we assume that the group-specific parameters $\{\gamma_j\}_{j=1}^J$ are drawn from a common population distribution:
$$
\theta_j\mid \mu,\Sigma \sim N(\mu,\Sigma), ~
\beta_j\mid \mu_\beta,\sigma_\beta^2 \sim N(\mu_\beta,\sigma_\beta^2),
$$
for $j=1,\ldots,J$. We complete the model by placing standard noninformative or weakly informative hyperpriors on the population parameters $(\mu, \Sigma, \mu_\beta, \sigma_\beta^2)$.

Posterior computation can be performed efficiently via a Gibbs sampler that leverages P\'olya--Gamma data augmentation, which renders the conditional posteriors for the parameters Gaussian. The complete sampling algorithm proceeds as follows:

\begin{algo}\label{algo:gibbs-hier}
Starting from initial values of $\{\gamma_j\}_{j=1}^J$ and the hyperparameters, iteratively generate posterior samples as follows:
\begin{enumerate}
\item For each group $j$, either fix the noise data $X_j^\ast$ throughout all iterations or regenerate $x_{ji}\sim q_j(\cdot)$ for $i=n_j+1,\ldots,n_j+m_j$ at each iteration, recomputing $z(x_{ji})$ and $C(x_{ji})$ accordingly.

\item For all $j=1,\ldots,J$ and $i=1,\ldots,n_j+m_j$, sample the P\'olya--Gamma latent variables:
$$
\omega_{ji}\sim \mathrm{PG}\!\left(1,\,z(x_{ji})^\top\gamma_j+C(x_{ji})\right).
$$

\item For each group $j=1,\ldots,J$, sample the group-specific parameters $\gamma_j = (\theta_j^\top, \beta_j)^\top$ from its Gaussian conditional posterior $N(A_{1j}, B_{1j})$, where
\begin{equation*}
\left\{\kern-1.5ex
\begin{array}{ll}
&\displaystyle B_{1j}=\Big(\mathrm{diag}(\Sigma^{-1}, \sigma_\beta^{-2})+\sum_{i=1}^{n_j+m_j}\omega_{ji}\,z(x_{ji})z(x_{ji})^\top\Big)^{-1},\\
&\displaystyle A_{1j}=B_{1j}\Big\{\sum_{i=1}^{n_j+m_j}\!\Big(s_{ji}-\frac12-\omega_{ji}C(x_{ji})\Big)z(x_{ji})
+ \mathrm{diag}(\Sigma^{-1}, \sigma_\beta^{-2})(\mu^\top, \mu_\beta)^\top \Big\}.
\end{array}
\right.
\end{equation*}

\item Sample the hyperparameters $(\mu,\Sigma,\mu_\beta,\sigma_\beta^2)$ from their respective full conditional distributions, given the current values of $\{\theta_j, \beta_j\}_{j=1}^J$ and the hyperpriors. Assuming standard conjugate hyperpriors (e.g., Gaussian for means, Inverse-Wishart/Gamma for variances), these updates are standard.
\end{enumerate}
\end{algo}

Under this hierarchical specification, applying NC-Bayes naturally captures between-group heterogeneity via Bayesian partial pooling. The group-specific parameters $\{\theta_j,\beta_j\}$ as well as the hyperparameters $(\mu,\Sigma,\mu_\beta,\sigma_\beta^2)$ can be updated within a unified and efficient Gibbs sampler using the P\'olya--Gamma data augmentation.

\section{Theoretical Assessment}
\label{sec:theoretical-assessment}

\subsection{Notation}
\label{sec:notation}
This section establishes the asymptotic properties of the NC-Bayes
posterior under a fixed noise distribution $q$. We consider the fixed-ratio
asymptotic regime of noise-contrastive estimation, in which the number of
noise observations is proportional to the number of data observations
\citep{gutmann2010noise,gutmann2012noise}. The proof strategy follows the
generalized-posterior framework of \citet{miller2021general} and
\citet{matsubara2022robust}.
The regularity conditions and all proofs are deferred to Supplementary Material~\ref{sec:s1}.

Let $X_n={x_1,\ldots,x_n}$ denote observations drawn independently from the
data-generating distribution $p_0$, and let
$X_m^\ast={x_{n+1},\ldots,x_{n+m}}$ denote an independent sample from the
noise distribution $q$. Fix $\nu\in\mathbb Q_{>0}$ and consider sample sizes
$n$ for which $m=\nu n\in\mathbb N$.
We write the unnormalized model as
$p(x\mid\gamma)=\tilde p(x\mid\theta)\exp(\beta)$, where
$\gamma=(\theta^\top,\beta)^\top\in\Gamma\subset\mathbb R^{p+1}$. For the
fixed-$\nu$ asymptotic theory, $\Gamma$ is the open, bounded, and convex
parameter set specified in Assumption~\ref{as:nc-bayes} in Supplementary
Material~\ref{sec:s1}.
Let $\lambda$ be a dominating $\sigma$-finite measure and define
$\mathcal X_0=\{x:p_0(x)>0\}$. The correctly normalized value satisfies
$\beta=-\log Z(\theta)$, where
$Z(\theta)=\int\tilde p(x\mid\theta)\,dx$.
Assume that $q(x)>0$ for $\lambda$-almost every $x\in\mathcal X_0$. We further assume that, for every $\theta$ under consideration,
$\tilde p(x\mid\theta)>0$ for $\lambda$-almost every $x\in\mathcal X_0$ and
$\tilde p(x\mid\theta)=0$ for $\lambda$-almost every $x\notin\mathcal X_0$.
The support of $q$ may be strictly larger than $\mathcal X_0$.
On $\mathcal X_0$, define the log density ratio and the corresponding
conditional class probability by
$a(x\mid\gamma)=\log\tilde p(x\mid\theta)+\beta-\log q(x)$ and
$\rho_\nu(x\mid\gamma)=[1+\nu\exp\{-a(x\mid\gamma)\}]^{-1}$.

The NC-Bayes posterior measure $\Pi_n$, also written
$\Pi(\cdot\mid X_n,X_m^\ast)$, is
$$
\Pi_n(d\gamma)
=
\Pi(d\gamma\mid X_n,X_m^\ast)
\propto
\pi(\gamma)\exp\{-nf_{n,\nu}(\gamma)\}\,d\gamma,
$$
where
\begin{equation}
\label{eq:main-fn}
f_{n,\nu}(\gamma)
=
-\frac{1}{n}\sum_{i=1}^n
\log\rho_\nu(x_i\mid\gamma)
-
\frac{1}{n}\sum_{i=n+1}^{n+m}
\log\{1-\rho_\nu(x_i\mid\gamma)\}.
\end{equation}
Thus, the NC-Bayes posterior is a generalized posterior based on the empirical
noise-contrastive loss (NC-loss).  
In this section, we give asymptotic properties of the $\theta$-marginal NC-Bayes posterior, and the corresponding result for the joint posterior is proved in Supplementary Material \ref{sec:s1}. Let $R_\theta=(I_p,0)\in\mathbb R^{p\times(p+1)}$ denote the projection from
$\gamma=(\theta^\top,\beta)^\top$ to $\theta$, and let $R_\theta^{-1}$ denote the inverse-image operator associated with $R_\theta$.

\subsection{Posterior concentration}
\label{sec:posterior-concentration}

Our first result establishes posterior concentration.
Under the regularity conditions stated in Supplementary Material \ref{sec:s1},
the population NC-loss admits a well-separated minimizer at the true
parameter.  Consequently, the joint NC-Bayes posterior concentrates around
$\gamma_0$, and hence its $\theta$-marginal concentrates around
$\theta_0$ as the sample size increases.
Assume that the data-generating distribution is correctly specified:
$$
p_0(x)
=
\frac{\tilde p(x\mid\theta_0)}{Z(\theta_0)},
\qquad
\gamma_0
=
(\theta_0^\top,-\log Z(\theta_0))^\top.
$$
Let
\begin{equation}
\label{eq:main-fpop}
f_\nu(\gamma)
=
-\mathbb E_{p_0}\log\rho_\nu(X\mid\gamma)
-
\nu\mathbb E_q
\log\{1-\rho_\nu(X^\ast\mid\gamma)\}
\end{equation}
denote the population NC-loss.  Assumption~\textnormal{(A3)(ii)} directly requires that $\gamma_0$ be a
well-separated minimizer of $f_\nu$ over $\Gamma$.  Together with uniform
convergence of $f_{n,\nu}$ to $f_\nu$, this condition yields posterior
concentration and is also used to control the tails in the
Bernstein--von Mises argument below.

\begin{thm}[Posterior concentration]
\label{thm:conc}
Suppose Assumptions~\textnormal{(A1)}--\textnormal{(A4)} in Supplementary
Material~\ref{sec:s1} hold with $r_{\max}=1$.  Let $\Pi_n^\theta=\Pi_n\circ R_\theta^{-1}$ denote the
$\theta$-marginal NC-Bayes posterior. Then, for every $\epsilon>0$,
$$
\Pi_n^\theta
(\{\theta\in\Theta:\|\theta-\theta_0\|<\epsilon\})
\overset{\text{a.s.}}\to1.
$$
\end{thm}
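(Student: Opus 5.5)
The plan follows the generalized-posterior concentration argument of \citet{miller2021general}, applied to the joint posterior $\Pi_n$ on $\Gamma$, and then projects to $\theta$. Because $R_\theta$ is a coordinate projection, $\|\theta-\theta_0\|\le\|\gamma-\gamma_0\|$, so $R_\theta^{-1}(\{\theta:\|\theta-\theta_0\|<\epsilon\})\supseteq\{\gamma\in\Gamma:\|\gamma-\gamma_0\|<\epsilon\}$. It therefore suffices to prove that $\Pi_n(\{\gamma:\|\gamma-\gamma_0\|<\epsilon\})\to1$ almost surely.

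\textbf{Step 1: uniform a.s.\ convergence of the NC-loss.} We show $\sup_{\gamma\in\Gamma}|f_{n,\nu}(\gamma)-f_\nu(\gamma)|\to0$ almost surely. Since $m=\nu n$, we can write $f_{n,\nu}=-n^{-1}\sum_{i\le n}\log\rho_\nu(x_i\mid\gamma)-\nu\,m^{-1}\sum_{i>n}\log\{1-\rho_\nu(x_i\mid\gamma)\}$. This is a sum of two independent empirical averages, each targeting the corresponding term of $f_\nu$. Both summands are smooth in $\gamma$ and we work on the bounded convex set $\Gamma$. Under the envelope and Lipschitz-type conditions in (A1)--(A4), with $r_{\max}=1$ presumably giving integrable first-derivative envelopes, we apply a uniform strong law: either a Glivenko--Cantelli argument for the Lipschitz-in-parameter class on the compact closure of $\Gamma$, or Miller's convexity-free result based on a.s.\ pointwise convergence plus a uniformly bounded derivative $\sup_\gamma\|\nabla f_{n,\nu}(\gamma)\|$, which is controlled by the SLLN applied to the envelope. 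Two facts make the bounds work: $|\partial_\gamma\log\rho_\nu|=(1-\rho_\nu)\|\nabla_\gamma a\|$ and $|\partial_\gamma\log(1-\rho_\nu)|=\rho_\nu\|\nabla_\gamma a\|$, so the envelopes reduce to moments of $\|\nabla_\gamma\log\tilde p(x\mid\theta)\|+1$ under $p_0$ and $q$.

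\textbf{Step 2: concentration from well-separation.} Fix $\epsilon>0$, let $B_\epsilon=\{\|\gamma-\gamma_0\|<\epsilon\}$, and use (A3)(ii) to obtain $\delta>0$ with $\inf_{\Gamma\setminus B_\epsilon}f_\nu\ge f_\nu(\gamma_0)+2\delta$. By continuity of $f_\nu$, there is a neighborhood $U\subset B_\epsilon$ of $\gamma_0$ on which $f_\nu\le f_\nu(\gamma_0)+\delta/2$. On the a.s.\ event of Step 1, for $n$ large the uniform error is below $\delta/4$, which gives
\[
\frac{\Pi_n(\Gamma\setminus B_\epsilon)}{\Pi_n(U)}\le\frac{\int_{\Gamma\setminus B_\epsilon}\pi\,e^{-n(f_\nu(\gamma_0)+2\delta-\delta/4)}}{\int_U\pi\,e^{-n(f_\nu(\gamma_0)+\delta/2+\delta/4)}}=\frac{e^{-n\delta}}{\Pi(U)}\to0 .
\]
Here we use that the prior is proper and puts positive mass on $U$, which is assumed in (A4). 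Since $\Pi_n(U)\le1$, this yields $\Pi_n(B_\epsilon)\to1$ a.s., and the projection remark above finishes the proof.

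\textbf{Main obstacle.} The delicate step is Step 1. The uniform law must hold almost surely and jointly over two samples, with $m$ tied to $n$ along the subsequence where $\nu n\in\mathbb N$. I would handle this by treating the noise draws as an independent i.i.d.\ sequence and applying the SLLN separately to the data average and the noise average, each of which is a genuine i.i.d.\ average. The integrability of the $\log(1-\rho_\nu)$ terms under $q$ also needs care, since the support of $q$ may exceed $\mathcal X_0$. On $q$-points outside $\mathcal X_0$ we have $\tilde p=0$, so $\rho_\nu=0$ and these terms contribute $0$. This observation should be recorded explicitly.
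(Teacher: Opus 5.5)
Your proposal is correct and follows essentially the same route as the paper. The paper first proves a uniform a.s.\ law for $f_{n,\nu}$ from the pointwise SLLN plus the random Lipschitz constant $P_nM_1+\nu Q_mM_1$, invoking stochastic equicontinuity via Davidson's Theorems 21.8 and 21.10. It then bounds the posterior mass of the separated set against a neighbourhood $U$ of $\gamma_0$ with $\Pi(U)>0$, and projects to $\theta$ using $\|\theta-\theta_0\|\le\|\gamma-\gamma_0\|$.

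Two cosmetic points. Your displayed equality should be ``$\le$'', since you dropped the factor $\Pi(\Gamma\setminus B_\epsilon)\le 1$. The pointwise SLLN also needs the zeroth-order envelope $M_0$, which (A2) supplies through $\Lambda_1=(1+M_0+M_1)^2$.
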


\subsection{Bernstein--von Mises}
\label{sec:BvM}

Let $\hat\gamma_n$ denote the measurable minimum NC-loss estimator specified
in Assumption~\textnormal{(A3)(i)}, that is,
$\hat\gamma_n\in
\operatorname*{arg\,min}_{\gamma\in\Gamma}f_{n,\nu}(\gamma)$.
Define
$s(x)=\left.\nabla_\gamma a(x\mid\gamma)\right|_{\gamma=\gamma_0}$,
$\rho_{0, \nu}(x)=\rho_\nu(x\mid\gamma_0)$, and
$I_\nu=\mathbb E_{p_0}
[\{1-\rho_{0, \nu}(X)\}s(X)s(X)^\top]$.
Supplementary Lemma~\ref{lem:deriv-ident} shows that
$I_\nu=\nabla_\gamma^2f_\nu(\gamma_0)$, so $I_\nu$ is the local curvature of
the population NC-loss at $\gamma_0$.
We set
$\hat\theta_n=R_\theta\hat\gamma_n$
and $V_{\nu,\theta}=R_\theta I_\nu^{-1}R_\theta^\top$.
To establish the Bernstein--von Mises result, we apply
\citet[Theorem~4]{miller2021general}, following the verification strategy of
\citet[Supplementary Appendix~B.5]{matsubara2022robust}.

\begin{thm}[Bernstein--von Mises]
\label{thm:bvm}
Suppose Assumptions~\textnormal{(A1)}--\textnormal{(A4)} in Supplementary
Material~\ref{sec:s1} hold with $r_{\max}=3$. Then the following statements
hold. For the parameter of interest $\theta$,
$$
\left\|
\Pi\left(
\sqrt n(\theta-\hat\theta_n)\in\cdot
\mid X_n,X_m^\ast
\right)
-
N(0,V_{\nu,\theta})(\cdot)
\right\|_{\mathrm{TV}}
\overset{\mathrm{a.s.}}{\to}
0,
$$
and
$
\sqrt n(\hat\theta_n-\theta_0)
\overset{d}{\to}
N(0,V_{\nu,\theta}).
$
\end{thm}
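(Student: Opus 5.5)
The plan is to prove the joint statement for $\gamma$ first, by verifying the hypotheses of \citet[Theorem~4]{miller2021general} for the sequence $f_{n,\nu}$. The statement for $\theta$ then follows by projecting with $R_\theta$. Miller's theorem needs four ingredients. First, $f_{n,\nu}\to f_\nu$ pointwise almost surely. Second, $\hat\gamma_n\to\gamma_0$ almost surely. Third, $f_\nu$ has a positive definite Hessian at $\gamma_0$. Fourth, the third derivatives of $f_{n,\nu}$ are uniformly bounded on a neighbourhood of $\gamma_0$ for all large $n$, and a prior positivity and continuity condition holds at $\gamma_0$.

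\textbf{Step 1 (convergence of the loss).} Since $m=\nu n$, I would write
\[
f_{n,\nu}(\gamma)=-\tfrac1n\sum_{i\le n}\log\rho_\nu(x_i\mid\gamma)-\nu\,\tfrac1m\sum_{i>n}\log\{1-\rho_\nu(x_i\mid\gamma)\}.
\]
Both averages are i.i.d. averages, so the strong law gives pointwise a.s.\ convergence to $f_\nu$. Because $\Gamma$ is bounded and the derivatives up to order $r_{\max}=3$ are dominated by integrable envelopes (assumption (A2)/(A4)), a uniform strong law (e.g., via equicontinuity on the closure of $\Gamma$) upgrades this to uniform a.s.\ convergence of $f_{n,\nu}$ and its first three derivatives.

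\textbf{Step 2 (consistency and curvature).} Uniform convergence together with the well-separated minimum (A3)(ii) gives $\hat\gamma_n\to\gamma_0$ almost surely; this is essentially the argument of Theorem~\ref{thm:conc}. For the curvature, I differentiate $-\log\rho_\nu=\log(1+\nu e^{-a})$ and $-\log(1-\rho_\nu)=\log(1+\nu^{-1}e^{a})$ twice. At $\gamma_0$ we have the identity $\nu q\,\rho_{0,\nu}=p_0(1-\rho_{0,\nu})$, because $e^{a(x\mid\gamma_0)}=p_0/q$. Using it, the first-order terms cancel, so $\nabla f_\nu(\gamma_0)=0$. The terms involving $\nabla^2 a$ also cancel, which leaves $\nabla^2 f_\nu(\gamma_0)=I_\nu$; this is Supplementary Lemma~\ref{lem:deriv-ident}. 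Positive definiteness of $I_\nu$ is assumed or follows from identifiability in (A3). Third-derivative boundedness follows from Step~1 together with the envelope assumption, since $\rho_\nu$ and $1-\rho_\nu$ are bounded by $1$ and the derivatives of $a$ are polynomially controlled.

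\textbf{Step 3 (apply Miller, then marginalize).} Miller's Theorem~4 now gives
\[
\|\Pi(\sqrt n(\gamma-\hat\gamma_n)\in\cdot\mid X_n,X_m^\ast)-N(0,I_\nu^{-1})\|_{\rm TV}\to0 \quad \text{a.s.}
\]
Total variation distance does not increase under the measurable map $R_\theta$, and $R_\theta N(0,I_\nu^{-1})=N(0,V_{\nu,\theta})$. This yields the first claim.

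\textbf{Step 4 (asymptotic normality of $\hat\theta_n$).} Taylor-expand $\nabla f_{n,\nu}(\hat\gamma_n)=0$ around $\gamma_0$:
\[
\sqrt n(\hat\gamma_n-\gamma_0)=-\{\nabla^2 f_{n,\nu}(\bar\gamma)\}^{-1}\sqrt n\,\nabla f_{n,\nu}(\gamma_0).
\]
The gradient $\nabla f_{n,\nu}(\gamma_0)$ is a sum of two independent centred i.i.d.\ averages with scores $-(1-\rho_{0,\nu})s$ on the data and $\nu\rho_{0,\nu}s$ on the noise. Its asymptotic covariance is
\[
\mathrm{Var}_{p_0}\{(1-\rho_{0,\nu})s\}+\nu\,\mathrm{Var}_q(\rho_{0,\nu}s).
\]
Using the same identity $\nu q\rho_{0,\nu}=p_0(1-\rho_{0,\nu})$, the second moments combine to $\mathbb E_{p_0}[(1-\rho_{0,\nu})ss^\top]=I_\nu$ and the mean terms cancel. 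The sandwich $I_\nu^{-1}I_\nu I_\nu^{-1}$ therefore equals $I_\nu^{-1}$, and projecting with $R_\theta$ gives $N(0,V_{\nu,\theta})$.

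The main obstacle is this covariance computation, specifically checking that the two mean corrections cancel, because the generalized posterior is only correctly calibrated if the sandwich collapses to $I_\nu^{-1}$. I would carry out that computation explicitly, using $\mathbb E_{p_0}[(1-\rho_{0,\nu})s]=\nu\mathbb E_q[\rho_{0,\nu}s]$. The remaining work is the routine justification of the uniform third-derivative bounds from the assumptions.
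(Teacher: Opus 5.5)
Your overall plan matches the paper's: verify \citet[Theorem~4]{miller2021general} for the joint loss in $\gamma$, push the joint limit $N(0,I_\nu^{-1})$ through $R_\theta$, and use the M-estimator CLT for the second claim. The genuine problem is Step~4, whose covariance computation is false. Write $\nabla_\gamma f_{n,\nu}(\gamma_0)=P_ng_1+\nu Q_mg_0$, with $g_1=-(1-\rho_{0,\nu})s$ and $g_0=\rho_{0,\nu}s$. The two summands are not individually centred: $P_0g_1=-\mu_\nu$ and $\nu Qg_0=\mu_\nu$, where $\mu_\nu=P_0[(1-\rho_{0,\nu})s]$. The identity $P_0[(1-\rho_{0,\nu})s]=\nu Q[\rho_{0,\nu}s]$ only makes these means cancel in the \emph{sum}, which is the statement $\nabla_\gamma f_\nu(\gamma_0)=0$. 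In the variance, each independent piece has its own squared mean subtracted, so the corrections add rather than cancel:
\[
\operatorname{Var}_{p_0}(g_1)+\nu\operatorname{Var}_q(g_0)=I_\nu-\mu_\nu\mu_\nu^\top-\nu^{-1}\mu_\nu\mu_\nu^\top=I_\nu-\frac{1+\nu}{\nu}\mu_\nu\mu_\nu^\top=:J_\nu .
\]
Moreover, $\mu_\nu\neq0$ always, because its $\beta$-coordinate is $P_0[1-\rho_{0,\nu}]>0$. For example, if $q=p_0$ the $\beta$-component of the empirical gradient at $\gamma_0$ is identically zero, so $J_\nu$ has a zero $(\beta,\beta)$ entry while $I_\nu$ does not. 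Hence $I_\nu^{-1}J_\nu I_\nu^{-1}\neq I_\nu^{-1}$. The claim you single out as the main obstacle, that the mean corrections cancel and the full $\gamma$-estimator is calibrated, is false; the paper explicitly notes that covariance matching fails for $\gamma$.

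The missing idea is that the discrepancy lies entirely in the $\beta$-direction. Since $\partial a/\partial\beta=1$, you have $s^\top e_\beta=1$ with $e_\beta=(0,\ldots,0,1)^\top$. So $I_\nu e_\beta=P_0[(1-\rho_{0,\nu})s]=\mu_\nu$, that is, $I_\nu^{-1}\mu_\nu=e_\beta$. Therefore $I_\nu^{-1}J_\nu I_\nu^{-1}=I_\nu^{-1}-\frac{1+\nu}{\nu}e_\beta e_\beta^\top$, and because $R_\theta e_\beta=0$ the projected covariance is exactly $R_\theta I_\nu^{-1}R_\theta^\top=V_{\nu,\theta}$. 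This is what makes the $\theta$-marginal statement true; as written, your argument reaches it only through an incorrect intermediate identity.

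Separately, your list of conditions for Miller's theorem omits the separation requirement $\liminf_n\inf_{\|\gamma-\hat\gamma_n\|\ge\epsilon}\{f_{n,\nu}(\gamma)-f_{n,\nu}(\hat\gamma_n)\}>0$, which controls posterior mass away from $\hat\gamma_n$. It follows from uniform convergence of $f_{n,\nu}$, (A3)(ii), and $f_{n,\nu}(\hat\gamma_n)\le f_{n,\nu}(\gamma_0)$, after extending the loss outside $\Gamma$ as the paper does, but it must be checked. Conversely, you do not need uniform convergence of third derivatives over $\Gamma$. Local uniform Hessian convergence and eventual boundedness of the third derivatives near $\gamma_0$ suffice.
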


Such covariance matching is not generic for loss-based generalized
posteriors or quasi-posteriors. For a general empirical loss, the
Bernstein--von Mises covariance is determined by the inverse curvature
$I^{-1}$ of the population criterion, whereas the corresponding
minimum-loss estimator generally has sandwich covariance
$I^{-1}JI^{-1}$
\citep{chernozhukov2003mcmc,miller2021general,kleijn2012bernstein,
muller2013risk}. When $I$ is nonsingular, these two covariance matrices
coincide for the full parameter if and only if the information identity
$J=I$ holds.

For NC-Bayes, covariance matching does not hold for the full parameter
$\gamma=(\theta^\top,\beta)^\top$ as shown in Supplementary Material~\ref{sec:s1}. However, the discrepancy is confined to
the auxiliary normalizing-parameter direction and is therefore eliminated
after marginalization to $\theta$. Hence, exact covariance matching is
recovered for the parameter of interest.

The preceding results concern $n\to\infty$ with $\nu$ fixed.  We now consider
a distinct limit in which $\nu\to\infty$ while the noise distribution $q$
remains fixed. 

\begin{cor}
\label{cor:large-noise-theta-calibration}
Suppose Assumptions~\textnormal{(A1)}, \textnormal{(A2)}, and
\textnormal{(A4)} hold with $r_{\max}=3$, and suppose
Assumption~\textnormal{(A3)} holds for every sufficiently large
$\nu$.  Assume also that $Z(\theta)$ is differentiable
at $\theta_0$.  Let
$
p(x\mid\theta)=\tilde p(x\mid\theta)/Z(\theta)
$
and define
$
\mathcal I(\theta_0)
=
\operatorname{Var}_{p_0}
[
\nabla_\theta\log p(X\mid\theta_0)
].
$
Assume that $\mathcal I(\theta_0)$ is positive definite.  Then
$$
V_{\nu,\theta}
=
R_\theta I_\nu^{-1}R_\theta^\top
\longrightarrow
\mathcal I(\theta_0)^{-1}
\qquad
\text{as }\nu\to\infty.
$$
If, in addition, $\mathbb E_{p_0}[\nabla_\theta \log p(X\mid\theta_0)]=0$, then
$$
\mathcal I(\theta_0)
=
\mathbb E_{p_0}
[
\nabla_\theta\log p(X\mid\theta_0)
\nabla_\theta\log p(X\mid\theta_0)^\top
],
$$
so that $\mathcal I(\theta_0)$ is the Fisher information matrix of the normalized model.
\end{cor}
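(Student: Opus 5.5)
The plan is to compute the limit of $I_\nu$ directly and then read off the $\theta$-block of its inverse through a Schur complement. The statement of Theorem~\ref{thm:bvm} and Supplementary Lemma~\ref{lem:deriv-ident} are not needed beyond the definition $I_\nu=\mathbb E_{p_0}[\{1-\rho_{0,\nu}(X)\}s(X)s(X)^\top]$.

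\textbf{Step 1: the weight.} Write $g(x)=\nabla_\theta\log\tilde p(x\mid\theta_0)$. Since $a(x\mid\gamma)=\log\tilde p(x\mid\theta)+\beta-\log q(x)$, we have $s(x)=(g(x)^\top,1)^\top$. At $\gamma_0$ we have $e^{a(x\mid\gamma_0)}=p_0(x)/q(x)$ on $\mathcal X_0$. Hence
$$1-\rho_{0,\nu}(x)=\frac{\nu q(x)}{p_0(x)+\nu q(x)}.$$
Because $q>0$ $\lambda$-a.e.\ on $\mathcal X_0$, this weight lies in $[0,1]$ and increases to $1$ for $p_0$-a.e.\ $x$ as $\nu\to\infty$.

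\textbf{Step 2: the limit of $I_\nu$.} I will use dominated convergence with dominating function $\|s(X)\|^2$. This requires $\mathbb E_{p_0}\|s(X)\|^2<\infty$, which I expect to extract from the moment and derivative bounds in (A4) with $r_{\max}=3$. Since these bounds concern $\gamma_0$ and $q$ only, the domination is uniform in $\nu$. The conclusion is
$$I_\nu\to I_\infty=\mathbb E_{p_0}[s s^\top]=\begin{pmatrix}\mathbb E_{p_0}[gg^\top] & \mathbb E_{p_0}[g]\\ \mathbb E_{p_0}[g]^\top & 1\end{pmatrix}.$$
Checking this integrability from the stated assumptions, and that it holds independently of $\nu$, is the step I expect to be the main technical point. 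The remainder is linear algebra.

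\textbf{Step 3: Schur complement and identification with $\mathcal I(\theta_0)$.} The Schur complement of the $(2,2)$ entry $1$ in $I_\infty$ is
$$S=\mathbb E_{p_0}[gg^\top]-\mathbb E_{p_0}[g]\mathbb E_{p_0}[g]^\top=\operatorname{Var}_{p_0}[g(X)].$$
Differentiability of $Z$ at $\theta_0$ gives
$$\nabla_\theta\log p(x\mid\theta_0)=g(x)-\nabla_\theta\log Z(\theta_0),$$
which is $g$ shifted by a constant. Therefore $S=\mathcal I(\theta_0)$, and $S$ is positive definite by assumption. Since the $(2,2)$ entry equals $1>0$ and its Schur complement is positive definite, $I_\infty$ is positive definite. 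The block-inverse formula then gives $R_\theta I_\infty^{-1}R_\theta^\top=S^{-1}=\mathcal I(\theta_0)^{-1}$.

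\textbf{Step 4: passing to the limit.} Matrix inversion is continuous at the invertible matrix $I_\infty$. Hence $I_\nu$ is invertible for all sufficiently large $\nu$, which is consistent with (A3) holding for large $\nu$. It follows that $V_{\nu,\theta}=R_\theta I_\nu^{-1}R_\theta^\top\to\mathcal I(\theta_0)^{-1}$.

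\textbf{Second claim.} This is the identity $\operatorname{Var}[U]=\mathbb E[UU^\top]-\mathbb E[U]\mathbb E[U]^\top$ applied to $U=\nabla_\theta\log p(X\mid\theta_0)$. When $\mathbb E_{p_0}[U]=0$, the variance reduces to $\mathbb E_{p_0}[UU^\top]$, which is the Fisher information of the normalized model.
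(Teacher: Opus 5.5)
Your proposal is correct and follows essentially the same route as the paper: dominated convergence gives $I_\nu\to I_\infty=\mathbb E_{p_0}[ss^\top]$, you identify $\mathcal I(\theta_0)$ as the Schur complement of the unit $(2,2)$ entry, and you conclude with positive definiteness via the Schur criterion, continuity of inversion and the block-inverse formula. The integrability step you left open follows from (A2), not (A4), which is the prior assumption: since $\gamma_0\in\Gamma$, we have $\|s(x)\|\le M_1(x)$, and $M_1^2\le\Lambda_1$, so $\mathbb E_{p_0}\|s(X)\|^2\le P_0\Lambda_1<\infty$, a bound that does not depend on $\nu$.
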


Thus, if one first considers the fixed-$\nu$ Bernstein--von Mises limit as $n\to\infty$ and then lets $\nu\to\infty$, the covariance matrix $V_{\nu,\theta}$ of the Gaussian approximation to the $\theta$-marginal NC-Bayes posterior converges to
$
\mathcal I(\theta_0)^{-1}.
$
In particular, although $V_{\nu,\theta}$ generally depends on the choice of the noise distribution $q$ for finite $\nu$, this dependence vanishes in the large-noise limit. When $\mathcal I(\theta_0)$ is the Fisher information matrix of the normalized model, the limiting covariance agrees with the inverse Fisher information appearing in the usual likelihood-based Bernstein--von Mises theorem.

\section{Time-varying Density Estimation}
\label{sec:nc-bayes-action1}

\subsection{Model}
Suppose that one is interested in modeling time-varying densities 
$p_t(x)=\exp(f_t(x))/Z_t$ for $t=1,\ldots,T$, where $Z_t=\int \exp(f_t(x)) dx$ is a normalizing constant and $x$ is a $p$-dimensional vector.
For $f_1(x),\ldots,f_T(x)$, we introduce a basis function expansion, $f_t(x)=\sum_{l=1}^L \theta_{tl} \phi_l(x)$, where $\phi_1(x),\ldots,\phi_L(x)$ are basis functions that are common over all the time points. 
For the vector of coefficient parameters, $\theta_t=(\theta_{t1},\ldots,\theta_{tL})^\top$, we put the following random-walk prior:
$$
\theta_t\mid\theta_{t-1}\sim N(\theta_{t-1}, \lambda I_L), \ \ \ t=1,\ldots,T,
$$
with $\theta_0=0$, where $\lambda$ is a scalar parameter that controls the smoothness of the function $f_t(x)$. 
We assign priors for $-\log Z_t$ and $\lambda$ as $\beta_t= -\log Z_t\sim N(0, b_0)$ and $\lambda\sim {\rm IG}(n_0, \nu_0)$ with fixed values $b_0$, $n_0$ and $\nu_0$. 
In the subsequent numerical studies, we set $b_0=10^3$ and $n_0=\nu_0=1$. 

Let $X_t=\{x_{t1},\ldots,x_{tn_t}\}$ be a set of observed data at time $t$. 
Further, let $X_t^{\ast}=\{x_{t,n_t+1},\ldots,x_{t,n_t+m_t}\}$ be a set of noise data generated from a distribution $q(\cdot)$.
Then, the joint posterior of $\Theta=(\theta_1^\top,\ldots,\theta_T^\top)^\top$, $\beta=(\beta_1,\ldots,\beta_T)^\top$, and $\lambda$ given the observed data $X=\{X_1,\ldots,X_T\}$ and the noise data $X^\ast=\{X_1^{\ast},\ldots,X_T^{\ast}\}$ is expressed as 
\begin{align*}
\pi(\Theta, \beta, \lambda \mid X, X^\ast) \propto 
& \ \pi(\lambda, \beta) \prod_{t=1}^T \phi(\theta_t; \theta_{t-1}, \lambda I_L)\prod_{i=1}^{n_t+m_t} \frac{\exp(\Phi_{ti}^\top\theta_t + \beta_t + C_{ti})^{s_{ti}}}{1+\exp(\Phi_{ti}^\top\theta_t + \beta_t + C_{ti})},
\end{align*}
where $s_{ti}=I(i\le n_t)$, $\Phi_{ti}=(\phi_1(x_{ti}),\ldots,\phi_L(x_{ti}))^\top$ and $C_{ti}=\log n_t -\log m_t -\log q(x_{ti})$.
Using Algorithm~\ref{algo:gibbs-hier}, we can develop a Gibbs sampler for posterior computation, whose details are given in Supplementary Material \ref{sec:s2}.

\subsection{Simulation study}

We demonstrate the proposed method using synthetic data.
As the true data-generating distribution, we consider the following two scenarios: 

\begin{itemize}
\item 
(Scenario 1: Time-varying Gaussian mixture) \ \ 
For $t=1,\ldots,T$, each observation $x_{ti}\ (i=1,\ldots,n_t)$ is generated from a two-component Gaussian mixture model with fixed mixing proportions, $0.4N_2(\mu_t^{(1)}, \Sigma^{(1)}) + 0.6N_2(\mu_t^{(2)}, \Sigma^{(2)})$ with time-invariant covariance matrices given by $\Sigma^{(1)}=\mathrm{diag}(0.7,0.2)$ and $\Sigma^{(2)}=0.5 I_2$.
The component means evolve deterministically over time as
$\mu_t^{(1)} = (-2,0)^\top + 4t(1,0)^\top/T$ and $\mu_t^{(2)} = (-2,-2)^\top + 4t(1,1)^\top/T$.

\item 
(Scenario 2: Ring-shaped distribution) \ \ 
For $t=1,\ldots,T$, each observation $x_{ti} \ (i=1,\ldots,n_t)$ is constructed by decomposing it into a radial component $r_{ti}\sim N(\mu_t, \sigma_t^2)$ and an angular component $\theta_{ti}\sim U(0, 2\pi)$.
Then, the two-dimensional observation $x_{ti}$ is generated as $x_{ti} = (r_{ti}\cos\theta_{ti}, r_{ti}\sin\theta_{ti})^\top$.
The parameter $\mu_t$ controls the radius of the ring and increases over time, whereas $\sigma_t$ determines the thickness of the ring and gradually decreases. 
The time-varying parameters $\mu_t$ and $\sigma_t$ are specified deterministically as $\mu_t = 1 + 2(t-1)/(T-1)$ and  $\sigma_t = 0.5 - 0.2(t-1)/(T-1)$.
\end{itemize}
We set $T=10$ and $n_t=100$ in this study.
To illustrate the behavior of the proposed method, we first present a one-shot demonstration using a single realization of the simulated data.
Specifically, we apply NC-Bayes to the generated datasets and visually inspect the resulting density estimates over time.
For NC-Bayes, we draw 3,000 posterior samples after discarding the first 2,000 samples as burn-in.
As the basis functions, we set $L=30$ (the number of basis functions) and employ radial basis functions, $\phi_l(x)=\exp(-\|x-\kappa_l\|/h)$ for $l=1,\ldots,L$, where $\kappa_l$ is the center (knot) and $h$ is a bandwidth. 
Here, the basis centers are determined by applying $k$-means clustering with $L$ clusters to the pooled observations across all time points.
The number of noise samples is set to $m_t=100$, matching the number of observed data points at each time, and we adopt Algorithm~\ref{algo:adaptive_noise} with $\alpha=0.2$ that updates the noise distribution sequentially over time.
As a benchmark method, we apply kernel density estimation (KDE) separately at each time point, without sharing information across time.

For Scenario~2, the true densities and the corresponding estimates obtained by NC-Bayes and KDE are displayed in Figure~\ref{fig:sim-oneshot}.
From these visual comparisons, we observe that NC-Bayes successfully captures both the temporal evolution of the density and the complex, non-Gaussian structure of the distribution.
In contrast, the KDE estimates tend to be overly smooth, producing more diffuse density shapes.
This behavior can be attributed to the fact that KDE is applied independently at each time point and therefore cannot borrow information across different times.
From this perspective, the proposed method benefits from its hierarchical time-series structure, which enables effective sharing of statistical strength across time and leads to more stable and coherent density estimates.

We next conduct a Monte Carlo study with $200$ independent replications to quantitatively assess estimation accuracy and inferential performance.
For each replication, we evaluate the absolute error (ABE) of the estimated density, defined as
$$
{\rm ABE}=T^{-1}\sum_{t=1}^T\int_{D} \left| \widehat{f}_t(x)-f_t(x) \right|dx,
$$
where $D$ denotes the rectangular domain determined by the range of the whole observed data, $\widehat{f}_t(x)$ and $f_t(x)$ are the estimated and true densities at time $t$. 
We approximate this integral via Monte Carlo integration using 500 evaluation points drawn uniformly from $D$. We rescale $\widehat{f}_t$ and $f_t$ so that their discretized integrals over $D$ equal one. To evaluate the uncertainty quantification provided by NC-Bayes, we compute the coverage probability (CP) and average length (AL) of the 95\% credible intervals at the same 500 evaluation points.
For NC-Bayes, we consider three different noise specifications: the adaptive noise distribution described above, a time-invariant common noise distribution shared across all time points (N1), and time-specific noise distributions estimated separately for each time (N2).
The results for both scenarios are summarized in Table~\ref{tab:sim-density}.
Overall, NC-Bayes achieves higher estimation accuracy than KDE in both scenarios, reflecting the advantage of incorporating a hierarchical structure over time.
Regarding inference, all three NC-Bayes variants yield empirical coverage probabilities close to the nominal level.
However, the average length of the credible intervals varies depending on the noise specification, indicating differences in inferential efficiency.
In particular, learning the noise distribution adaptively leads to slightly shorter intervals and improved point estimation accuracy compared with the fixed noise alternatives.

\begin{figure}[t]
\centering
\includegraphics[width=\linewidth]{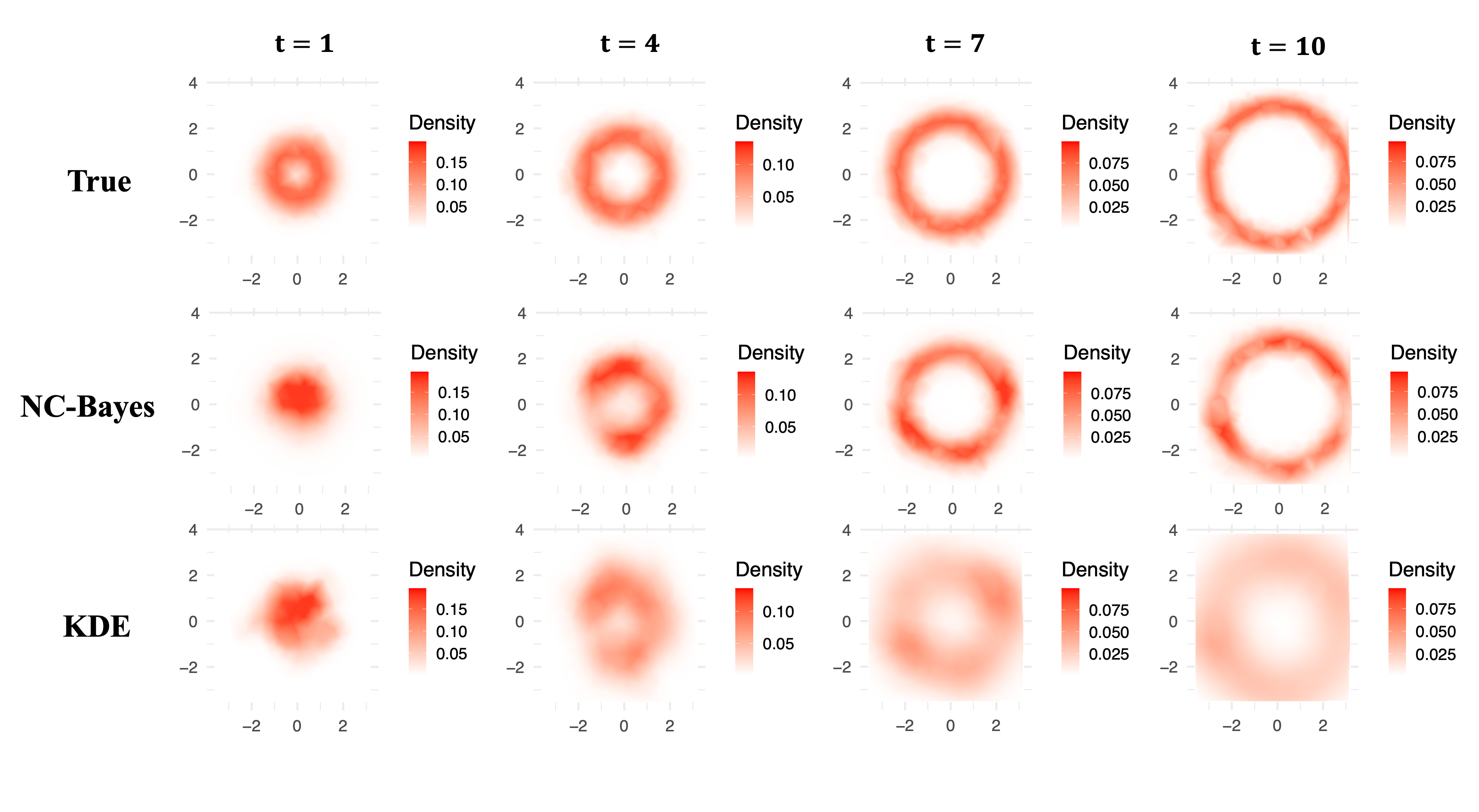}
\caption{ True time-varying density (upper), posterior mean of the time-varying density model fitted by NC-Bayes (middle) and time-wise KDE (lower), for selected time points under Scenario 2. } 
\label{fig:sim-oneshot}
\end{figure}

\begin{table}[t]
\centering
\caption{Absolute error (ABE) of the density estimation, coverage probability (CP) and average length (AL) of pointwise $95\%$ credible intervals, averaged over 500 evaluation points, obtained from time-wise kernel density estimation (KDE) and NC-Bayes with time-invariant noise (N1), time-dependent noise (N2) and adaptive noise (aN).
These values are averaged over 200 Monte Carlo replications. }
\vspace{2mm}
\begin{tabular}{ccccccccccc}
\toprule
&& \multicolumn{4}{c}{Scenario 1} && \multicolumn{4}{c}{Scenario 2} \\
&& \multicolumn{3}{c}{NC-Bayes} &&& \multicolumn{3}{c}{NC-Bayes} &\\
\cmidrule(lr){3-5} \cmidrule(lr){8-10}
 &  & N1 & N2 & aN & KDE &  & N1 & N2 & aN & KDE \\
\midrule
ABE &  & 0.267 & 0.270 & 0.255 & 0.301 &  & 0.350 & 0.375 & 0.371 & 0.581 \\
CP (\%) &  & 96.2 & 95.1 & 93.9 & - &  & 97.2 & 95.8 & 94.5 & - \\
AL &  & 0.154 & 0.130 & 0.113 & - &  & 0.123 & 0.129 & 0.111 & - \\
\bottomrule
\end{tabular}
\label{tab:sim-density}
\end{table}

\subsection{Example: crime incidents in Washington} 

We demonstrate the proposed method using publicly available crime incident data in Washington, DC%
\footnote{available at \url{https://opendata.dc.gov/datasets/DCGIS::crime-incidents-in-2022/about}.}.
The dataset consists of the locations (longitude and latitude) of gun assault incidents observed in each month ($T=12$) in 2022.
The sample size for each month ranges from 60 to 85 observations.
For this dataset, we applied the proposed time-varying density model based on NC-Bayes with adaptive noise updating.
We generated 3,000 posterior samples after discarding the first 2,000 iterations as burn-in, and used the posterior mean as a point estimate.
For comparison, we also fitted kernel density estimators (KDEs) separately for each month.

In Figure~\ref{fig:crime}, we show the estimated density functions for four selected months (January, May, September and December). 
The results indicate that NC-Bayes is able to flexibly capture complex spatial density structures and their temporal evolution over months.
In particular, the proposed method produces sharp and well-localized density estimates while adapting smoothly to changes in the underlying spatial distribution over time.
In contrast, the KDE results tend to be overly smooth across all months, which is likely due to the relatively small sample size available in each period.
This behavior is consistent with the findings from our simulation studies, where KDE exhibited limited ability to recover complex or rapidly changing density shapes under small-sample settings.

\begin{figure}[t]
\centering
\includegraphics[width=\linewidth]{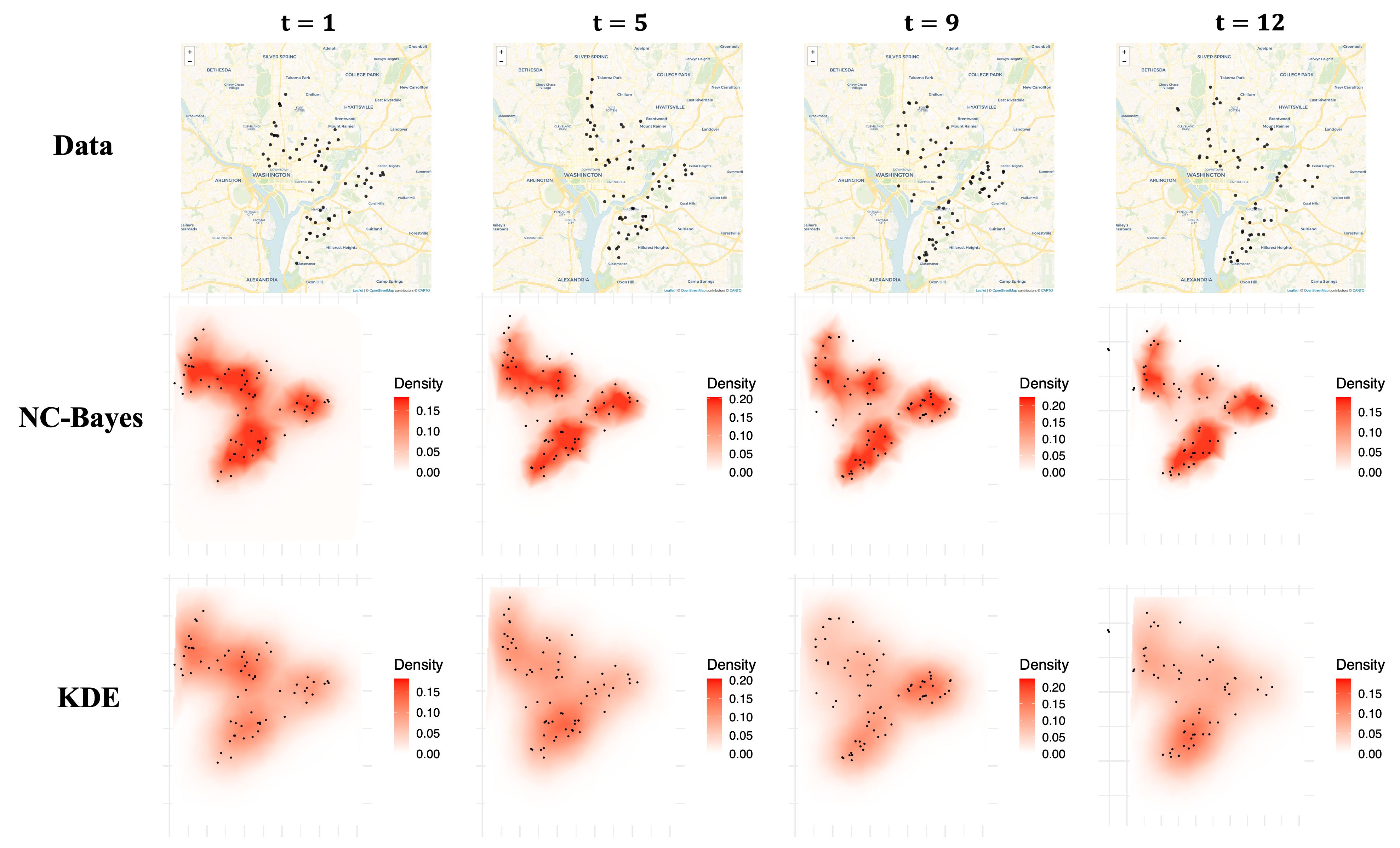}
\caption{Estimated spatial density functions for gun assault incident locations in Washington, DC, for four selected months ($t=1,5,9,12$) in 2022. The observed points (upper) and the posterior mean of the time-varying density model fitted by NC-Bayes (middle) are compared with the time-wise KDE (lower).} 
\label{fig:crime}
\end{figure}

\section{Sparse Torus Graph}
\label{sec:nc-bayes-action2}

\subsection{Model}
\label{sec:nc-bayes-action2.1}
Angular measurements are naturally modeled as circular random variables, whose joint distribution for $d$ variables lies on the $d$-dimensional torus, the product of $d$ circles. To describe dependencies among such variables, \cite{klein2020torus} introduced torus graphs, a class of regular full exponential families with pair interactions. 

Let $x=(x_1, \dots, x_d)^\top$ be a vector of circular variables, each $x_j \in [0, 2\pi)$. The torus graph density can be expressed as 
\begin{equation}\label{model_torus_graph}
p(x\mid \phi) \propto 
\pt(x\mid \phi) =
\exp \left\{ \sum_{j=1}^{d} \phi_j^\top \theta(x_j)+\sum_{j<k} \phi_{jk}^\top \eta(x_j,x_k) \right\},
\end{equation}
where $\theta(x_j)=(\cos x_j, \sin x_j)^\top$,~$\eta(x_j, x_k)=(\cos (x_j - x_k), \sin (x_j - x_k), \cos (x_j + x_k), \sin (x_j + x_k))^\top$ and $\phi=(\phi_1^\top,\ldots,\phi_d^\top,\phi_{12}^\top,\ldots,\phi_{d-1,d}^\top)^\top\in \mathbb{R}^{2d^2}$. 
Here, the marginal parameters $\phi_j=(\phi_{j(1)}, \phi_{j(2)})^\top\in \mathbb{R}^2$ control the first circular moments of each variable, while the coupling parameters $\phi_{jk}=(\phi_{jk(1)}, \phi_{jk(2)}, \phi_{jk(3)}, \phi_{jk(4)})^\top\in \mathbb{R}^4$ govern pairwise dependence. The first two components of $\phi_{jk}$ correspond to rotational dependence, and the last two correspond to reflectional dependence. As in Gaussian graphical models, zero coupling parameters $\phi_{jk}=0$ imply that $x_j$ and $x_k$ are conditionally independent given all other variables, enabling a direct graphical model interpretation. However, a major computational challenge of the model~(\ref{model_torus_graph}) is that its normalizing constant is analytically intractable, making standard maximum likelihood estimation and Bayesian inference computationally prohibitive. These computational challenges motivate the use of NC-Bayes.

%
Let $X_n=\{x_1,\ldots,x_n\}$ be observed data and $X_{m}^\ast=\{x_{n+1},\ldots,x_{n+m}\}$ be artificially generated noise data from some noise distribution $q(\cdot)$, where $x_i = (x_{i1},\dots, x_{id})^\top$ for $i=1, \dots, n+m$. 
For notational convenience, let $z(x_i)=(\theta(x_{i1})^\top,\ldots,\theta(x_{id})^\top, \eta(x_{i1},x_{i2})^\top,\\\ldots,\eta(x_{i,d-1},x_{id})^\top, 1)^\top$ and $\gamma=(\phi_1^\top,\ldots,\phi_d^\top ,\phi_{12}^\top, \ldots, \phi_{d-1,d}^\top, \beta)^\top$, where $\beta = -\log Z$ with $Z$ being the normalizing constant. Then the prior for $\gamma$ is $N(b_{\gamma 0}, B_{\gamma 0})$, where $b_{\gamma 0}=(b_{\phi_10}^\top,\ldots,b_{\phi_d0}^\top, b_{\phi_{12} 0}^\top,\ldots,b_{\phi_{d-1, d} 0}^\top, b_{\beta 0})^\top$ and $B_{\gamma 0}={\rm diag}(B_{\phi_10},\ldots,B_{\phi_d0}, B_{\phi_{12} 0},\ldots,B_{\phi_{d-1, d} 0}, \\B_{\beta0})$. 
Since the torus graph model belongs to the exponential family, the posterior distribution of $\gamma$ can be written in the form of equation~\eqref{eq:pos-noise}, and the data augmentation scheme introduced in Section \ref{sec:Gibbs} applies directly, so that Algorithm \ref{algo:gibbs2} can be used without modification.
Consequently, the full conditional distributions of $\phi_j$, $\phi_{jk}$, and $\beta$ are all Gaussian given the Gaussian priors specified above. The complete sampling algorithm based on this augmentation is detailed in Supplementary Material \ref{sec:comp-nc-bayes-torus}, as the following section extends this framework by introducing shrinkage priors on $\phi_j$ and $\phi_{jk}$.

\subsection{Handling sparsity via horseshoe priors}
\label{sec:nc-bayes-action2.2}

In high-dimensional settings, the naive estimation of the torus graph model often results in densely connected graphs \citep{klein2020torus}, which can obscure the underlying meaningful network structure. Previous studies have addressed this issue by applying multiple-testing procedures to dense estimators \citep{klein2020torus} or by introducing regularization such as the group lasso and an information criterion into score matching to induce sparsity  \citep{matsuda2021information,sukeda2025Directional}. In the present work, we take a fully Bayesian approach and introduce sparsity in $\phi$ through shrinkage priors. Specifically, we can introduce the horseshoe prior \citep{carvalho2009handling,carvalho2010horseshoe} defined as $\phi_{j(l)}\sim N(0, \lambda_{j(l)}^2 \tau^2)$ for $l=1, 2$ and $\phi_{jk(l')}\sim N(0, \xi_{jk(l')}^2\tau^2)$ for $l'=1, \dots, 4$, where $\lambda_{j(l)}$ and $\xi_{jk(l')}$ are local shrinkage parameters following the half-Cauchy distribution $C^+(0,1)$, and $\tau^2$ is a common global variance. This formulation applies shrinkage individually to each component of $\phi_{jk}$, allowing weak edges to be strongly shrunk toward zero while retaining truly strong connections. Moreover, to impose sparsity on the entire set of interaction parameters between nodes $j$ and $k$ in a coherent manner, we consider a grouped horseshoe prior. In this setting, the four components of $\phi_{jk}$ share a group-level shrinkage parameter $u_{jk}$, namely $\phi_{jk(l')}\sim N(0, u_{jk}^2 \tau^2)$ for $l'=1,\dots,4$. The group-level parameter $u_{jk}$ controls the overall relevance of the edge $(j,k)$. This structure enables sparsity induction at the edge level, leading to more interpretable and parsimonious network estimates.

However, posterior inference for $\phi$ is driven by a logistic likelihood through NCE, so high-dimensional shrinkage modeling can inherit the same pathology as sparse logistic regression \citep{nishimura2023shrink}. In particular, weak identifiability combined with heavy-tailed global-local priors can yield heavy-tailed posteriors \citep{nishimura2023shrink}. This can undermine geometric ergodicity of the P\'olya--Gamma Gibbs sampler, leading to poor mixing and overly wide credible intervals \citep{nishimura2023shrink}.
In our approach, we mitigate this instability by replacing the grouped horseshoe with its regularized counterpart, introducing a finite slab width $c>0$ \citep{piironen2017shrink}. 
Concretely, for the node-specific parameters we set, for $l=1,2$,
\begin{equation*}
\phi_{j(l)} \sim
N\left(0, \left(\frac{1}{c^{2}}+\frac{1}{\lambda_{j(l)}^{2}\tau^{2}}\right)^{-1}\right),
~
\lambda_{j(l)} \sim \mathrm{C}^+(0,1),
\end{equation*}
and for the edge-level components we regularize the grouped horseshoe by placing the slab at the group scale:
for $l'=1,\dots,4$,
\begin{equation*}
\phi_{jk(l')} \sim
N\left(0, \left(\frac{1}{c^{2}}+\frac{1}{u_{jk}^{2}\tau^{2}}\right)^{-1}\right),
~
u_{jk} \sim \mathrm{C}^+(0,1).
\end{equation*}
With this approach, we enforce an admissible network structure through a constrained parameterization of $\phi$, while retaining edge-wise shrinkage via the group scales $u_{jk}$. In the following sections, we use the fixed slab width $c = 1$.

\subsection{Simulation study}
\label{sec:nc-bayes-action2.3}

We generate synthetic data following the simulation scenario of \cite{klein2020torus}.
We set $n=200$ and $d=12$, and generate
$x_1$ from a von-Mises distribution with mean direction $\mu=\pi/6$ and concentration parameter $\kappa=2$, and
$x_j$ from a von-Mises distribution with mean direction $x_{j-1}+\mu$ and concentration parameter $\kappa=2$ for $j=2,\ldots,d$.
The resulting true graph is a linear chain in which node $j$ is connected to node $j+1$ for $j=1,\ldots,d-1$.
We apply NC-Bayes with the regularized grouped horseshoe prior.
Based on 2,000 posterior samples after discarding the first 1,000 samples as burn-in, an edge $(j,k)$ is detected when at least one of
$\{|\tilde{\phi}_{jk(1)}|,\ldots,|\tilde{\phi}_{jk(4)}|\}$
exceeds 0.100, where $\tilde{\phi}_{jk(l)}$ denotes the posterior median of $\phi_{jk(l)}$.
Figure~\ref{fig:oneshot} shows the detected edges, with edge width proportional to
$\sum_{l=1}^4|\tilde{\phi}_{jk(l)}|$.
NC-Bayes successfully recovers all edges in the true linear chain without detecting spurious edges.

\begin{figure}[t]
\centering
\includegraphics[width=0.7\linewidth]{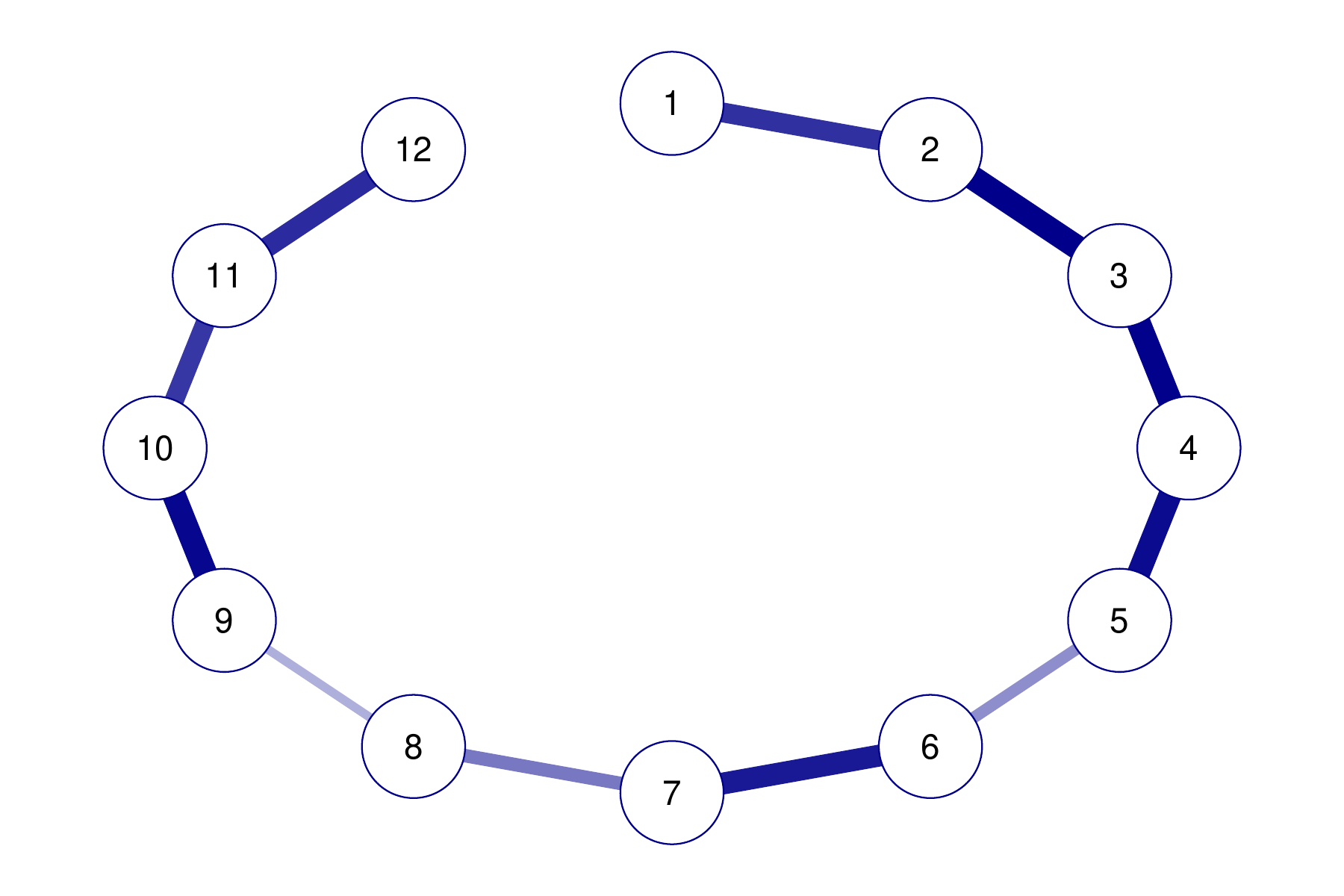}
\caption{
Detected edges from NC-Bayes.
The true graph is a linear chain in which node $j$ is connected to node $j+1$ for $j=1,\ldots,d-1$.
}
\label{fig:oneshot}
\end{figure}

We further compare NC-Bayes with a Hyv\"arinen score-based generalized Bayesian method
\citep{jewson2022general,altamirano2023score,altamirano2024robust},
which we refer to as $\mathcal{H}$-Bayes.
Details of its generalized posterior and sampling algorithm are provided in Supplementary Material~\ref{sec:comp-h-bayes}.
For both methods, we use a grouped horseshoe prior, with its regularized counterpart for NC-Bayes.
For NC-Bayes, we consider two choices for the number of noise samples: the same as the observed sample size and five times the observed sample size, corresponding to $\nu=1$ and $\nu=5$, respectively.
We also consider NC-Bayes both with and without adaptive noise updating; the former uses Algorithm~\ref{algo:adaptive_noise} with $\alpha=0.1$.
For $\mathcal{H}$-Bayes, we examine the sensitivity to the learning-rate parameter $w$ by considering $w\in\{0.2,0.5,1.0\}$.
The results are averaged over 100 simulations.

Tables~\ref{tab:results_Median} and~\ref{tab:results_CI} report the results under the median-based and 90\% credible-interval (CI) edge detection rules, respectively.
Under the median-based rule, NC-Bayes performs well when $\nu=1$, achieving high recall, precision, and accuracy both with and without adaptive noise updating.
Under the CI-based rule, however, when $\nu=1$ and the noise distribution is fixed, recall is relatively low, whereas adaptive noise updating substantially improves recall while maintaining near-perfect precision and high coverage probability.
When the number of noise samples is increased to $\nu=5$, the CI-based performance improves markedly even without adaptive noise updating, with recall approaching one while retaining near-perfect precision and high coverage.
Although these edge-selection metrics do not directly assess the asymptotic variance, the improved CI-based behavior with a larger number of noise samples is in line with the asymptotic result in Corollary~\ref{cor:large-noise-theta-calibration}, which shows that the asymptotic variance approaches that determined by the Fisher information as the amount of noise data increases.
In contrast, the performance of $\mathcal{H}$-Bayes remains sensitive to the loss-scaling parameter $w$: larger values of $w$ lead to substantially more false-positive detections and poorer interval coverage.
Overall, these results demonstrate that NC-Bayes provides stable edge-selection performance under the threshold-based rule, while its uncertainty quantification under the CI-based rule improves either through adaptive noise updating or by increasing the number of noise samples.
Additional simulation results under alternative scenarios are provided in Supplementary Material~\ref{sec:simulation-torus-graph}, where similar results are consistently observed.

\begin{table}[t]
\centering
\caption{
Performance metrics averaged over 100 simulations under the median-based edge detection rule.
An edge is detected when the absolute posterior median of any associated coefficient exceeds 0.100.
}
\vspace{2mm}
\begin{tabular}{llllrrr}
\hline
Method & $\nu$ & $w$ & Noise Update & Recall & Precision & Accuracy \\
\hline\hline
\multirow{4}{*}{NC-Bayes}
& \multirow{2}{*}{1} & -- & False & 0.996 & 0.999 & 0.999 \\
&                        & -- & True  & 1.000 & 0.985 & 0.997 \\
\cline{2-7}
& \multirow{2}{*}{5} & -- & False & 1.000 & 0.998 & 1.000 \\
&                        & -- & True  & 1.000 & 0.874 & 0.976 \\
\hline
\multirow{3}{*}{$\mathcal{H}$-Bayes}
& -- & 0.2 & -- & 1.000 & 0.699 & 0.928 \\
& -- & 0.5 & -- & 1.000 & 0.198 & 0.323 \\
& -- & 1.0 & -- & 1.000 & 0.174 & 0.207 \\
\hline
\end{tabular}
\label{tab:results_Median}
\end{table}

\begin{table}[t]
\centering
\caption{
Performance metrics averaged over 100 simulations under the 90\% CI-based edge detection rule.
An edge is detected when its 90\% credible interval does not contain zero.
CP denotes the coverage probability for $\phi$.
}
\vspace{2mm}
\begin{tabular}{llllrrrr}
\hline
Method & $\nu$ & $w$ & Noise Update & CP (\%) & Recall & Precision & Accuracy \\
\hline\hline
\multirow{4}{*}{NC-Bayes}
& \multirow{2}{*}{1} & -- & False & 97.9 & 0.507 & 1.000 & 0.918 \\
&                        & -- & True  & 98.6 & 0.734 & 1.000 & 0.956 \\
\cline{2-8}
& \multirow{2}{*}{5} & -- & False & 98.5 & 0.985 & 1.000 & 0.998 \\
&                        & -- & True  & 99.0 & 0.999 & 0.999 & 1.000 \\
\hline
\multirow{3}{*}{$\mathcal{H}$-Bayes}
& -- & 0.2 & -- & 97.2 & 1.000 & 0.941 & 0.990 \\
& -- & 0.5 & -- & 70.4 & 1.000 & 0.246 & 0.490 \\
& -- & 1.0 & -- & 45.1 & 1.000 & 0.182 & 0.250 \\
\hline
\end{tabular}
\label{tab:results_CI}
\end{table}

\subsection{Example: multichannel neural phase angle data}

We analyze multichannel neural phase data from a macaque monkey experiment previously studied in
\cite{brincat2015neuro,brincat2016neuro,klein2020torus}.
The data consist of beta-phase values at 300 ms after stimulus onset across 840 trials for 24 simultaneously recorded signals from the prefrontal cortex (PFC) and hippocampal regions, including the dentate gyrus (DG), CA3, and subiculum (Sub).
We use the preprocessed data\footnote{Available at \url{https://github.com/natalieklein/torus-graphs}.} obtained by applying Morlet wavelets to the LFP signals \citep{klein2020torus}.
Figure~\ref{fig:overview-real-data} shows pairwise phase relationships for representative channels from the four regions.

\begin{figure}[t]
\centering
\includegraphics[width=0.9\linewidth]{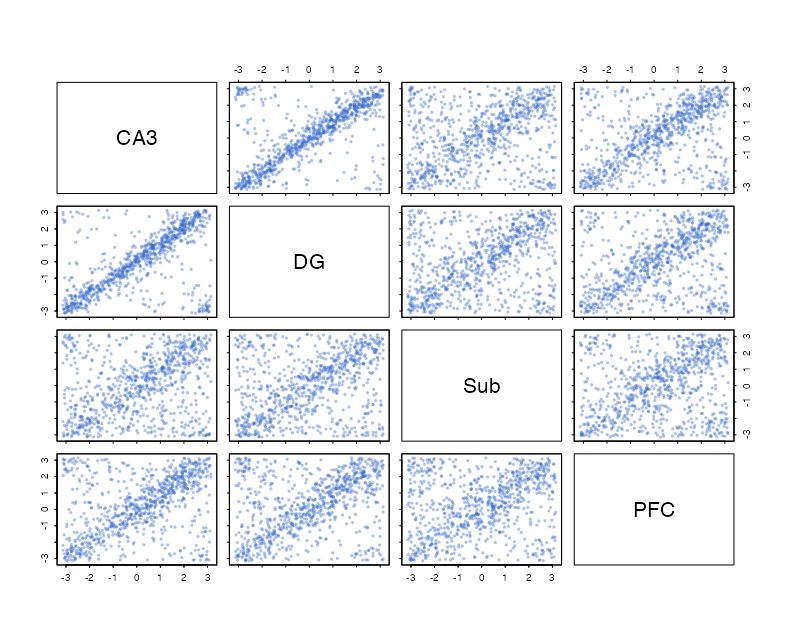}
\caption{
Pairwise scatter plots of beta-phase values for one selected channel from each of CA3, DG, Sub, and PFC.
Each axis ranges from $-\pi$ to $\pi$.
}
\label{fig:overview-real-data}
\end{figure}

We fit a 24-dimensional torus graph model using NC-Bayes and, for comparison, $\mathcal{H}$-Bayes as in Section~\ref{sec:nc-bayes-action2.3}.
Both methods use grouped horseshoe priors, with the regularized version for NC-Bayes.
For NC-Bayes, the global shrinkage scale is fixed at
$\tau=p_0/\{\sqrt{n+m}(2d^2-p_0)\}$, where
$p_0=\lfloor1.7d^2+0.5\rfloor$, following \cite{piironen2017shrink},
and the noise distribution is adaptively updated during burn-in using
Algorithm~\ref{algo:adaptive_noise} with $\alpha=0.1$.
For both methods, posterior inference is based on 2,000 samples after discarding the first 1,000 as burn-in.
An edge is detected when the absolute posterior median of at least one associated coefficient exceeds 0.020.

Figure~\ref{fig:oneshot-real-data-NC-Bayes} shows the network estimated by NC-Bayes.
The inferred graph identifies direct PFC--CA3 and PFC--Sub connections, but no direct PFC--DG connection, consistent with the connectivity pattern reported by \cite{klein2020torus}.
DG is instead connected to CA3 and Sub.
In contrast, $\mathcal{H}$-Bayes with $w=0.2$, which yielded the sparsest estimates among the settings considered in the simulation study, produces a substantially denser graph (Figure~\ref{fig:oneshot-real-data-Hscore}).

\begin{figure}[t!]
\centering
\includegraphics[width=0.9\linewidth]{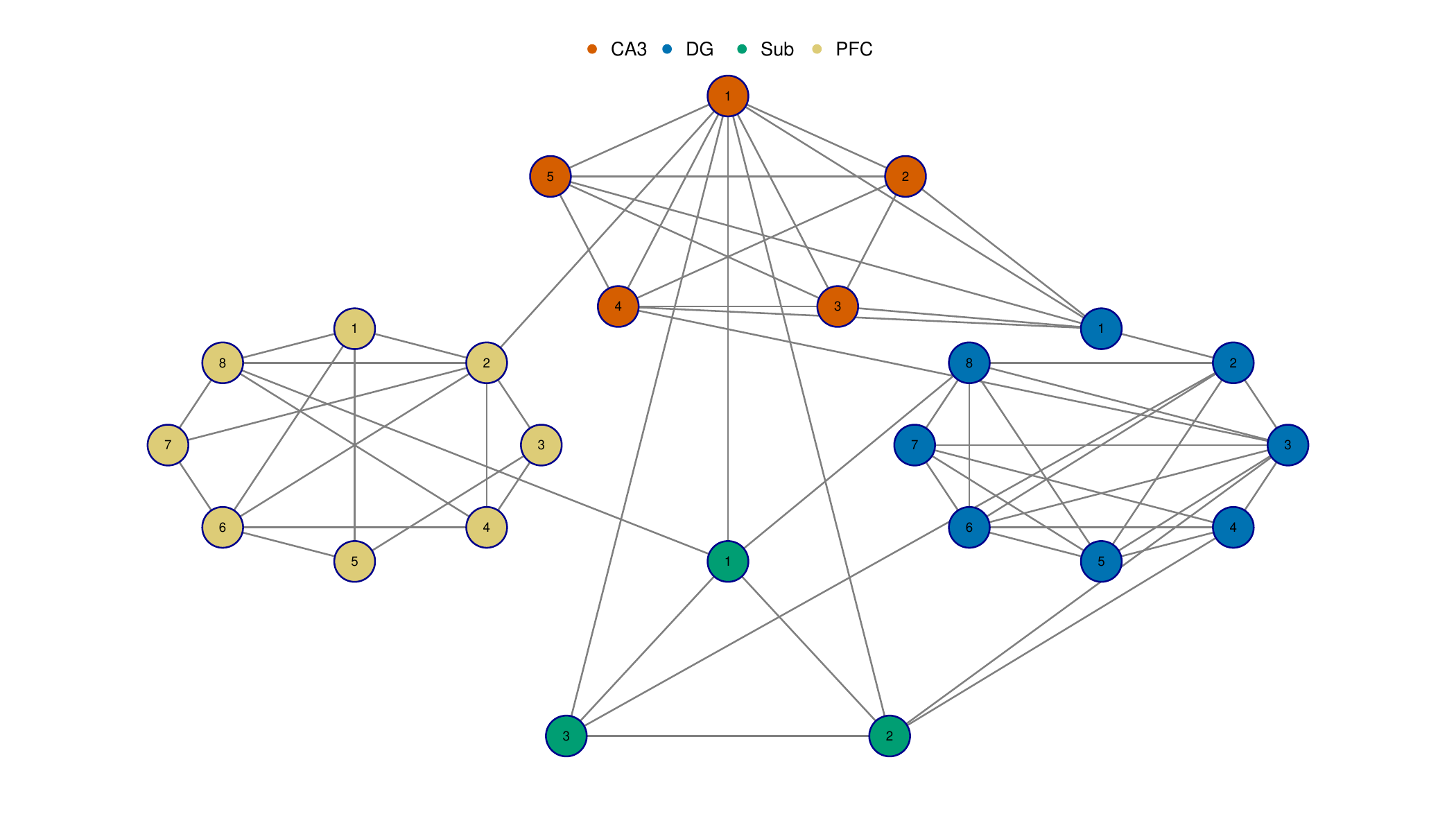}
\caption{Inferred torus graph structure from the phase-angle data using NC-Bayes.}
\label{fig:oneshot-real-data-NC-Bayes}
\end{figure}

\begin{figure}[t!]
\centering
\includegraphics[width=0.9\linewidth]{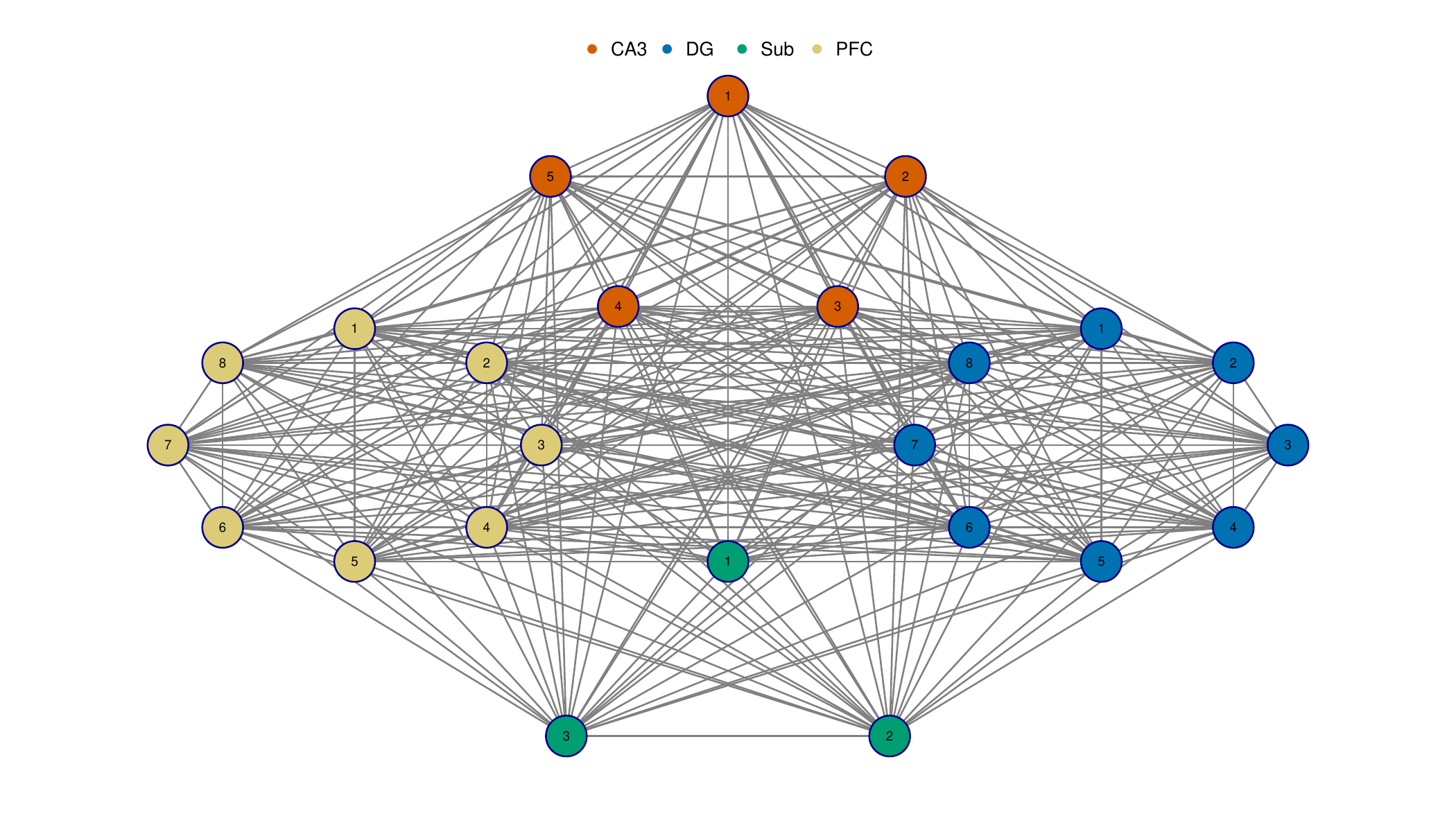}
\caption{Inferred torus graph structure from the phase-angle data using $\mathcal{H}$-Bayes with $w=0.2$.}
\label{fig:oneshot-real-data-Hscore}
\end{figure}

We further assess uncertainty in edge presence using posterior medians and 50\% credible intervals (Figure~\ref{fig:posterior-median-ci}).
For NC-Bayes, several edges selected based on the posterior median have credible intervals containing zero, indicating that the evidence for their presence becomes insufficient once posterior uncertainty is taken into account.
In contrast, for $\mathcal{H}$-Bayes, edge selections based on the median and credible intervals largely coincide, while the intervals become substantially narrower as $w$ increases.
Given that larger values of $w$ tend to produce denser estimated networks in our simulation studies, this behavior suggests that $\mathcal{H}$-Bayes may exhibit high posterior confidence even for potentially spurious edges.

\begin{figure}[t!]
\centering
\begin{tabular}{cc}
\includegraphics[width=0.5\linewidth]{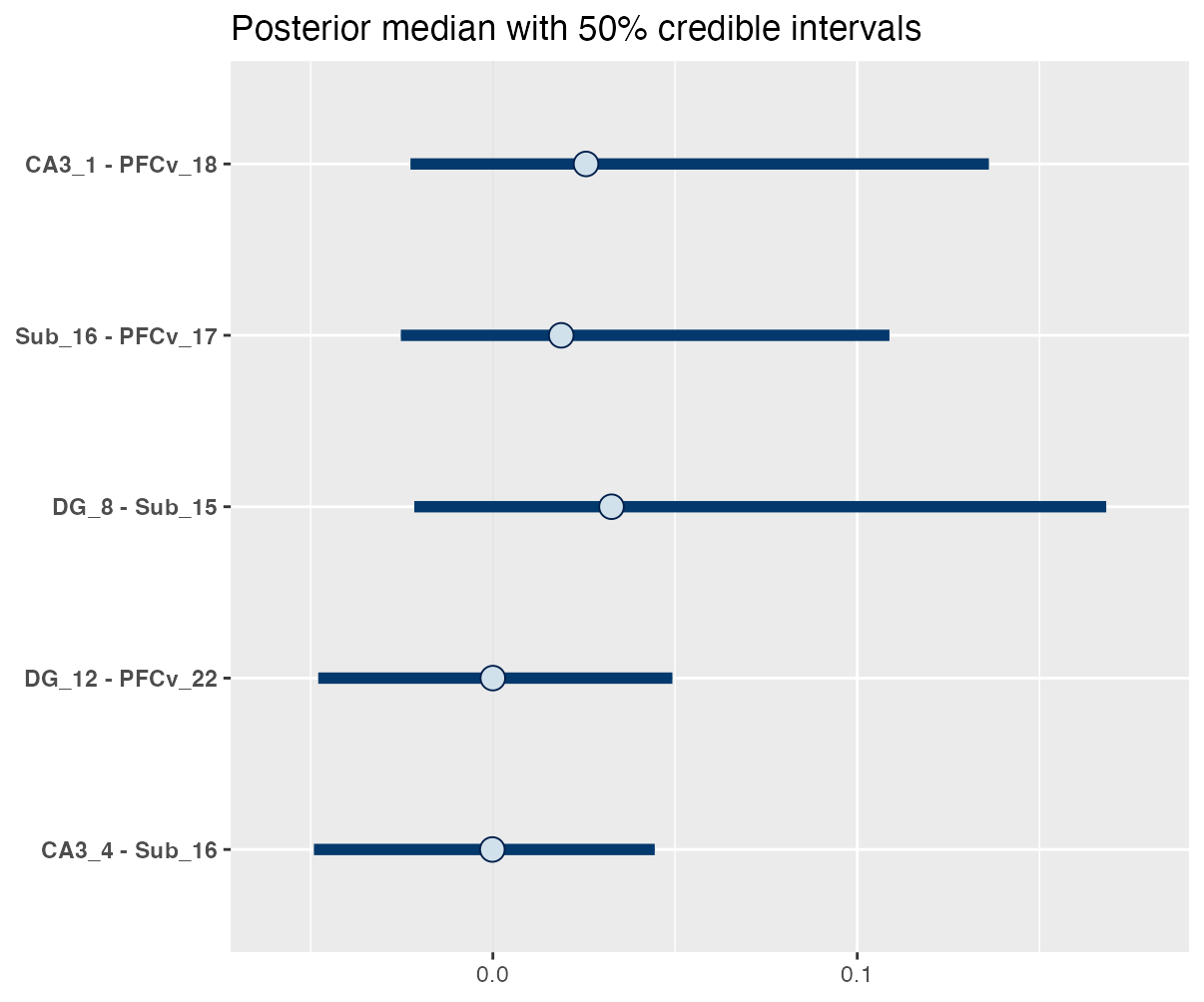} &
\includegraphics[width=0.5\linewidth]{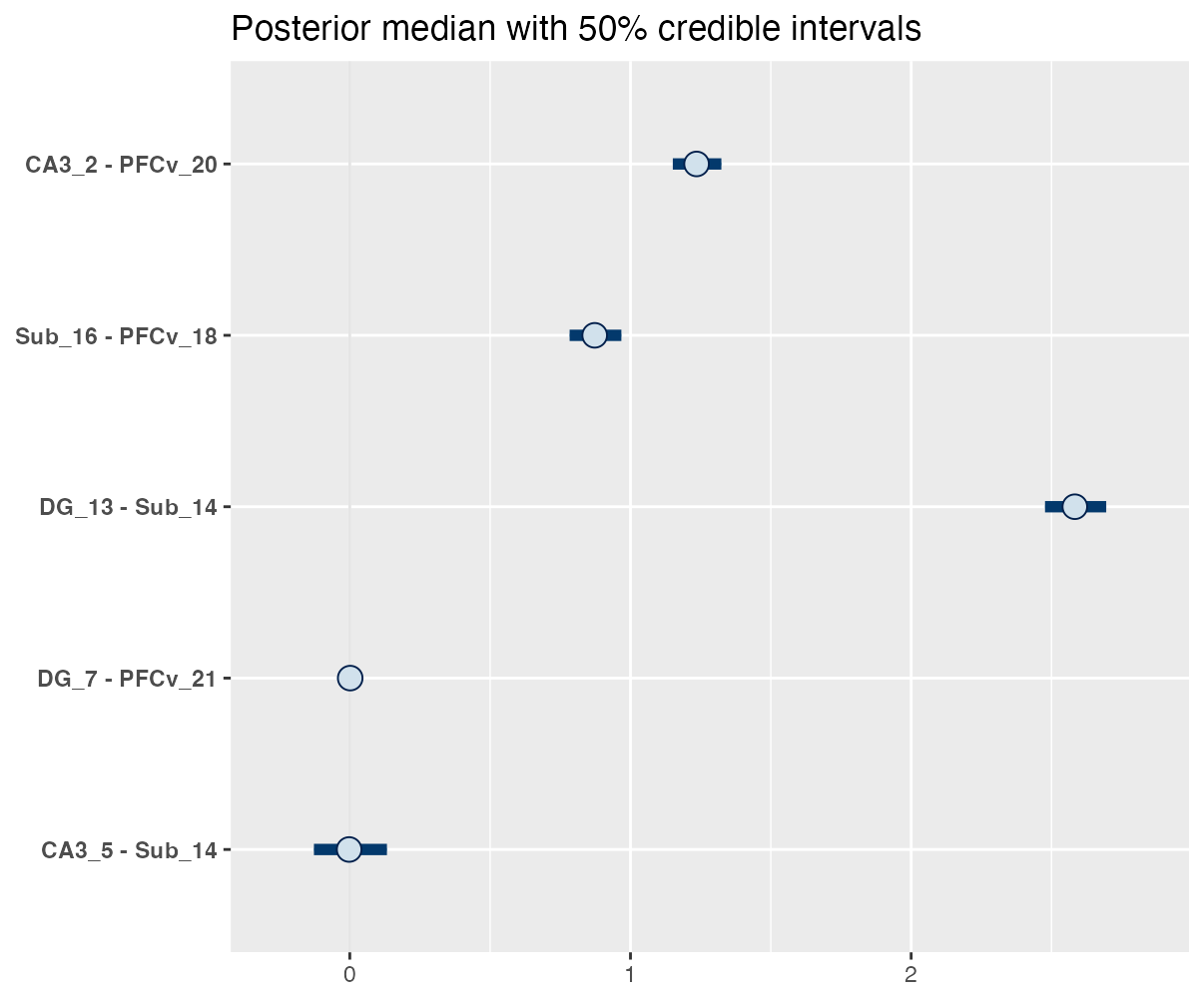} \\
\includegraphics[width=0.5\linewidth]{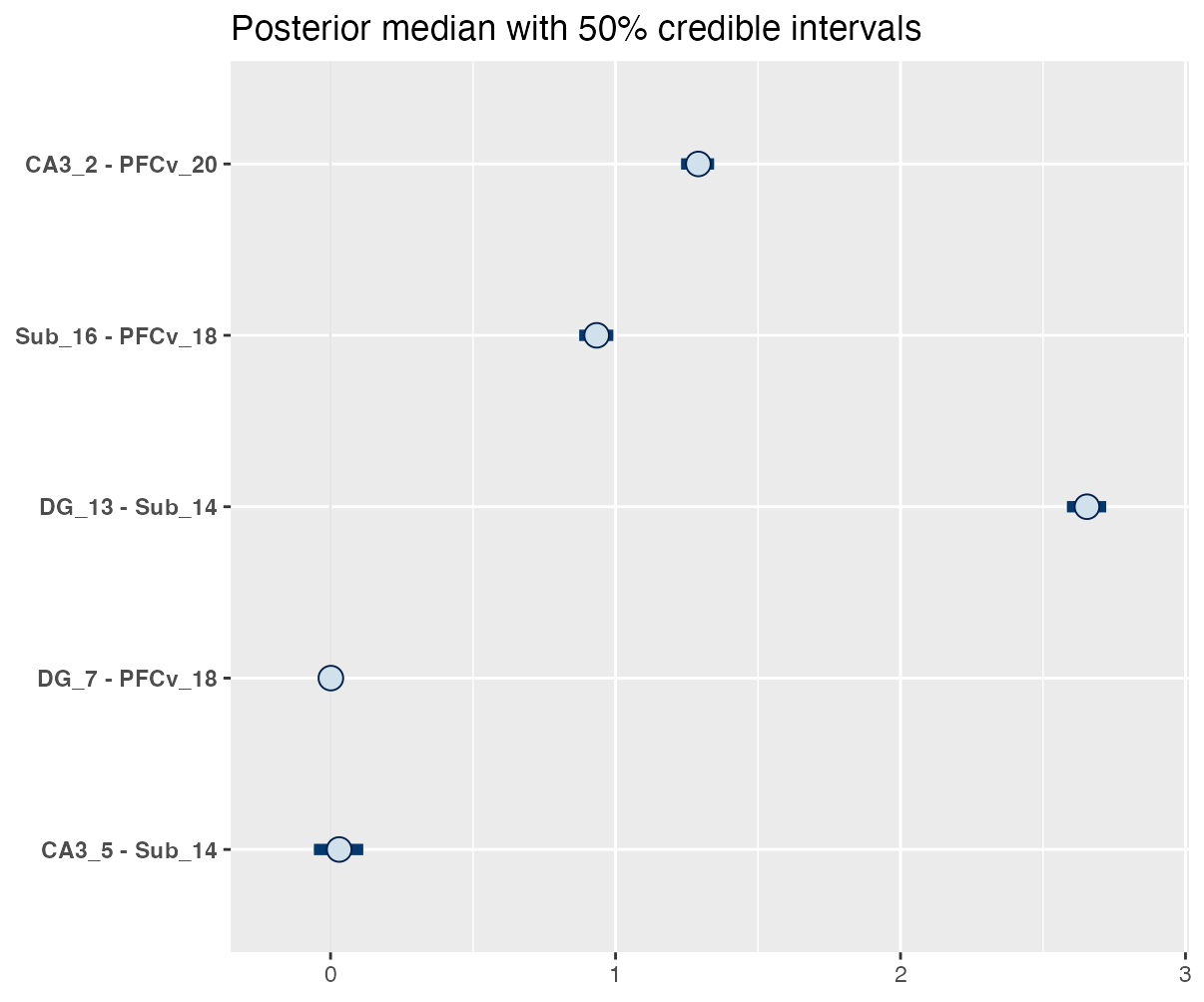} &
\includegraphics[width=0.5\linewidth]{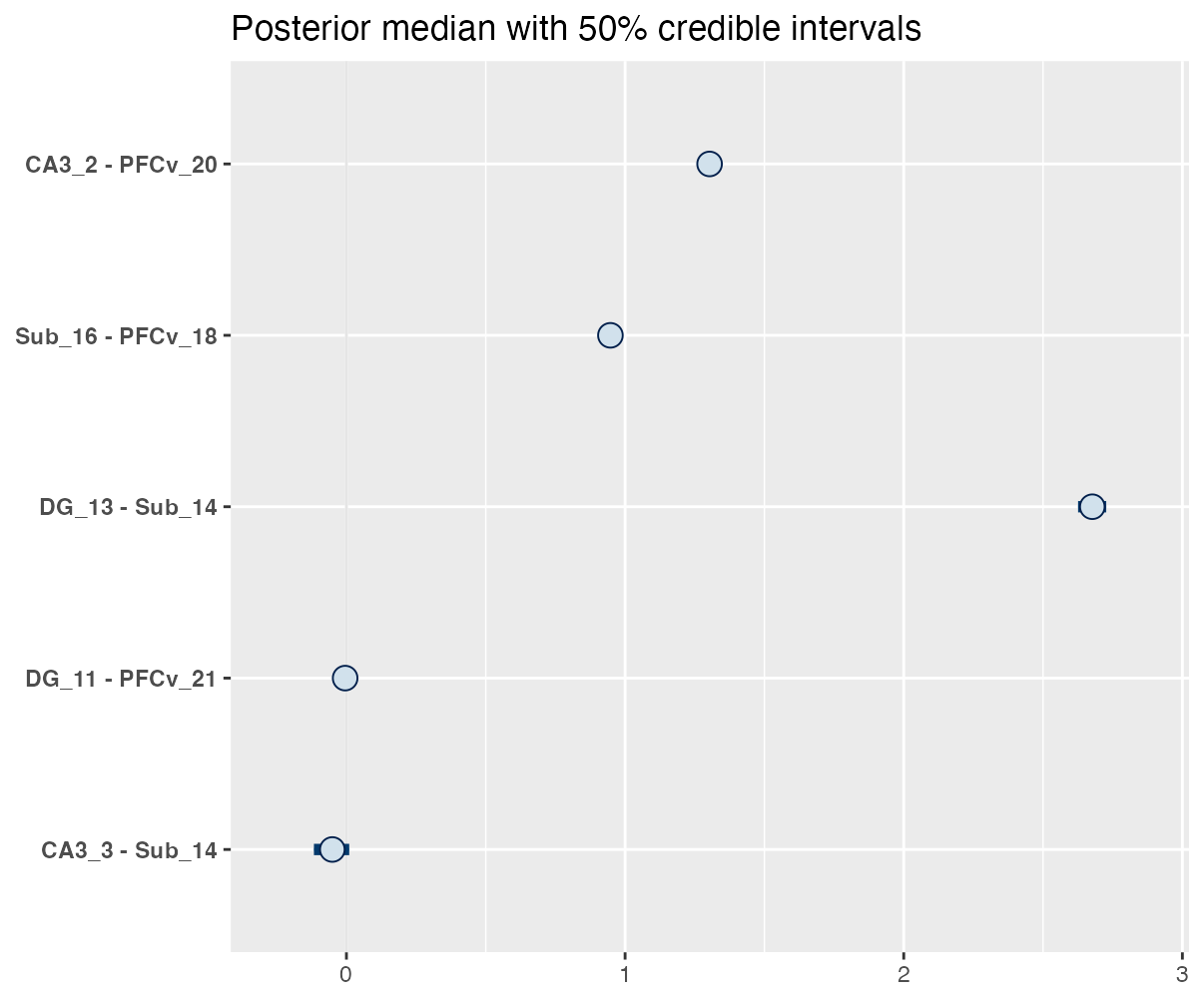} 
\end{tabular}
\caption{
Posterior medians and 50\% credible intervals for selected edge coefficients.
The panels show NC-Bayes (top left) and $\mathcal{H}$-Bayes with
$w=0.2$ (top right), $0.5$ (bottom left), and $1$ (bottom right).
Circles denote posterior medians and horizontal segments denote 50\% credible intervals.
The top three examples correspond to region pairs identified as associated in \cite{klein2020torus}, and the bottom two to pairs identified as not associated.
}
\label{fig:posterior-median-ci}
\end{figure}

\section{Discussion}
\label{sec:discussion}
While the main text focuses on the MCMC-based perspective of NC-Bayes, a further connection to likelihood-free inference is worth discussing. A growing body of work has explored classification-based approaches, including classifier-based ABC \citep{gutmann2018classification, wang2022approximate} and methods that plug likelihood-ratio estimators into the Metropolis--Hastings algorithm \citep{kaji2023metropolis}. These approaches share with NC-Bayes the motivation of avoiding direct evaluation of an intractable quantity, though they primarily target simulator-based models where the likelihood itself is unavailable, whereas NC-Bayes targets unnormalized models where the likelihood is known up to a normalizing constant. Unlike ABC, which introduces approximation error through summary statistics and tolerance thresholds, NC-Bayes requires neither, yielding a posterior free from these sources of approximation. Moreover, the P\'olya--Gamma data augmentation yields closed-form Gibbs conditionals, an analytical tractability that these likelihood-free approaches do not generally offer.

Despite these advantages, NC-Bayes is not without limitations. Since the NC-Bayes approach involves a logistic regression likelihood, it becomes considerably challenging in high-dimensional settings. In Section \ref{sec:nc-bayes-action2.2}, we introduced a method using the regularized horseshoe to address a problem that arises in high-dimensional logistic regression: regression coefficients identified as signals can take on extremely large values. This issue is mitigated by restricting the tail weight, and we employed this approach in our simulations and data analysis. However, reducing the tail weight creates the problem that the distribution struggles to escape from the neighborhood of zero, which may lead to excessively sparse estimates. Therefore, selecting an appropriate prior distribution that enables proper shrinkage estimation in high-dimensional logistic regression remains an important challenge.

A key direction for future work is the development of methods for selecting the noise distribution. 
While we have considered fixed noise distributions and practical adaptation schemes within the MCMC framework in this paper, a more systematic strategy is needed. According to \cite{chehab2022noise}, the theoretically optimal noise distribution need not coincide with the data distribution in the frequentist framework. Similarly, deriving a theoretically optimal noise selection method for the NC-Bayes approach remains an important direction. 
Additionally, while we estimate the data density during MCMC using importance resampling, an alternative approach would be to pre-estimate the data distribution using a nonparametric high-precision density estimator and then use it as the noise distribution \citep{uehara2020unnormalized}.

\section*{Acknowledgements}
This work is partially supported by the Japan Society for the Promotion of Science (KAKENHI) grant numbers 24K21420 and 25H00546. 
Naruki Sonobe is supported by JST Grant Number JPMJSP2123.
Takeru Matsuda was supported by JSPS KAKENHI Grant Numbers
21H05205, 22K17865 and 24K02951 and JST Grant Numbers JPMJMS2024 and JPMJAP25B1.

\vspace{1cm}
\bibliographystyle{chicago}
\bibliography{ref}

\newpage
\setcounter{equation}{0}
\setcounter{section}{0}
\setcounter{table}{0}
\setcounter{figure}{0}
\setcounter{page}{1}
\renewcommand{\thesection}{S\arabic{section}}
\renewcommand{\theequation}{S\arabic{equation}}
\renewcommand{\thetable}{S\arabic{table}}
\renewcommand{\thefigure}{S\arabic{figure}}

\vspace{1cm}
\begin{center}
{\LARGE
{\bf Supplementary Material for ``Contrastive Bayesian Inference for Unnormalized Models''}
}
\end{center}

This Supplementary Material provides proofs, details of the posterior sampling algorithms and additional simulation results.

\section{Proofs for the asymptotic properties of NC-Bayes}
\label{sec:s1}

This section establishes posterior concentration and a Bernstein--von Mises
theorem for the NC-Bayes posterior.  The proof strategy follows the
generalized-posterior analysis of \citet{matsubara2022robust}: uniform
convergence of the empirical loss is combined with a directly imposed
well-separatedness condition, while the Bernstein--von Mises theorem is
obtained by verifying the conditions of \citet{miller2021general}.  The
large-noise limit is treated separately and formalizes the connection between
noise-contrastive estimation and ordinary likelihood inference described by
\citet{gutmann2012noise}.

\subsection{Notation and assumptions}

Let $\lambda$ be a dominating $\sigma$-finite measure and define
$\mathcal X_0=\{x:p_0(x)>0\}$.  We assume that $q(x)>0$ for
$\lambda$-almost every $x\in\mathcal X_0$.  Thus, the noise distribution
may have support strictly larger than the support of the data-generating
distribution.
For every $\theta$ under consideration, assume that the unnormalized model
has the common support $\mathcal X_0$, in the sense that
$\tilde p(x\mid\theta)>0$ for $x\in\mathcal X_0$ and
$\tilde p(x\mid\theta)=0$ for $x\notin\mathcal X_0$.  Assume also that the
model is correctly specified at
$\gamma_0=(\theta_0^\top,\beta_0)^\top$:
$p_0(x)=e^{\beta_0}\tilde p(x\mid\theta_0)$ for
$\lambda$-almost every $x$.
Let $R_\theta=(I_p,0)\in\mathbb R^{p\times(p+1)}$ denote the coordinate projection from $\gamma=(\theta^\top,\beta)^\top$ to $\theta$, so that $R_\theta\gamma=\theta$ and $R_\theta\gamma_0=\theta_0$. For any measurable set $A\subseteq\mathbb R^p$, let
$
R_\theta^{-1}(A)=\{\gamma\in\mathbb R^{p+1}:R_\theta\gamma\in A\}
$
denote its inverse image under $R_\theta$.
Fix $\nu\in\mathbb Q_{>0}$.  We consider sample sizes $n$ for which
$m=\nu n\in\mathbb N$.  Let
$P_n=n^{-1}\sum_{i=1}^n\delta_{x_i}$ and
$Q_m=m^{-1}\sum_{i=n+1}^{n+m}\delta_{x_i}$, where
$x_1,\ldots,x_n\overset{\mathrm{i.i.d.}}\sim p_0$ and
$x_{n+1},\ldots,x_{n+m}\overset{\mathrm{i.i.d.}}\sim q$ independently.
We use the same notation for a probability measure and its associated
integration operator.  Thus, for any integrable measurable function $h$,
$P_nh=n^{-1}\sum_{i=1}^n h(x_i)$,
$Q_mh=m^{-1}\sum_{i=n+1}^{n+m}h(x_i)$,
$P_0h=\mathbb E_{p_0}[h(X)]$, and
$Qh=\mathbb E_q[h(X^\ast)]$.
For $x\in\mathcal X_0$ and
$\gamma=(\theta^\top,\beta)^\top$, define
$a(x\mid\gamma)=\log\tilde p(x\mid\theta)+\beta-\log q(x)$ and
$\rho_\nu(x\mid\gamma)=\{1+\nu e^{-a(x\mid\gamma)}\}^{-1}$.  Set
$\rho_\nu(x\mid\gamma)=0$ for $x\notin\mathcal X_0$.  Define
$\phi(x\mid\gamma)=\log\{1+\nu e^{-a(x\mid\gamma)}\}$ on
$\mathcal X_0$, with an arbitrary finite value on
$\mathcal X_0^c$, and define
$\psi(x\mid\gamma)=\log\{1+\nu^{-1}e^{a(x\mid\gamma)}\}$ on
$\mathcal X_0$ and $\psi(x\mid\gamma)=0$ on
$\mathcal X_0^c$.  Then
$f_{n,\nu}(\gamma)=P_n\phi(\cdot\mid\gamma)+\nu Q_m\psi(\cdot\mid\gamma)$
and
$f_\nu(\gamma)=P_0\phi(\cdot\mid\gamma)+\nu Q\psi(\cdot\mid\gamma)$.
Write $\rho_{0, \nu}(x)=\rho_\nu(x\mid\gamma_0)$ and
$s(x)=\nabla_\gamma a(x\mid\gamma)|_{\gamma=\gamma_0}$ on
$\mathcal X_0$, and set $s(x)=0$ on $\mathcal X_0^c$.

\begin{as}
\label{as:nc-bayes}
The following conditions are imposed.
\begin{enumerate}

\item[(A1)]
The parameter set
$\Gamma\subset\mathbb R^{p+1}$ is open, bounded and convex,
$\gamma_0\in\Gamma$, and
$R_\theta(\Gamma)\subset\Theta$.

\item[(A2)]
For a given $r_{\max}\in\{1,3\}$ and every integer
$0\le r\le r_{\max}$, the map
$\gamma\mapsto\nabla_\gamma^r a(x\mid\gamma)$ exists and is continuous on
$\Gamma$ for $\lambda$-almost every $x\in\mathcal X_0$.  Define
$M_r(x)=\sup_{\gamma\in\Gamma}
\|\nabla_\gamma^r a(x\mid\gamma)\|$ on $\mathcal X_0$ and set
$M_r(x)=0$ on $\mathcal X_0^c$.  The functions $M_r$ are measurable and
satisfy
$ P_0\Lambda_{r_{\max}}<\infty$ and
$Q\Lambda_{r_{\max}}<\infty$, where
$\Lambda_1=(1+M_0+M_1)^2$ and
$\Lambda_3=\Lambda_1+M_1^3+(1+M_1)M_2+M_3$.

\item[(A3)]
For the fixed value of $\nu$ under consideration:
\begin{enumerate}

\item[(i)]
Almost surely, there exists a measurable sequence
$\hat\gamma_n\in
\operatorname*{arg\,min}_{\gamma\in\Gamma}f_{n,\nu}(\gamma)$
for all sufficiently large $n$.

\item[(ii)]
The population loss is well separated at $\gamma_0$: for every
$\epsilon>0$,
$$
f_\nu(\gamma_0)
<
\inf_{\substack{\gamma\in\Gamma\\
\|\gamma-\gamma_0\|\ge\epsilon}}
f_\nu(\gamma),
$$
where the infimum of the empty set is understood to be $+\infty$.

\item[(iii)]
The matrix
$I_\nu=P_0
[\{1-\rho_{0, \nu}\}ss^\top]$ is nonsingular.

\end{enumerate}

\item[(A4)]
The prior is a probability distribution on $\Gamma$ with density $\pi$
with respect to Lebesgue measure.  The density $\pi$ is continuous at
$\gamma_0$ and satisfies $\pi(\gamma_0)>0$.
\end{enumerate}
\end{as}

\subsection{Preliminary results}

\begin{lem}
\label{lem:basic-ulln}
Under Assumptions~\textnormal{(A1)} and \textnormal{(A2)} with $r_{\max}\ge1$,
$$
S_n:=\sup_{\gamma\in\Gamma}
|f_{n,\nu}(\gamma)-f_\nu(\gamma)|
\overset{\text{a.s.}}\to0.
$$
Moreover, $f_\nu$ is continuous on $\Gamma$.
\end{lem}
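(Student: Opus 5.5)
The plan is to prove a uniform strong law of large numbers separately for the two empirical averages $P_n\phi(\cdot\mid\gamma)$ and $Q_m\psi(\cdot\mid\gamma)$. We then combine them through
\[
S_n\le\sup_{\gamma\in\Gamma}|P_n\phi(\cdot\mid\gamma)-P_0\phi(\cdot\mid\gamma)|+\nu\sup_{\gamma\in\Gamma}|Q_m\psi(\cdot\mid\gamma)-Q\psi(\cdot\mid\gamma)|,
\]
where $m=\nu n\to\infty$ along the admissible sequence of $n$. For each term we use the classical Glivenko--Cantelli argument for parametric classes: continuity in $\gamma$ for almost every $x$, an integrable envelope, and total boundedness of the index set (Le Cam / Jennrich-type ULLN; cf.\ the argument in \citet{miller2021general}). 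One technical point is that $\Gamma$ is open rather than compact. We would handle this by working on the closure $\bar\Gamma$, which is compact since $\Gamma$ is bounded. Because $M_0$ and $M_1$ are suprema over $\Gamma$, the maps $\gamma\mapsto a(x\mid\gamma)$ are Lipschitz on the convex set $\Gamma$ with constant $M_1(x)$. They therefore extend uniformly continuously to $\bar\Gamma$, and it suffices to prove the ULLN on $\bar\Gamma$.

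\textbf{Step 1: envelopes.} On $\mathcal X_0$ we have $\phi=\log(1+\nu e^{-a})\le \log(1+\nu)+|a|\le \log(1+\nu)+M_0(x)+|\log q(x)|\cdot 0$. Here $M_0$ is the supremum of $|a|$ itself, which already includes the $\log q$ term, so $|\phi(x\mid\gamma)|\le\log(1+\nu)+M_0(x)$. The same bound holds for $\psi$, and $\psi=0$ off $\mathcal X_0$. The function $\phi$ is never evaluated off $\mathcal X_0$ under $P_0$. Assumption (A2) gives $P_0\Lambda_1<\infty$ and $Q\Lambda_1<\infty$, with $\Lambda_1\ge 1+M_0$, so both envelopes are integrable.

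\textbf{Step 2: Lipschitz bracketing.} The functions $u\mapsto\log(1+\nu e^{-u})$ and $u\mapsto\log(1+\nu^{-1}e^{u})$ are $1$-Lipschitz, since their derivatives are $-(1-\rho)$ and $\rho$. Combined with the mean value theorem on the convex set $\Gamma$, this gives
\[
|\phi(x\mid\gamma)-\phi(x\mid\gamma')|\le M_1(x)\|\gamma-\gamma'\|,
\]
and the same bound for $\psi$. Since $M_1$ is integrable under both $P_0$ and $Q$, the standard bracketing construction applies: cover $\bar\Gamma$ by finitely many $\delta$-balls and use brackets $\phi(\cdot\mid\gamma_j)\pm\delta M_1$. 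Applying the ordinary SLLN to finitely many functions, together with the countable intersection of almost-sure events as $\delta\downarrow0$ along a sequence, yields almost-sure uniform convergence of each term. The $Q_m$ term uses the noise sample of size $m=\nu n$. For each fixed $n$ this is a triangular array, but if the noise draws are realized as the first $m$ elements of a single i.i.d.\ sequence, the SLLN applies along the subsequence $m=\nu n$. We expect this to be the main subtlety to state carefully. If the noise is regenerated for every $n$, we would fall back on a fourth-moment or Hoeffding-type bound plus Borel--Cantelli, but the coupled-sequence interpretation is the intended one.

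\textbf{Step 3: continuity of $f_\nu$.} The same Lipschitz bound gives
\[
|f_\nu(\gamma)-f_\nu(\gamma')|\le (P_0M_1+\nu QM_1)\|\gamma-\gamma'\|,
\]
so $f_\nu$ is in fact Lipschitz, and hence continuous, on $\Gamma$.
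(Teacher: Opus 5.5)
Your proposal is correct and follows essentially the same route as the paper. You use the same envelope bounds $\log(1+\nu)+M_0$ and $\log(1+\nu^{-1})+M_0$, the same mean-value Lipschitz bound with integrable constant $M_1$, and the same Lipschitz argument for continuity of $f_\nu$. The one difference is that you carry out the finite $\delta$-net step by hand, whereas the paper packages it as strong stochastic equicontinuity via the random Lipschitz constant $L_n=P_nM_1+\nu Q_mM_1$ and cites Davidson's Theorems~21.10 and~21.8. Your passage to $\bar\Gamma$ is unnecessary, since a bounded $\Gamma$ is already totally bounded and the net can be centred in $\Gamma$. Also, the paper simply applies the SLLN to the noise averages, implicitly adopting the same coupled-sequence reading you take.
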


\begin{proof}
For almost every $x$ and all $\gamma,\gamma'\in\Gamma$,
$|\phi(x\mid\gamma)|\le\log(1+\nu)+M_0(x)$ and
$|\psi(x\mid\gamma)|\le\log(1+\nu^{-1})+M_0(x)$.
These integrable envelope bounds ensure that the strong law of large numbers is applicable for each fixed $\gamma$.
Next, since $\Gamma$ is convex, the line segment joining
$\gamma$ and $\gamma'$ is contained in $\Gamma$.  The mean value
inequality and the derivative bounds yield
$|\phi(x\mid\gamma)-\phi(x\mid\gamma')|
\le M_1(x)\|\gamma-\gamma'\|$ and
$|\psi(x\mid\gamma)-\psi(x\mid\gamma')|
\le M_1(x)\|\gamma-\gamma'\|.$  These Lipschitz bounds are used below to upgrade pointwise convergence
to uniform convergence. 

For each fixed $\gamma\in\Gamma$, the strong law of large numbers yields
$f_{n,\nu}(\gamma)\overset{\text{a.s.}}\to f_\nu(\gamma)$.  Moreover,
$|f_{n,\nu}(\gamma)-f_{n,\nu}(\gamma')|
\le L_n\|\gamma-\gamma'\|,$ where
$L_n=P_nM_1+\nu Q_mM_1$ and
$L_n\overset{\text{a.s.}}\to L:=P_0M_1+\nu QM_1<\infty$.
Following the argument used in the proof of
\citet[Lemma~2]{matsubara2022robust}, the preceding random Lipschitz bound implies that $(f_{n,\nu})_{n\ge1}$ is strongly stochastically
equicontinuous on $\Gamma$ \citep[Theorem~21.10]{davidson1994stochastic}.  Therefore, by
\citet[Theorem~21.8]{davidson1994stochastic},
$
\sup_{\gamma\in\Gamma}
|f_{n,\nu}(\gamma)-f_\nu(\gamma)|
\overset{\text{a.s.}}\to0.
$

Finally, the corresponding deterministic bound
$
|f_\nu(\gamma)-f_\nu(\gamma')|
\le L\|\gamma-\gamma'\|
$
shows that $f_\nu$ is Lipschitz continuous on $\Gamma$.
\end{proof}

\begin{lem}
\label{lem:deriv-ident}
Suppose Assumptions \textnormal{(A1)} and \textnormal{(A2)} hold with $r_{\max}=3$.  Then
$f_\nu$ is twice continuously differentiable in a neighborhood of
$\gamma_0$, with
$
\nabla_\gamma f_\nu(\gamma_0)=0,~
\nabla_\gamma^2f_\nu(\gamma_0)=I_\nu.
$
Moreover, for every sufficiently small closed ball
$\overline B\subset\Gamma$ centered at $\gamma_0$,
\begin{equation}
\label{eq:hessian-ulln}
\sup_{\gamma\in\overline B}
\|\nabla_\gamma^2f_{n,\nu}(\gamma)
-\nabla_\gamma^2f_\nu(\gamma)\|
\overset{\text{a.s.}}\to0.
\end{equation}
\end{lem}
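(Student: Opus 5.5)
The plan is to compute the first two derivatives of the per-observation losses explicitly, justify differentiation under $P_0$ and $Q$ by dominated convergence using the envelopes in (A2) with $r_{\max}=3$, and then obtain $\nabla_\gamma f_\nu(\gamma_0)=0$ and $\nabla_\gamma^2 f_\nu(\gamma_0)=I_\nu$ from a single change-of-measure identity linking $p_0$ and $q$ at $\gamma_0$. The uniform convergence \eqref{eq:hessian-ulln} will then follow by repeating the argument of Lemma~\ref{lem:basic-ulln} one derivative higher.

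\textbf{Step 1: derivatives and envelopes.} On $\mathcal X_0$, write $\rho=\rho_\nu(x\mid\gamma)$, so that $1-\rho=\nu e^{-a}/(1+\nu e^{-a})$. A direct computation gives
\begin{align*}
\nabla_\gamma\phi &= -(1-\rho)\nabla_\gamma a, & \nabla_\gamma\psi &= \rho\,\nabla_\gamma a,\\
\nabla^2_\gamma\phi &= \rho(1-\rho)\nabla_\gamma a\nabla_\gamma a^\top-(1-\rho)\nabla^2_\gamma a, & \nabla^2_\gamma\psi &= \rho(1-\rho)\nabla_\gamma a\nabla_\gamma a^\top+\rho\,\nabla^2_\gamma a.
\end{align*}
Off $\mathcal X_0$, $\psi\equiv0$ and all its derivatives vanish. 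The third derivatives are built from $\rho$, $1-\rho$ and products of their derivatives, all of which lie in $[0,1]$ up to constants, together with $\nabla a\otimes\nabla a\otimes\nabla a$, $\nabla a\otimes\nabla^2a$ and $\nabla^3 a$. Hence, uniformly in $\gamma\in\Gamma$,
\begin{itemize}
\item first derivatives are bounded by $M_1$;
\item second derivatives are bounded by $M_1^2+M_2$;
\item third derivatives are bounded by $C(M_1^3+M_1M_2+M_3)$.
\end{itemize}
All of these are dominated by a constant multiple of $\Lambda_3$, using $M_1^2\le\Lambda_1$ and $M_1M_2\le(1+M_1)M_2$. By (A2) they are therefore $P_0$- and $Q$-integrable. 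The mean value theorem on the convex set $\Gamma$ and dominated convergence then let us differentiate $f_\nu=P_0\phi+\nu Q\psi$ twice under the integral on a ball around $\gamma_0$, with continuous second derivative. Continuity of the second derivative follows from continuity of $\nabla^r a$ and the integrable envelope; alternatively, it follows from the Lipschitz bound supplied by the third-derivative envelope.

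\textbf{Step 2: the key identity.} Correct specification gives $p_0=e^{a(\cdot\mid\gamma_0)}q$ on $\mathcal X_0$. Writing $a_0=a(x\mid\gamma_0)$, we get $p_0(1-\rho_{0,\nu})=\nu q\,e^{a_0}/(e^{a_0}+\nu)=\nu q\,\rho_{0,\nu}$. Since every $\psi$-derivative vanishes off $\mathcal X_0$, this yields
\[
\nu Q[\rho_{0,\nu}h]=P_0[(1-\rho_{0,\nu})h]
\]
for every integrable $h$ supported on $\mathcal X_0$.

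\emph{Gradient.} Applying the identity with $h=s$ shows
$\nabla f_\nu(\gamma_0)=-P_0[(1-\rho_{0,\nu})s]+\nu Q[\rho_{0,\nu}s]=0$.

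\emph{Hessian.} Applying it with $h=\nabla^2a$ cancels the $\nabla^2a$ terms. Applying it with $h=(1-\rho_{0,\nu})ss^\top$ turns $\nu Q[\rho_{0,\nu}(1-\rho_{0,\nu})ss^\top]$ into $P_0[(1-\rho_{0,\nu})^2ss^\top]$. Adding this to $P_0[\rho_{0,\nu}(1-\rho_{0,\nu})ss^\top]$ gives $P_0[(1-\rho_{0,\nu})ss^\top]=I_\nu$.

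\textbf{Step 3: uniform convergence of the Hessian.} Fix a closed ball $\overline B\subset\Gamma$ centered at $\gamma_0$. For each fixed $\gamma$, the strong law applied to the second-derivative expressions, which have integrable envelopes, gives pointwise a.s.\ convergence of $\nabla^2 f_{n,\nu}(\gamma)$. The third-derivative envelope gives the random Lipschitz bound
\[
\|\nabla^2f_{n,\nu}(\gamma)-\nabla^2f_{n,\nu}(\gamma')\|\le L_n\|\gamma-\gamma'\|,
\]
where $L_n=C(P_n+\nu Q_m)(M_1^3+M_1M_2+M_3)$ converges a.s.\ to a finite limit. Exactly as in Lemma~\ref{lem:basic-ulln}, the results of \citet{davidson1994stochastic} (Theorems~21.10 and~21.8), applied entrywise on the compact set $\overline B$, upgrade pointwise to uniform a.s.\ convergence.

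I expect the main obstacle to be the bookkeeping in Step~1: checking that the third-order terms are genuinely dominated by $\Lambda_3$, in particular the mixed term $M_1M_2$, and handling the support issue when $q$ charges $\mathcal X_0^c$. The identity in Step~2 must be applied only on $\mathcal X_0$, where the convention $\psi=0$ off $\mathcal X_0$ makes the $Q$-integrals reduce correctly.
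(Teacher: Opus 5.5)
Your proposal is correct and follows essentially the same route as the paper. It uses the same explicit derivative formulas with envelopes dominated by a constant multiple of $\Lambda_3$, and differentiates $f_\nu$ under the integral by dominated convergence. It obtains the gradient and Hessian at $\gamma_0$ from the identity $\{1-\rho_{0,\nu}\}p_0=\nu\rho_{0,\nu}q$, applied with $s$, $\nabla_\gamma^2 a$ and $(1-\rho_{0,\nu})ss^\top$. Finally, it gets uniform Hessian convergence from the pointwise strong law plus the third-derivative Lipschitz bound, fed into Davidson's Theorems 21.10 and 21.8. The only cosmetic difference is that you state the envelope bounds uniformly over $\Gamma$ rather than on an auxiliary open ball $B_+$ around $\gamma_0$. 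This is harmless, since each $M_r$ is already a supremum over the convex set $\Gamma$.
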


\begin{proof}
Write $\rho=\rho_\nu(x\mid\gamma)$, and let all derivatives below
be taken with respect to $\gamma$.  On $\mathcal X_0$,
$\nabla_\gamma\phi=-(1-\rho)\nabla_\gamma a$,
$\nabla_\gamma\psi=\rho\nabla_\gamma a$,
$\nabla_\gamma^2\phi=\rho(1-\rho)\nabla_\gamma a\nabla_\gamma a^\top
-(1-\rho)\nabla_\gamma^2a$, and
$\nabla_\gamma^2\psi=\rho(1-\rho)\nabla_\gamma a\nabla_\gamma a^\top
+\rho\nabla_\gamma^2a$.  On $\mathcal X_0^c$, $\psi(\cdot\mid\gamma)$
and all of its $\gamma$-derivatives are zero.  Since
$P_0(\mathcal X_0^c)=0$, the arbitrary finite extension of
$\phi(\cdot\mid\gamma)$ to $\mathcal X_0^c$ does not affect
$P_0\phi$ or $P_n\phi$ almost surely.

Fix a sufficiently small closed ball $\overline B$ centered at
$\gamma_0$.  By Assumption~\textnormal{(A1)}, there exists an open ball
$B_+$ such that $\overline B\subset B_+\subset\Gamma$.  For
$\ell\in\{\phi,\psi\}$, the preceding derivative formulas, the
third-order chain rule, and $0\le\rho\le1$ imply that there exists a
deterministic constant $C<\infty$, independent of $x$, $\gamma$, and
$n$, such that
$\sup_{\gamma\in B_+}\|\nabla_\gamma^r\ell(x\mid\gamma)\|
\le C\Lambda_3(x)$ for $r=1,2,3$ and $x\in\mathcal X_0$, where
$\Lambda_3$ is the envelope defined in
Assumption~\textnormal{(A2)}.  For $\ell=\psi$, the same bound holds
on $\mathcal X_0^c$ because its derivatives vanish there.

By Assumption~\textnormal{(A2)},
$P_0\Lambda_3+\nu Q\Lambda_3<\infty$.  Hence, for $r=1,2$,
differentiation under the integral sign gives
$\nabla_\gamma^r f_\nu(\gamma)
=P_0\nabla_\gamma^r\phi(\cdot\mid\gamma)
+\nu Q\nabla_\gamma^r\psi(\cdot\mid\gamma)$ for every
$\gamma\in B_+$.  Since the pointwise derivatives are continuous in
$\gamma$ and dominated by the same integrable envelope, the dominated
convergence theorem shows that these derivatives are continuous on
$B_+$.  Thus, $f_\nu\in C^2(B_+)$ and is therefore twice continuously
differentiable in a neighborhood of $\gamma_0$.

Moreover,
$\sup_{\gamma\in B_+}|\nabla_\gamma^3 f_{n,\nu}(\gamma)|
\le C{P_n\Lambda_3+\nu Q_m\Lambda_3}$ almost surely.  By the strong
law of large numbers,
$P_n\Lambda_3+\nu Q_m\Lambda_3
\overset{\text{a.s.}}\to
P_0\Lambda_3+\nu Q\Lambda_3<\infty$, and therefore\\
$\limsup_{n\to\infty}\sup_{\gamma\in B_+}
|\nabla_\gamma^3 f_{n,\nu}(\gamma)|<\infty$ almost surely.
For each fixed $\gamma\in B_+$, the strong law of large numbers,
applied componentwise, gives
$\nabla_\gamma^2 f_{n,\nu}(\gamma)
\overset{\text{a.s.}}\to
P_0\nabla_\gamma^2\phi(\cdot\mid\gamma)
+\nu Q\nabla_\gamma^2\psi(\cdot\mid\gamma)
=\nabla_\gamma^2 f_\nu(\gamma)$.
Moreover, by the mean value inequality, each component of
$\nabla_\gamma^2 f_{n,\nu}$ is Lipschitz on $\overline B$ with a random
Lipschitz constant bounded by
$\sup_{\gamma\in B_+}
|\nabla_\gamma^3 f_{n,\nu}(\gamma)|$, whose limit superior is finite
almost surely.  Thus, the Hessian components are
strongly stochastically equicontinuous \citep[Theorem~21.10]{davidson1994stochastic}.
Applying
\citet[Theorem~21.8]{davidson1994stochastic} componentwise and using
the equivalence of norms in finite dimensions yields
$
\sup_{\gamma\in\overline B}
\|\nabla_\gamma^2 f_{n,\nu}(\gamma)
-\nabla_\gamma^2 f_\nu(\gamma)\|
\overset{\text{a.s.}}\to0.
$

Correct specification gives
$\rho_{0, \nu}(x)=p_0(x)/{p_0(x)+\nu q(x)}$ for $\lambda$-almost every
$x\in\mathcal X_0$.  Since both $p_0$ and $\rho_{0,\nu}$ vanish on
$\mathcal X_0^c$, it follows that
\begin{equation}
\label{eq:S-density-id}
{1-\rho_{0, \nu}(x)}p_0(x)=\nu\rho_{0, \nu}(x)q(x)
\end{equation}
for $\lambda$-almost every $x$.
Multiplying \eqref{eq:S-density-id} by $s(x)=\nabla_\gamma a(x\mid\gamma_0)$ and integrating shows
that the two score terms cancel.  Multiplying it on $\mathcal X_0$ by
$\nabla_\gamma^2a(x\mid\gamma_0)$ and integrating shows that the
Hessian terms containing $\nabla_\gamma^2a(x\mid\gamma_0)$ cancel.
Finally, multiplying it by $(1-\rho_{0, \nu}(x))s(x)s(x)^\top$ and integrating
gives
$\nu Q[\rho_{0, \nu}(1-\rho_{0, \nu})ss^\top]
=P_0[(1-\rho_{0, \nu})^2ss^\top]$.  Therefore,
$\nabla_\gamma f_\nu(\gamma_0)
=-P_0[(1-\rho_{0, \nu})s]+\nu Q[\rho_{0, \nu}s]=0$ and
$\nabla_\gamma^2f_\nu(\gamma_0)
=P_0[\rho_{0, \nu}(1-\rho_{0, \nu})ss^\top]
+\nu Q[\rho_{0, \nu}(1-\rho_{0, \nu})ss^\top]
=P_0[(1-\rho_{0, \nu})ss^\top]=I_\nu$.
\end{proof}

\begin{lem}
\label{lem:Mclt}
Suppose Assumptions~\textnormal{(A1)}--\textnormal{(A3)} hold with
$r_{\max}=3$, and let $\hat\gamma_n$ be the measurable minimizer in
Assumption~\textnormal{(A3)(i)}.  Then
$\hat\gamma_n\overset{\text{a.s.}}\to\gamma_0$ and
$$
\sqrt n(\hat\gamma_n-\gamma_0)
\overset{d}\to
N(0,I_\nu^{-1}J_\nu I_\nu^{-1}),
$$
where
$\mu_\nu=P_0[(1-\rho_{0, \nu})s]$ and
$J_\nu=I_\nu-\{(1+\nu)/\nu\}\mu_\nu\mu_\nu^\top$.  Moreover,
$0\preceq J_\nu\preceq I_\nu$.
\end{lem}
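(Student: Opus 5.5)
The plan has three parts: strong consistency by the standard argmin argument, a one-step Taylor expansion of the empirical score around $\gamma_0$, and a two-sample central limit theorem for the score at $\gamma_0$ whose covariance is computed in closed form using identity \eqref{eq:S-density-id}.

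\emph{Consistency.} Fix $\epsilon>0$ and set $\delta=\inf_{\gamma\in\Gamma,\|\gamma-\gamma_0\|\ge\epsilon}f_\nu(\gamma)-f_\nu(\gamma_0)$, which is positive by (A3)(ii).
\begin{itemize}
\item Lemma~\ref{lem:basic-ulln} gives $S_n\to0$ almost surely, so on the probability-one event where this holds, $S_n<\delta/2$ for all large $n$.
\item For such $n$, any $\gamma$ with $\|\gamma-\gamma_0\|\ge\epsilon$ satisfies $f_{n,\nu}(\gamma)>f_\nu(\gamma_0)+\delta/2>f_{n,\nu}(\gamma_0)$.
\item Hence the minimizer $\hat\gamma_n$ from (A3)(i) lies in the $\epsilon$-ball. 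Intersecting over rational $\epsilon$ gives $\hat\gamma_n\to\gamma_0$ almost surely.
\end{itemize}

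\emph{Linearization.} Since $\Gamma$ is open, $\hat\gamma_n$ is an interior minimizer, so $\nabla f_{n,\nu}(\hat\gamma_n)=0$ once $f_{n,\nu}$ is differentiable near $\gamma_0$. Differentiability follows from the envelope bounds of Lemma~\ref{lem:deriv-ident}.
\begin{itemize}
\item Choose a small closed ball $\overline B$ about $\gamma_0$. By consistency, $\hat\gamma_n\in\overline B$ eventually.
\item Use the integral form of Taylor's theorem, to avoid componentwise mean-value issues:
\[
0=\nabla f_{n,\nu}(\gamma_0)+H_n(\hat\gamma_n-\gamma_0),\qquad H_n=\int_0^1\nabla^2 f_{n,\nu}\big(\gamma_0+t(\hat\gamma_n-\gamma_0)\big)\,dt .
\]
\item The uniform Hessian convergence \eqref{eq:hessian-ulln}, continuity of $\nabla^2 f_\nu$ at $\gamma_0$, and $\hat\gamma_n\to\gamma_0$ together give $H_n\to\nabla^2f_\nu(\gamma_0)=I_\nu$ almost surely.
\item Since $I_\nu$ is nonsingular by (A3)(iii), $H_n$ is invertible for large $n$. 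Therefore
\[
\sqrt n(\hat\gamma_n-\gamma_0)=-H_n^{-1}\sqrt n\,\nabla f_{n,\nu}(\gamma_0),
\]
and by Slutsky it suffices to show $\sqrt n\,\nabla f_{n,\nu}(\gamma_0)\overset{d}\to N(0,J_\nu)$.
\end{itemize}
I expect this step to be the main technical point: it requires twice-differentiability of $f_{n,\nu}$ near $\gamma_0$ and the combination of uniform Hessian convergence with consistency. Both are already supplied by Lemma~\ref{lem:deriv-ident}, so what remains is bookkeeping.

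\emph{CLT for the score.} From the derivative formulas in Lemma~\ref{lem:deriv-ident},
\[
\nabla f_{n,\nu}(\gamma_0)=-P_n[(1-\rho_{0,\nu})s]+\nu Q_m[\rho_{0,\nu}s],
\]
which is a sum of two independent i.i.d. averages.
\begin{itemize}
\item Second moments are finite because $\|s\|\le M_1$ and $P_0\Lambda_1<\infty$, $Q\Lambda_1<\infty$.
\item The means cancel, since $P_0[(1-\rho_{0,\nu})s]=\nu Q[\rho_{0,\nu}s]=\mu_\nu$ by \eqref{eq:S-density-id}.
\item Because $m=\nu n$, the multivariate CLT applied to each sample gives the limiting covariance
\[
\operatorname{Var}_{P_0}[(1-\rho_{0,\nu})s]+\nu\operatorname{Var}_Q[\rho_{0,\nu}s].
\]
\end{itemize}
Expanding the two variances gives
\[
P_0[(1-\rho_{0,\nu})^2ss^\top]-\mu_\nu\mu_\nu^\top+\nu Q[\rho_{0,\nu}^2ss^\top]-\nu^{-1}\mu_\nu\mu_\nu^\top .
\]
Multiplying \eqref{eq:S-density-id} by $\rho_{0,\nu}ss^\top$ and integrating gives $\nu Q[\rho_{0,\nu}^2ss^\top]=P_0[\rho_{0,\nu}(1-\rho_{0,\nu})ss^\top]$. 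The first and third terms then combine to $P_0[(1-\rho_{0,\nu})ss^\top]=I_\nu$, so the covariance equals
\[
I_\nu-\frac{1+\nu}{\nu}\mu_\nu\mu_\nu^\top=J_\nu .
\]
Slutsky's lemma now yields $\sqrt n(\hat\gamma_n-\gamma_0)\overset{d}\to N(0,I_\nu^{-1}J_\nu I_\nu^{-1})$.

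\emph{Ordering.} $J_\nu\succeq0$ because it is a sum of two covariance matrices. $J_\nu\preceq I_\nu$ because $I_\nu-J_\nu=\frac{1+\nu}{\nu}\mu_\nu\mu_\nu^\top\succeq0$.
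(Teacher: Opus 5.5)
Your proposal is correct and follows essentially the same route as the paper's proof. Both use the same three steps: strong consistency from uniform convergence of $f_{n,\nu}$ plus well-separatedness, the integral-form Taylor expansion of the score with $\overline H_n\to I_\nu$ and Slutsky, and the two-sample CLT whose covariance $J_\nu$ is computed from \eqref{eq:S-density-id}. The only difference is that you linearize before deriving the score limit, whereas the paper does the reverse; this ordering is immaterial.
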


\begin{proof}
Let $A_\epsilon=\{\gamma\in\Gamma:\|\gamma-\gamma_0\|\ge\epsilon\}$.
By Assumption~\textnormal{(A3)(ii)}, for every $\epsilon>0$,
$c_\epsilon=\inf_{\gamma\in A_\epsilon}
\{f_\nu(\gamma)-f_\nu(\gamma_0)\}>0$.
On the event $\{\hat\gamma_n\in A_\epsilon\}$, the definition of
$c_\epsilon$ gives
$f_\nu(\hat\gamma_n)\ge f_\nu(\gamma_0)+c_\epsilon$.
Since $\hat\gamma_n$ minimises $f_{n,\nu}$, we also have
$f_{n,\nu}(\hat\gamma_n)\le f_{n,\nu}(\gamma_0)$.  Recalling that
$S_n=\sup_{\gamma\in\Gamma}
|f_{n,\nu}(\gamma)-f_\nu(\gamma)|$, it follows that
$f_\nu(\gamma_0)+c_\epsilon
\le f_\nu(\hat\gamma_n)
\le f_{n,\nu}(\hat\gamma_n)+S_n
\le f_{n,\nu}(\gamma_0)+S_n
\le f_\nu(\gamma_0)+2S_n$.
Consequently,
$\{\|\hat\gamma_n-\gamma_0\|\ge\epsilon\}
\subset\{2S_n\ge c_\epsilon\}$.
By Lemma~\ref{lem:basic-ulln},
$S_n\overset{\text{a.s.}}\to0$.  Thus, almost surely,
$\|\hat\gamma_n-\gamma_0\|<\epsilon$ for all sufficiently large $n$.
Applying this argument to $\epsilon=1/k$, $k\in\mathbb N$, yields
$\hat\gamma_n\overset{\text{a.s.}}\to\gamma_0$.

Since $\Gamma$ is open and $\gamma_0\in\Gamma$, there exists
$\delta>0$ such that
$\overline B(\gamma_0,\delta)\subset\Gamma$.  By consistency,
$\hat\gamma_n\in B(\gamma_0,\delta)$ for all sufficiently large $n$,
almost surely.  Hence $\hat\gamma_n$ is eventually an interior minimizer
of $f_{n,\nu}$, and therefore
$\nabla_\gamma f_{n,\nu}(\hat\gamma_n)=0$ eventually, almost surely.
Define $g_1=-(1-\rho_{0, \nu})s$ and $g_0=\rho_{0, \nu}s$.  By
\eqref{eq:S-density-id},
$P_0[(1-\rho_{0, \nu})s]=\nu Q[\rho_{0, \nu}s]$.  Since
$\mu_\nu=P_0[(1-\rho_{0, \nu})s]$, we have
$P_0g_1=-\mu_\nu$ and $Qg_0=\mu_\nu/\nu$, and hence
$P_0g_1+\nu Qg_0=0$.  It follows that
\begin{equation}
\label{eq:score-decomposition}
\sqrt n\,\nabla_\gamma f_{n,\nu}(\gamma_0)
=
\sqrt n(P_n-P_0)g_1
+
\nu\sqrt n(Q_m-Q)g_0.
\end{equation}
The moment conditions in Assumption~\textnormal{(A2)} imply that
$P_0\|g_1\|^2<\infty$ and $Q\|g_0\|^2<\infty$.  Therefore, by the
multivariate central limit theorem,
$\sqrt n(P_n-P_0)g_1
\overset{d}\to
N(0,\operatorname{Var}_{p_0}(g_1))$.
Since $m=\nu n$, we have
$\nu\sqrt n(Q_m-Q)g_0
=\sqrt{\nu}\sqrt m(Q_m-Q)g_0$, and hence
$\nu\sqrt n(Q_m-Q)g_0
\overset{d}\to
N(0,\nu\operatorname{Var}_q(g_0))$.
The data and noise samples are independent, so the two terms on the
right-hand side of \eqref{eq:score-decomposition} converge jointly to
independent Gaussian random vectors.  Consequently,
\begin{equation}
\label{eq:gradient-clt}
\sqrt n\,\nabla_\gamma f_{n,\nu}(\gamma_0)
\overset{d}\to
N(0,J_\nu),
\qquad
J_\nu
=
\operatorname{Var}_{p_0}(g_1)
+
\nu\operatorname{Var}_q(g_0).
\end{equation}

We next calculate $J_\nu$.  Its raw second-moment part is
$P_0[g_1g_1^\top]+\nu Q[g_0g_0^\top]
=
P_0[(1-\rho_{0, \nu})^2ss^\top]
+\nu Q[\rho_{0, \nu}^2ss^\top]$.
Multiplying \eqref{eq:S-density-id} by $\rho_{0, \nu}ss^\top$ and integrating
gives
$\nu Q[\rho_{0, \nu}^2ss^\top]
=
P_0[\rho_{0, \nu}(1-\rho_{0, \nu})ss^\top]$.
Therefore,
$P_0[g_1g_1^\top]+\nu Q[g_0g_0^\top]
=
P_0[(1-\rho_{0, \nu})ss^\top]
=
I_\nu$.
Moreover,
$(P_0g_1)(P_0g_1)^\top=\mu_\nu\mu_\nu^\top$, while
$\nu(Qg_0)(Qg_0)^\top
=\nu(\mu_\nu/\nu)(\mu_\nu/\nu)^\top
=\nu^{-1}\mu_\nu\mu_\nu^\top$.
It follows that
\begin{equation}
\label{eq:Jnu-form}
J_\nu
=
I_\nu
-
\frac{1+\nu}{\nu}\mu_\nu\mu_\nu^\top.
\end{equation}
The variance representation in \eqref{eq:gradient-clt} shows that
$J_\nu\succeq0$.  On the other hand,
$I_\nu-J_\nu
=\{(1+\nu)/\nu\}\mu_\nu\mu_\nu^\top\succeq0$ by
\eqref{eq:Jnu-form}.  Hence $0\preceq J_\nu\preceq I_\nu$.

It remains to transfer the gradient limit to $\hat\gamma_n$.  Fix a
sufficiently small closed ball
$\overline B=\overline B(\gamma_0,\delta)\subset\Gamma$ for which
Lemma~\ref{lem:deriv-ident} applies, and set
$\Delta_n=\hat\gamma_n-\gamma_0$.  By consistency,
$\hat\gamma_n\in\overline B$ eventually, almost surely.  Since
$\overline B$ is convex, the line segment
$\{\gamma_0+t\Delta_n:0\le t\le1\}$ is then contained in
$\overline B$.
Applying the fundamental theorem of calculus to the map
$t\mapsto\nabla_\gamma f_{n,\nu}(\gamma_0+t\Delta_n)$ along the line
segment joining $\gamma_0$ and $\hat\gamma_n$, and using the
first-order condition
$\nabla_\gamma f_{n,\nu}(\hat\gamma_n)=0$, gives
\begin{equation}
\label{eq:taylor-score}
0
=
\nabla_\gamma f_{n,\nu}(\gamma_0)
+
\overline H_n\Delta_n,
\qquad
\overline H_n
=
\int_0^1
\nabla_\gamma^2f_{n,\nu}
(\gamma_0+t\Delta_n)\,\mathrm dt.
\end{equation}
We now show that $\overline H_n\overset{\text{a.s.}}\to I_\nu$.
Indeed,
\begin{equation}
\label{eq:Hbar-bound}
\begin{aligned}
\|\overline H_n-I_\nu\|
&\le
\sup_{\gamma\in\overline B}
\|
\nabla_\gamma^2f_{n,\nu}(\gamma)
-
\nabla_\gamma^2f_\nu(\gamma)
\| \\
&\quad+
\sup_{0\le t\le1}
\|
\nabla_\gamma^2f_\nu(\gamma_0+t\Delta_n)
-
\nabla_\gamma^2f_\nu(\gamma_0)
\|.
\end{aligned}
\end{equation}
The first term on the right-hand side of \eqref{eq:Hbar-bound}
converges almost surely to zero by \eqref{eq:hessian-ulln}.  For the
second term, $\Delta_n\overset{\text{a.s.}}\to0$, and
$\nabla_\gamma^2f_\nu$ is continuous at $\gamma_0$ by
Lemma~\ref{lem:deriv-ident}.  Hence that term also converges almost
surely to zero.  Since
$\nabla_\gamma^2f_\nu(\gamma_0)=I_\nu$, we obtain
$\overline H_n\overset{\text{a.s.}}\to I_\nu$.
By Assumption~\textnormal{(A3)}, $I_\nu$ is nonsingular.  Therefore,
$\overline H_n$ is nonsingular for all sufficiently large $n$, almost
surely, and
$\overline H_n^{-1}\overset{\text{a.s.}}\to I_\nu^{-1}$.
Rearranging \eqref{eq:taylor-score} gives
\begin{equation}
\label{eq:M-estimator-linearisation}
\sqrt n(\hat\gamma_n-\gamma_0)
=
-\overline H_n^{-1}
\sqrt n\,\nabla_\gamma f_{n,\nu}(\gamma_0).
\end{equation}
Combining \eqref{eq:gradient-clt} and
\eqref{eq:M-estimator-linearisation}, Slutsky's theorem yields
$
\sqrt n(\hat\gamma_n-\gamma_0)
\overset{d}\to
N\!\left(
0,
I_\nu^{-1}J_\nu I_\nu^{-1}
\right).
$
\end{proof}

\subsection{Proof of Theorem~\ref{thm:conc}}

\begin{proof}
Fix $\epsilon>0$ and let
$
A_\epsilon
=
\{\gamma\in\Gamma:
\|\gamma-\gamma_0\|\ge\epsilon\}.
$
If $A_\epsilon=\varnothing$, then
$\Pi_n(A_\epsilon)=0$ and there is nothing to prove.
Hence, suppose that $A_\epsilon\neq\varnothing$ and define
$
c_\epsilon
=
\inf_{\gamma\in A_\epsilon}
\{f_\nu(\gamma)-f_\nu(\gamma_0)\}
$
as in Lemma \ref{lem:Mclt}.
By Assumption~\textnormal{(A3)(ii)}, $c_\epsilon>0$.
By continuity of $f_\nu$ at $\gamma_0$, there exists an open
neighborhood $U_\epsilon\subset\Gamma$ of $\gamma_0$ such that
$
f_\nu(\gamma)
\le
f_\nu(\gamma_0)+{c_\epsilon}/{4},
~
\gamma\in U_\epsilon.
$
Since $\pi$ is continuous at $\gamma_0$ and
$\pi(\gamma_0)>0$, Assumption~\textnormal{(A4)} implies
$\Pi(U_\epsilon)>0$.

On the almost-sure event on which $S_n=\sup_{\gamma\in\Gamma}
|f_{n,\nu}(\gamma)-f_\nu(\gamma)|\to0$, uniform convergence gives
$
f_{n,\nu}(\gamma)
\ge
f_\nu(\gamma_0)+c_\epsilon-S_n,
\gamma\in A_\epsilon,
$
and
$
f_{n,\nu}(\gamma)
\le
f_\nu(\gamma_0)+{c_\epsilon}/{4}+S_n,~
\gamma\in U_\epsilon.
$
Consequently,
\[
\begin{split}
\Pi_n(A_\epsilon)
&=
\frac{
\int_{A_\epsilon}
\exp\{-nf_{n,\nu}(\gamma)\}\Pi(d\gamma)
}{
\int_{\Gamma}
\exp\{-nf_{n,\nu}(\gamma)\}\Pi(d\gamma)
}\\
&\le
\frac{
\exp[-n\{f_\nu(\gamma_0)+c_\epsilon-S_n\}]
\Pi(A_\epsilon)
}{
\exp[-n\{f_\nu(\gamma_0)+c_\epsilon/4+S_n\}]
\Pi(U_\epsilon)
}\\
&\le
\Pi(U_\epsilon)^{-1}
\exp\left[
-n\left\{\frac{3c_\epsilon}{4}-2S_n\right\}
\right].
\end{split}
\]
Since $S_n\overset{\text{a.s.}}\to0$ (Lemma \ref{lem:basic-ulln}), the right-hand side converges to zero
almost surely. 

The preceding argument establishes the stronger joint concentration result
$\Pi_n(\{\gamma\in\Gamma:\|\gamma-\gamma_0\|\ge\epsilon\})
\overset{\text{a.s.}}\to0$.
Let $\Pi_n^\theta=\Pi_n\circ R_\theta^{-1}$ denote the
$\theta$-marginal posterior, where
$R_\theta(\gamma)=\theta$.  Then
$\Pi_n^\theta(\{\theta\in\Theta:\|\theta-\theta_0\|\ge\epsilon\})
=
\Pi_n(\{\gamma\in\Gamma:\|\theta-\theta_0\|\ge\epsilon\})$.
Since $\gamma=(\theta^\top,\beta)^\top$ and the Euclidean norm satisfies
$\|\theta-\theta_0\|\le\|\gamma-\gamma_0\|$, we have
$\{\gamma\in\Gamma:\|\theta-\theta_0\|\ge\epsilon\}
\subset
\{\gamma\in\Gamma:\|\gamma-\gamma_0\|\ge\epsilon\}$.
Consequently,
$\Pi_n^\theta(\{\theta\in\Theta:\|\theta-\theta_0\|\ge\epsilon\})
\le
\Pi_n(\{\gamma\in\Gamma:\|\gamma-\gamma_0\|\ge\epsilon\})
\overset{\text{a.s.}}\to0$.
Equivalently,
$\Pi_n^\theta(\{\theta\in\Theta:\|\theta-\theta_0\|<\epsilon\})
\overset{\text{a.s.}}\to1$,
which proves the stated concentration result for the
$\theta$-marginal posterior.
\end{proof}

\subsection{Proof of Theorem~\ref{thm:bvm}}

\begin{proof}
Our aim is to verify the conditions of
\citet[Theorem~4]{miller2021general}.  Let $\Omega_0$ be an event of
probability one on which the almost-sure conclusions of
Lemmas~\ref{lem:basic-ulln}, \ref{lem:deriv-ident}, and
\ref{lem:Mclt} hold.  Thus, on $\Omega_0$,
$S_n:=\sup_{\gamma\in\Gamma}
|f_{n,\nu}(\gamma)-f_\nu(\gamma)|\to0$,
$\hat\gamma_n\to\gamma_0$, the Hessians converge locally uniformly,
the third derivatives are locally eventually bounded, and
$\nabla_\gamma f_{n,\nu}(\hat\gamma_n)=0$ for all sufficiently large
$n$.  Fix a sample point in $\Omega_0$.  Extend the
prior density to $\mathbb R^{p+1}$ by
$\overline\pi(\gamma)=\pi(\gamma)\mathbf 1_\Gamma(\gamma)$, and extend
the empirical loss by setting
$\overline f_{n,\nu}(\gamma)=f_{n,\nu}(\gamma)$ for
$\gamma\in\Gamma$ and
$\overline f_{n,\nu}(\gamma)=f_{n,\nu}(\hat\gamma_n)+1$ for
$\gamma\notin\Gamma$.  Since
$\gamma_0\in\operatorname{int}(\Gamma)$,
$\overline\pi$ is continuous at $\gamma_0$ and
$\overline\pi(\gamma_0)>0$, while the extensions do not alter the
posterior distribution or its normalizing constant.

We first verify the local quadratic approximation required by
\citet[Theorem~4]{miller2021general}.  Define
$H_n=\nabla_\gamma^2 f_{n,\nu}(\hat\gamma_n).$
Choose $\delta>0$ sufficiently small that
$\overline B(\gamma_0,2\delta)\subset\Gamma$ and the local uniform
Hessian convergence in Lemma~\ref{lem:deriv-ident} holds on this
closed ball. Since $\hat\gamma_n\to\gamma_0$, we have
$
\|\hat\gamma_n-\gamma_0\|<\delta
$
for all sufficiently large $n$. Consequently,
$
\hat\gamma_n\in B(\gamma_0,\delta)
\subset\overline B(\gamma_0,2\delta),
\overline B(\hat\gamma_n,\delta)
\subset B(\gamma_0,2\delta)
\subset\Gamma
$
for all sufficiently large $n$.  
Using $I_\nu=\nabla_\gamma^2f_\nu(\gamma_0)$ and the
triangle inequality, $\|H_n-I_\nu\|$ is bounded above by
$\sup_{\gamma\in\overline B(\gamma_0,2\delta)}
\|\nabla_\gamma^2f_{n,\nu}(\gamma)
-\nabla_\gamma^2f_\nu(\gamma)\|
+\|\nabla_\gamma^2f_\nu(\hat\gamma_n)
-\nabla_\gamma^2f_\nu(\gamma_0)\|$.  The first term converges to zero
almost surely by Lemma~\ref{lem:deriv-ident}, while the second
converges to zero by the continuity of
$\nabla_\gamma^2f_\nu$ and the consistency of $\hat\gamma_n$.
Consequently, $H_n\overset{\text{a.s.}}\to I_\nu$.  
Each $H_n$ is symmetric because $f_{n,\nu}$ is twice continuously
differentiable in a neighborhood of $\hat\gamma_n$.  Moreover, for
every $v\in\mathbb R^{p+1}$,
$v^\top I_\nu v
=P_0[\{1-\rho_{0, \nu}(X)\}\{v^\top s(X)\}^2]\ge0$.
Thus $I_\nu$ is symmetric positive semidefinite.  Since
$I_\nu$ is nonsingular by Assumption~\textnormal{(A3)(iii)}, it has no
zero eigenvalues and is therefore positive definite.

For every $h\in\mathbb R^{p+1}$, define
$r_n(h)=\overline f_{n,\nu}(\hat\gamma_n+h)
-\overline f_{n,\nu}(\hat\gamma_n)-(1/2)h^\top H_nh$.
Then the required quadratic representation holds identically on
$\mathbb R^{p+1}$.  If $\|h\|\le\delta$, the line segment joining
$\hat\gamma_n$ and $\hat\gamma_n+h$ is contained in
$\overline B(\hat\gamma_n,\delta)\subset
\overline B(\gamma_0,2\delta)\subset\Gamma$ for all sufficiently large
$n$.  Since
$\nabla_\gamma f_{n,\nu}(\hat\gamma_n)=0$, Taylor's theorem to second
order with integral remainder shows that $r_n(h)$ is the corresponding
third-order remainder.  By the local uniform third-derivative bound in
Lemma~\ref{lem:deriv-ident}, there exists a finite
sample-path-dependent constant $C_0$ such that
$|r_n(h)|\le C_0\|h\|^3$ for all $\|h\|\le\delta$ and all sufficiently
large $n$. 

It remains to verify separation.  Fix $\epsilon>0$ and let
$A_{\epsilon/2}
=\{\gamma\in\Gamma:\|\gamma-\gamma_0\|\ge\epsilon/2\}$.  Suppose first
that $A_{\epsilon/2}$ is nonempty.  By
Assumption (A3) (ii),
$c_{\epsilon/2}:=\inf_{\gamma\in A_{\epsilon/2}}
\{f_\nu(\gamma)-f_\nu(\gamma_0)\}>0$.  Since
$\|\hat\gamma_n-\gamma_0\|<\epsilon/2$ eventually,
$\gamma\in\Gamma$ and
$\|\gamma-\hat\gamma_n\|\ge\epsilon$ imply
$\gamma\in A_{\epsilon/2}$.  
Moreover, since $\hat\gamma_n$ minimises $f_{n,\nu}$,
$f_{n,\nu}(\hat\gamma_n)\le f_{n,\nu}(\gamma_0)$.  Therefore, for
every $\gamma\in\Gamma$ satisfying
$\|\gamma-\hat\gamma_n\|\ge\epsilon$,
$f_{n,\nu}(\gamma)-f_{n,\nu}(\hat\gamma_n)
\ge f_\nu(\gamma)-f_\nu(\gamma_0)-2S_n
\ge c_{\epsilon/2}-2S_n$.
Outside $\Gamma$, the corresponding loss
difference equals $1$.  Therefore,
$\liminf_{n\to\infty}
\inf_{\|\gamma-\hat\gamma_n\|\ge\epsilon}
\{\overline f_{n,\nu}(\gamma)
-\overline f_{n,\nu}(\hat\gamma_n)\}
\ge\min\{c_{\epsilon/2},1\}>0$.  If $A_{\epsilon/2}$ is empty, then
every $\gamma\in\Gamma$ satisfies
$\|\gamma-\gamma_0\|<\epsilon/2$, and
$\|\hat\gamma_n-\gamma_0\|<\epsilon/2$ eventually; thus
$\|\gamma-\hat\gamma_n\|<\epsilon$ throughout $\Gamma$, so the same
infimum is taken only outside $\Gamma$ and equals $1$.

Since $f_{n,\nu}$ is finite and nonnegative on $\Gamma$ and
the prior is a probability measure, the posterior normalizing
constant is finite and strictly positive.  All conditions of
\citet[Theorem~4]{miller2021general} are therefore satisfied, and the
density of $\sqrt n(\gamma-\hat\gamma_n)$ under the NC-Bayes posterior
converges in total variation to the density of
$N(0,I_\nu^{-1})$.

We have thus established the Bernstein--von Mises limit for the joint parameter $\gamma$.  It remains to transfer this result to the parameter of interest $\theta=R_\theta\gamma$ and to show that the limiting covariance of the resulting marginal posterior agrees with the asymptotic covariance of the minimum NC-loss estimator $\hat\theta_n$.
Since total variation distance cannot increase under a measurable
mapping, applying the projection $u\mapsto R_\theta u$ to the joint
Bernstein--von Mises limit yields
\[
\left\|
\Pi\left(
\sqrt n(\theta-\hat\theta_n)\in\cdot
\mid X_n,X_m^\ast
\right)
-
N\left(0,R_\theta I_\nu^{-1}R_\theta^\top\right)(\cdot)
\right\|_{\mathrm{TV}}
\to0.
\]
Thus, the first assertion follows with
$V_{\nu,\theta}=R_\theta I_\nu^{-1}R_\theta^\top$.
For the second assertion, let
$e_\beta=(0,\ldots,0,1)^\top\in\mathbb R^{p+1}$.  Since
$\partial a(x\mid\gamma)/\partial\beta=1$, we have
$s(x)^\top e_\beta=1$ for every $x\in\mathcal X_0$.  Therefore,
$
I_\nu e_\beta
=
P_0
\left[
\{1-\rho_{0, \nu}\}ss^\top e_\beta
\right]
=
P_0
\left[
\{1-\rho_{0, \nu}\}s
\right]
=
\mu_\nu.
$
Because $I_\nu$ is nonsingular,
$I_\nu^{-1}\mu_\nu=e_\beta$.  Recalling that
$
J_\nu
=
I_\nu
-
(1+\nu)\mu_\nu\mu_\nu^\top/\nu,
$
by \eqref{eq:Jnu-form},
we obtain
$
I_\nu^{-1}J_\nu I_\nu^{-1}
=
I_\nu^{-1}
-
({1+\nu})e_\beta e_\beta^\top/\nu.
$   
Since $R_\theta e_\beta=0$, it follows that
$
R_\theta I_\nu^{-1}J_\nu I_\nu^{-1}R_\theta^\top
=
R_\theta I_\nu^{-1}R_\theta^\top
=
V_{\nu,\theta}.
$
Applying the continuous mapping theorem to Lemma~\ref{lem:Mclt} therefore gives
$
\sqrt n(\hat\theta_n-\theta_0)
=
R_\theta\sqrt n(\hat\gamma_n-\gamma_0)
\overset{d}\to N(0,V_{\nu,\theta}).
$
\end{proof}

\subsection{Proof of Corollary~\ref{cor:large-noise-theta-calibration}}

\begin{proof}
For every sufficiently large fixed
$\nu\in\mathbb Q_{>0}$, Assumptions~\textnormal{(A1)}--\textnormal{(A4)}
with $r_{\max}=3$ allow us to apply
Theorem~\ref{thm:bvm}.  Hence, the covariance matrix in the Gaussian
Bernstein--von Mises limit for the $\theta$-marginal NC-Bayes posterior is
$
V_{\nu,\theta}
=
R_\theta I_\nu^{-1}R_\theta^\top.
$
It therefore remains to determine the limit of this matrix as
$\nu\to\infty$.
Write
$\rho_{0, \nu}(x)=\rho_\nu(x\mid\gamma_0)$.  By correct specification,
$$
\rho_{0, \nu}(x)
=
\frac{p_0(x)}{p_0(x)+\nu q(x)}
$$
on $\mathcal X_0$.  Since $q(x)>0$ for $P_0$-almost every $x$, we have
$1-\rho_{0, \nu}(x)\to1$ for $P_0$-almost every $x$ as
$\nu\to\infty$.
For every pair of indices $j,k$,
$
|\{1-\rho_{0, \nu}(x)\}s_j(x)s_k(x)|
\le
\|s(x)\|^2
\le
M_1(x)^2.
$
Assumption~\textnormal{(A2)} implies
$\mathbb E_{p_0}[M_1(X)^2]<\infty$, since $M_1^2\le\Lambda_1$.  The dominated convergence
theorem applied componentwise therefore yields
$
I_\nu=\mathbb E_{p_0}
[\{1-\rho_{0, \nu}(X)\}s(X)s(X)^\top]\to I_\infty,
$
where
$
I_\infty
=
\mathbb E_{p_0}
\left[
s(X)s(X)^\top
\right].
$
Let
$
t(X)=\nabla_\theta\log\tilde p(X\mid\theta_0).
$
Since
$
s(X)=(t(X)^\top,1)^\top,
$
we have
$$
I_\infty
=
\begin{pmatrix}
A & b\\
b^\top & 1
\end{pmatrix},
$$
where
$
A=\mathbb E_{p_0}[t(X)t(X)^\top]
$
and
$
b=\mathbb E_{p_0}[t(X)].
$
Moreover,
$
\nabla_\theta\log p(X\mid\theta_0)
=
t(X)-\nabla_\theta\log Z(\theta_0).
$
Since $\nabla_\theta\log Z(\theta_0)$ is nonrandom, covariance is
invariant under this translation, and hence
$$
\mathcal I(\theta_0)
=
\operatorname{Var}_{p_0}
\left[
\nabla_\theta\log p(X\mid\theta_0)
\right]
=
\operatorname{Var}_{p_0}[t(X)]
=
A-bb^\top.
$$
Thus, $\mathcal I(\theta_0)$ is the Schur complement of the lower-right
block of $I_\infty$.
Since $\mathcal I(\theta_0)$ is positive definite and the lower-right
block of $I_\infty$ equals $1$, the Schur complement criterion implies
that $I_\infty$ is positive definite.  Since
$I_\nu\to I_\infty$, it follows that $I_\nu$ is nonsingular for all
sufficiently large $\nu$, and continuity of matrix inversion on the set
of nonsingular matrices gives
$
I_\nu^{-1}\to I_\infty^{-1}.
$
Finally, the block inverse formula gives
$
R_\theta I_\infty^{-1}R_\theta^\top
=
(A-bb^\top)^{-1}
=
\mathcal I(\theta_0)^{-1}.
$
Consequently,
$$
V_{\nu,\theta}
=
R_\theta I_\nu^{-1}R_\theta^\top
\longrightarrow
R_\theta I_\infty^{-1}R_\theta^\top
=
\mathcal I(\theta_0)^{-1}
$$
$\text{as }\nu\to\infty$ through $\nu\in Q_{>0}$.
\end{proof}

\section{Details of posterior computation algorithms}
\label{sec:s2}
\subsection{Time-varying Density estimation}\label{sec:comp-nc-bayes-density}

Using data augmentation, the posterior samples can be generated via a simple Gibbs sampler, as described in the following step-by-step algorithm. 

\begin{enumerate}
\item
(Sampling from PG latent variable)\  
For $t=1,\ldots,T$ and $i=1,\ldots,n_t+m_t$, generate $\omega_{ti}$ from ${\rm PG}(1, \Phi_{ti}^\top\theta_t +\beta_t + C_{ti})$. 

\item
(Sampling from $\theta_t$) \ For $t=1,\ldots,T$, generate $\theta_t$ from $N(\widetilde{B}_{\theta_t}\widetilde{A}_{\theta_t}, \widetilde{B}_{\theta_t})$, where 
\begin{align*}
&\widetilde{B}_{\theta_t}=\left( b_{0t} I_L+ \sum_{i=1}^{n_t+m_t}\omega_{ti} \Phi_{ti}\Phi_{ti}^\top\right)^{-1}, \\ 
&\widetilde{A}_{\theta_t}=a_{0t}+\sum_{i=1}^{n_t+m_t}\Big\{s_{ti}-\frac{1}{2}-\omega_{ti}(\beta_t+C_{ti})\Big\} \Phi_{ti}
\end{align*}
with $b_{0t}=2\lambda^{-1}$ and $a_{0t}=\lambda^{-1}(\theta_{t-1}+\theta_{t+1})$ for $t=1,\ldots,T-1$ and $b_{0T}=\lambda^{-1}$ and $a_{0T}=\lambda^{-1}\theta_{T-1}$.

\item
(Sampling from $\beta_t$) \ For $t=1,\ldots,T$, generate $\beta_t$ from $N(\widetilde{B}_{\beta_t}\widetilde{A}_{\beta_t}, \widetilde{B}_{\beta_t})$, where 
$$
\widetilde{B}_{\beta_t}=\left( b_0^{-1}+ \sum_{i=1}^{n_t+m_t}\omega_{ti} \right)^{-1},  \ \ \ \ 
\widetilde{A}_{\beta_t}=\sum_{i=1}^{n_t+m_t}\Big\{s_{ti}-\frac{1}{2}-\omega_{ti}(\Phi_{ti}^\top \theta_t+C_{ti})\Big\}.
$$

\item
(Sampling from $\lambda$) \ Generate $\lambda$ from ${\rm IG}(n_1, \nu_1)$, where $n_1=n_0+TL/2$ and $\nu_1=\nu_0+\sum_{t=1}^T\sum_{l=1}^L (\theta_{tl}-\theta_{t-1,l})^2/2$.
\end{enumerate}

\subsection{Sparse Torus Graph}\label{sec:comp-nc-bayes-torus}
The posterior distribution from (\ref{eq:pos-noise}) is sampled using a Gibbs sampler that iterates through four main steps. The complete Gibbs sampler is described as follows:

\begin{enumerate}
\item (Sampling of $\omega_i$)\ For $i=1,\ldots,n+m$, sample the P\'olya--Gamma latent variables from:
$$
\omega_i \sim {\rm PG}(1,z(x_i)^\top\gamma+C(x_i)).
$$

\item (Sampling of $\gamma$)\ Conditional on the latent variables $\omega_i$ and the prior hyperparameters, sample $\gamma$ from its Gaussian full conditional $N(\tilde{b}_{\gamma}, \tilde{B}_{\gamma})$, where
\begin{align*}
\tilde{B}_{\gamma} &= \left(B_{\gamma 0}^{-1}+\sum_{i=1}^{n+m}\omega_i z(x_i)z(x_i)^\top\right)^{-1}, \\
\tilde{b}_{\gamma} &= \tilde{B}_{\gamma} \left\{ \sum_{i=1}^{n+m}\Big(s_i-\frac{1}{2}-\omega_iC(x_i)\Big) z(x_i) + B_{\gamma 0}^{-1}b_{\gamma 0} \right\}.
\end{align*}
Here, the prior covariance matrix $B_{\gamma 0}$ is determined by the shrinkage prior structure detailed in the next step, and we set the prior mean to zero, $b_{\gamma 0}=0$, so that $B_{\gamma 0}^{-1} b_{\gamma 0}$ vanishes in the mean update. Also, under a regularized horseshoe prior, we add $c^{-2}$ to the diagonal of the prior precision matrix $B_{\gamma 0}^{-1}$, where $c$ is the finite slab width; in our experiments, we set $c=1$.

\item (Sampling of Hyperparameters)\ Depending on the chosen prior, update the shrinkage hyperparameters.

\begin{itemize}
\item[ (a)] For the standard horseshoe prior:
Let $\phi_k$ be an element of $\phi$ and let $\lambda_k$ be its local shrinkage parameter for $k = 1,\dots, 2d^2$.
\begin{itemize}
    \item[-] Sample the local shrinkage components:
    \begin{align*}
    \lambda_k^2 \mid \phi_k, \tau^2, \nu_k &\sim \text{Inv-Gamma}\left(1, \frac{\phi_k^2}{2\tau^2} + \frac{1}{\nu_k}\right), \\
    \nu_k \mid \lambda_k^2 &\sim \text{Inv-Gamma}\left(1, 1 + \frac{1}{\lambda_k^2}\right).
    \end{align*}
    \item[-] Sample the global shrinkage components:
    \begin{align*}
    \tau^2 \mid \{\phi_k, \lambda_k^2\}_k, \xi &\sim \text{Inv-Gamma}\left(\frac{2d^2+1}{2}, \frac{1}{2}\sum_{k=1}^{2d^2} \frac{\phi_k^2}{\lambda_k^2} + \frac{1}{\xi}\right), \\
    \xi \mid \tau^2 &\sim \text{Inv-Gamma}\left(1, 1 + \frac{1}{\tau^2}\right).
    \end{align*}
\end{itemize}

\item[ (b)] For the grouped horseshoe prior on interaction parameters:
The main effects $\phi_{j(l)}$ are updated as in (a). For interaction parameters $\phi_{jk(l')}$:
\begin{itemize}
\item[-] Sample the group-level shrinkage components for each edge $(j,k)$:
\begin{align*}
u_{jk}^2 \mid - &\sim \text{Inv-Gamma}\left(\frac{5}{2}, \frac{1}{2}\sum_{l'=1}^4\frac{\phi_{jk(l')}^2}{\tau^2} + \frac{1}{t_{jk}}\right), \\
t_{jk} \mid u_{jk}^2 &\sim \text{Inv-Gamma}\left(1, 1 + \frac{1}{u_{jk}^2}\right).
\end{align*}
 \item[-] Sample the global shrinkage components:
\begin{align*}
\tau^2 \mid - &\sim \text{Inv-Gamma}\left(\frac{2d^2+1}{2}, \frac{1}{2}\left( \sum_{j,l} \frac{\phi_{j(l)}^2}{\lambda_{j(l)}^2} + \sum_{j<k,l'} \frac{\phi_{jk(l')}^2}{u_{jk}^2} \right) + \frac{1}{\xi}\right), \\
\xi \mid \tau^2 &\sim \text{Inv-Gamma}\left(1, 1 + \frac{1}{\tau^2}\right).
\end{align*}
\end{itemize}
\end{itemize}
\end{enumerate}

\subsection{Alternative: Generalized Bayesian Inference with Hyv\"arinen Score}\label{sec:comp-h-bayes}
As an alternative to NC-Bayes, we also implement a generalized Bayesian approach \citep{bissiri2016general}. This method utilizes the Hyv\"arinen score, whose specific form for the torus graph model is derived in \cite{klein2020torus}. For exponential families, the Hyv\"arinen-score loss can be expressed as
a quadratic function of the natural parameter $\phi$, yielding Gaussian
conjugacy under a Gaussian prior
\citep[see, e.g., Proposition~3.1 of][]{altamirano2023score}. Let the empirical loss function based on the observed data $X_n=\{x_1, \dots, x_n\}$ be
$$
L(\phi) = \frac{1}{2}\phi^\top \widehat{\Gamma} \phi - \phi^\top \widehat{H},
$$
where the matrix $\widehat{\Gamma}$ and vector $\widehat{H}$ are computed from the data as detailed in \cite{klein2020torus}.
The generalized posterior distribution for $\phi$ is then defined as:
$$
\pi_w(\phi\mid X_n) \propto \pi(\phi)\exp\{-nwL(\phi)\},
$$
where $\pi(\phi)$ is the prior distribution and $w > 0$ is a loss scaling parameter that controls the influence of the data on the posterior. Conditional on the horseshoe hyperparameters, the prior for $\phi$ is Gaussian. Combined with the quadratic form of $L(\phi)$, this yields a conditionally conjugate Gaussian full conditional distribution for $\phi$.

The MCMC algorithm for this method involves the following steps:
\begin{itemize}
\item[\bf 1.] \textbf{(Pre-computation)} Given the data $X_n$, first compute the matrices $\widehat{\Gamma}$ and $\widehat{H}$ based on the derivatives of the sufficient statistics, following the formulation in \cite{klein2020torus}. This is done once before the MCMC sampling.

\item[\bf 2.] \textbf{(MCMC Iteration)} At each iteration, sample the parameters and hyperparameters.
\begin{itemize}
\item[\bf (a)] \textbf{(Sampling of $\phi$)} Let the prior for $\phi$ be $N(0, B_{\phi 0})$, where $B_{\phi 0}$ is determined by the horseshoe shrinkage structure. Sample $\phi$ from its Gaussian full conditional posterior $N(\tilde{b}_{\phi}, \tilde{B}_{\phi})$, where
$
\tilde{B}_{\phi}^{-1} = B_{\phi 0}^{-1} + nw\widehat{\Gamma}, ~
\tilde{b}_{\phi} = \tilde{B}_{\phi} (nw\widehat{H}).
$

\item[\bf (b)] \textbf{(Sampling of Hyperparameters)} Sample the hyperparameters for the standard or grouped horseshoe prior. The full conditional distributions and update steps for these parameters are identical to those described in Step 3(a) and 3(b) of the NC-Bayes-based Gibbs sampler.
\end{itemize}
\end{itemize}

\section{Additional Simulation Results on Time-Varying Density Estimation }
\label{sec:sim-density-add}

Here, we provide the true time-varying densities and their point estimates by NC-Bayes and time-wise kernel density estimation.
The results are given in Figure~\ref{fig:sim-oneshot-supp}.
\begin{figure}[b!]
\centering
\includegraphics[width=\linewidth]{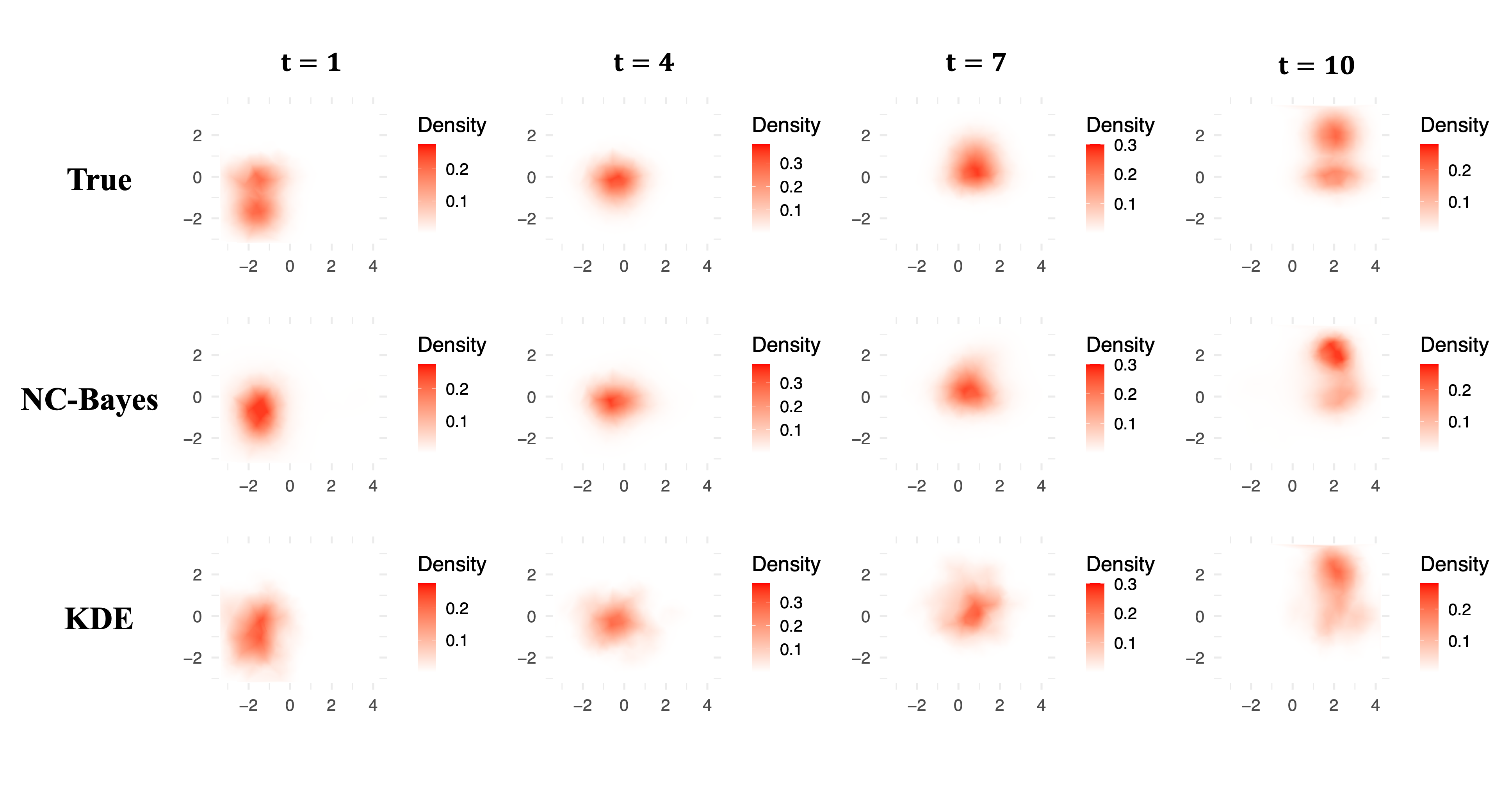}
\caption{ True time-varying density (upper), posterior mean of the time-varying density model fitted by NC-Bayes (middle) and time-wise KDE (lower), for four selected time points under Scenario 1. } 
\label{fig:sim-oneshot-supp}
\end{figure}

\section{Additional Simulation Results on Sparse Torus Graph}
\label{sec:simulation-torus-graph}

In addition to the simulation study presented in the main text, we conducted two further experiments with settings based on \citep{sukeda2025Directional} to assess the performance of NC-Bayes under different graph structures and dimensionalities.

\subsection{Scenario 1: 5-Node Cycle Graph}
This scenario was designed to evaluate the methods' performance on a small, well-defined sparse graph structure.

\begin{itemize}
    \item \textbf{Graph Structure}: We used a fixed graph with $d=5$ nodes. The true graph is a 5-cycle (a pentagonal structure). The specific edges defined in the simulation code are (1,3), (1,4), (2,4), (2,5), and (3,5).
    
    \item \textbf{True Parameter Values}: The main effect parameters were set to zero, i.e., $\phi_j = (0, 0)^\top$ for all $j=1,\ldots,5$. For the 5 pairs of nodes $(j,k)$ connected by an edge, the interaction parameters were set to $\phi_{jk}=(0.3, 0.3, 0.3, 0.3)^\top$. All other interaction parameters for non-connected pairs were set to zero.
    
    \item \textbf{Data Generation Algorithm}: A total of $n=1,000$ samples were generated from the true torus graph model using rejection sampling. Specifically, we first computed a uniform upper bound $U_{\max}$ for the unnormalized log-density over the entire domain. We then repeatedly drew candidate samples $x$ from a uniform distribution on $[0, 2\pi)^5$ and accepted each candidate with probability $\exp(U(x) - U_{\max})$, where $U(x)$ is the unnormalized log-density at $x$. This process was continued until $n=1,000$ samples were accepted.
\end{itemize}

The results for this scenario are presented in Tables~\ref{tab:results_star_Median} and~\ref{tab:results_star_CI}.
Under the median-based rule, NC-Bayes achieves perfect edge recovery when $\nu=1$, with recall, precision, and accuracy all equal to one, both with and without adaptive noise updating.
Under the CI-based rule, NC-Bayes also performs nearly perfectly for $\nu=1$, achieving perfect edge recovery together with high coverage probability regardless of whether the noise distribution is updated.
When the number of noise samples is increased to $\nu=5$, this strong performance is maintained under both detection rules; under the CI-based rule, precision and accuracy remain close to one, although the coverage probability decreases slightly.
In contrast, the performance of $\mathcal{H}$-Bayes remains sensitive to the loss-scaling parameter $w$: while $w=0.2$ yields essentially perfect performance, larger values of $w$ lead to increasingly more false-positive detections, resulting in lower precision and accuracy.
This deterioration is particularly evident under the CI-based rule, where the coverage probability also decreases substantially as $w$ increases.
Overall, these results show that NC-Bayes provides stable edge-selection and uncertainty-quantification performance across different numbers of noise samples and noise-updating schemes, whereas $\mathcal{H}$-Bayes remains strongly affected by the choice of $w$, consistent with the results reported in the main simulation study.

\begin{table}[b]
\centering
\caption{
Performance metrics for 5-dimensional torus graphs under the median-based edge detection rule,
averaged over 100 simulations.
An edge is detected when the absolute posterior median of any associated coefficient exceeds 0.100.
}
\vspace{2mm}
\begin{tabular}{llllrrr}
\hline
Method & $\nu$ & $w$ & Noise Update & Recall & Precision & Accuracy \\
\hline\hline
\multirow{4}{*}{NC-Bayes}
& \multirow{2}{*}{1} & -- & False & 1.000 & 1.000 & 1.000 \\
&                        & -- & True  & 1.000 & 1.000 & 1.000 \\
\cline{2-7}
& \multirow{2}{*}{5} & -- & False & 1.000 & 1.000 & 1.000 \\
&                        & -- & True  & 1.000 & 1.000 & 1.000 \\
\hline
\multirow{3}{*}{$\mathcal{H}$-Bayes}
& -- & 0.2 & -- & 1.000 & 1.000 & 1.000 \\
& -- & 0.5 & -- & 1.000 & 0.988 & 0.994 \\
& -- & 1.0 & -- & 1.000 & 0.926 & 0.960 \\
\hline
\end{tabular}
\label{tab:results_star_Median}
\end{table}

\begin{table}[t]
\centering
\caption{
Performance metrics for 5-dimensional torus graphs under the 90\% CI-based edge detection rule,
averaged over 100 simulations.
An edge is detected when its 90\% credible interval does not contain zero.
CP denotes the coverage probability for $\phi$.
}
\vspace{2mm}
\begin{tabular}{llllrrrr}
\hline
Method & $\nu$ & $w$ & Noise Update & CP (\%) & Recall & Precision & Accuracy \\
\hline\hline
\multirow{4}{*}{NC-Bayes}
& \multirow{2}{*}{1} & -- & False & 99.7 & 1.000 & 1.000 & 1.000 \\
&                        & -- & True  & 99.1 & 1.000 & 1.000 & 1.000 \\
\cline{2-8}
& \multirow{2}{*}{5} & -- & False & 98.0 & 1.000 & 0.998 & 0.999 \\
&                        & -- & True  & 97.0 & 1.000 & 0.998 & 0.999 \\
\hline
\multirow{3}{*}{$\mathcal{H}$-Bayes}
& -- & 0.2 & -- & 99.1 & 1.000 & 1.000 & 1.000 \\
& -- & 0.5 & -- & 93.3 & 1.000 & 0.935 & 0.965 \\
& -- & 1.0 & -- & 80.5 & 1.000 & 0.744 & 0.828 \\
\hline
\end{tabular}
\label{tab:results_star_CI}
\end{table}

\subsection{Scenario 2: 30-Dimensional Erd\"os-R\'enyi Graph}
This scenario was designed to assess performance in a higher-dimensional setting with a more complex, random structure.
\begin{itemize}
    \item \textbf{Graph Structure}: We used a fixed instance of an Erd\"os-R\'enyi random graph $G(d,p_e)$ with $d=30$ nodes and an edge probability of $p_e=0.1$. The graph was generated once using a fixed seed and this same structure was used for all 100 Monte Carlo simulations.
    
    \item \textbf{True Parameter Values}: All main effect parameters $\phi_j$ were set to zero. For pairs of nodes $(j,k)$ connected by an edge in the generated graph, the interaction parameters were set to $\phi_{jk}=(0.3, 0.3, 0, 0)^\top$, corresponding to only rotational dependence. All other interaction parameters were zero.
    
    \item \textbf{Data Generation Algorithm}: A total of $n=1,000$ samples were generated from the true model using a Gibbs sampler. Starting from a random initial state $x^{(0)} \in [0, 2\pi)^{30}$, we iteratively sampled each node's value $x_k$ from its full conditional distribution, which is a von-Mises distribution whose parameters depend on the current values of its neighbors in the graph. We ran the Gibbs sampler for 3,000 iterations, discarded the first 2,000 iterations as burn-in, and collected the subsequent 1,000 samples to form the dataset.
\end{itemize}

Tables~\ref{tab:results_ER_Median} and~\ref{tab:results_ER_CI} report the results under the median-based and CI-based rules, respectively.
For NC-Bayes, the performance depends on both $\nu$ and adaptive noise updating.
When $\nu=1$, edge recovery is highly accurate under the median-based rule, while under the CI-based rule adaptive noise updating improves recall from $0.776$ to $0.841$ while maintaining perfect precision and high coverage probability.
Increasing the number of noise samples to $\nu=5$ substantially improves the CI-based performance without noise updating, yielding nearly perfect recall, precision, and accuracy.
This behavior is consistent with Corollary~\ref{cor:large-noise-theta-calibration}, which shows that the asymptotic variance approaches that determined by the Fisher information as the amount of noise data increases.
However, for $\nu=5$, adaptive noise updating substantially deteriorates edge-selection performance, indicating that noise updating is not uniformly beneficial across different noise-sample regimes.
In contrast, $\mathcal{H}$-Bayes remains sensitive to the loss-scaling parameter $w$: small values of $w$ provide nearly perfect recovery, whereas increasing $w$ leads to more false-positive detections and, under the CI-based rule, markedly poorer coverage.
Overall, NC-Bayes performs reliably for $\nu=1$ and for larger noise samples without adaptive noise updating, while $\mathcal{H}$-Bayes requires careful tuning of $w$ to maintain sparse and well-calibrated inference.

\begin{table}[b!]
\centering
\caption{
Performance metrics for Erd\"os--R\'enyi random graphs ($d=30$) under the median-based edge detection rule,
averaged over 100 simulations.
An edge is detected when the absolute posterior median of any associated coefficient exceeds 0.100.
}
\vspace{2mm}
\begin{tabular}{llllrrr}
\hline
Method & $\nu$ & $w$ & Noise Update & Recall & Precision & Accuracy \\
\hline\hline
\multirow{4}{*}{NC-Bayes}
& \multirow{2}{*}{1} & -- & False & 0.973 & 1.000 & 0.997 \\
&                        & -- & True  & 0.987 & 0.991 & 0.998 \\
\cline{2-7}
& \multirow{2}{*}{5} & -- & False & 1.000 & 0.998 & 1.000 \\
&                        & -- & True  & 0.741 & 0.110 & 0.409 \\
\hline
\multirow{3}{*}{$\mathcal{H}$-Bayes}
& -- & 0.2 & -- & 0.998 & 1.000 & 1.000 \\
& -- & 0.5 & -- & 1.000 & 0.973 & 0.997 \\
& -- & 1.0 & -- & 1.000 & 0.669 & 0.953 \\
\hline
\end{tabular}
\label{tab:results_ER_Median}
\end{table}

\begin{table}[b!]
\centering
\caption{
Performance metrics for Erd\"os--R\'enyi random graphs ($d=30$) under the 90\% CI-based edge detection rule,
averaged over 100 simulations.
An edge is detected when its 90\% CI does not contain zero.
CP denotes the coverage probability for $\phi$.
}
\vspace{2mm}
\begin{tabular}{llllrrrr}
\hline
Method & $\nu$ & $w$ & Noise Update & CP (\%) & Recall & Precision & Accuracy \\
\hline\hline
\multirow{4}{*}{NC-Bayes}
& \multirow{2}{*}{1} & -- & False & 98.9 & 0.776 & 1.000 & 0.979 \\
&                        & -- & True  & 98.4 & 0.841 & 1.000 & 0.985 \\
\cline{2-8}
& \multirow{2}{*}{5} & -- & False & 96.7 & 0.997 & 1.000 & 1.000 \\
&                        & -- & True  & 99.6 & 0.110 & 0.998 & 0.916 \\
\hline
\multirow{3}{*}{$\mathcal{H}$-Bayes}
& -- & 0.2 & -- & 92.9 & 0.988 & 1.000 & 0.999 \\
& -- & 0.5 & -- & 84.6 & 1.000 & 0.947 & 0.995 \\
& -- & 1.0 & -- & 67.2 & 1.000 & 0.266 & 0.740 \\
\hline
\end{tabular}
\label{tab:results_ER_CI}
\end{table}

\end{document}